\crefname{assumption}{Assumption}{Assumptions}
\newcommand{\aseq}[2]{a_{#1}^{(#2)}}
\newcommand{\ark}{\aseq{r}{k}}
\newcommand{\mseq}[2]{M_{#1}^{(#2)}}
\newcommand{\mrk}{\mseq{r}{k}}
\newcommand\R{{\mathbb R}} 
\newcommand\N{{\mathbb N}} 
\newcommand\Z{{\mathbb Z}} 
\newcommand{\statespace}{\mathscr{X}}
\newcommand{\D}{\mathcal{D}} 
\renewcommand\P{{\mathbb P}}        
\newcommand\E{{\mathbb E}}        
\newcommand\I{{\mathbf 1}}        
\newcommand{\var}{\operatorname{Var}}	
\newcommand{\ud}{\,\mathrm{d}}    
\newcommand\bracearraycond[1]{\left\{ \begin{array}{ll} #1 \end{array} \right.}
\def\iidsim{\stackrel{\scriptscriptstyle \textrm{iid}}{\sim}}         
\newtheorem{theorem}{Theorem}
\newtheorem{lemma}[theorem]{Lemma}
\newtheorem{proposition}[theorem]{Proposition}
\newtheoremstyle{noitalic}
{1em plus .2em minus .1em}
{1em plus .2em minus .1em}
{}  
{0pt}       
{\bfseries} 
{.}         
{5pt plus 1pt minus 1pt} 
{}          
\theoremstyle{noitalic}
\newtheorem*{remark*}{Remark}
\newtheorem*{conjecture*}{Conjecture}
\title{Pseudo-marginal inference for CTMCs on infinite spaces via monotonic likelihood approximations}
\author{Miguel Biron-Lattes\thanks{
    A. Bouchard-C{\^o}t{\'e} is supported by a National Sciences and Engineering Research Council of Canada (NSERC) Discovery Grant and Michael Smith Foundation COVID-19 Research Response. T. Campbell is supported by an NSERC Discovery Grant and an NSERC Discovery Launch Supplement.}\hspace{.2cm}\\
    \and 
    Alexandre Bouchard-C{\^o}t{\'e} \\
    \and
    Trevor Campbell \\
    Department of Statistics, University of British Columbia}
\begin{document}

\maketitle
\begin{abstract}
Bayesian inference for Continuous-Time Markov Chains (CTMCs) on countably
infinite spaces is notoriously difficult because evaluating the likelihood
exactly is intractable. One way to address this challenge is to first build
a non-negative and unbiased estimate of the likelihood---involving
the matrix exponential of finite truncations of the true rate matrix---and then
to use the estimates in a pseudo-marginal inference method. In this work, we 
show that we can dramatically increase the efficiency of this
approach by avoiding the computation of exact matrix exponentials. In particular, we 
develop a general methodology for constructing an unbiased, non-negative estimate of the likelihood
using doubly-monotone matrix exponential approximations. We further develop a novel 
approximation in this family---the skeletoid---as well as theory regarding its approximation error
and how that relates to the variance of the estimates used in pseudo-marginal inference.
Experimental results show that our approach yields more efficient posterior
inference for a wide variety of CTMCs.
\end{abstract}

\section{Introduction}

Continuous-Time Markov Chains (CTMCs) \citep{anderson1991continuous} are a
class of stochastic processes with piecewise constant paths,
taking values in a countable set $\statespace$, and switching between states
at random real-valued times.
Notable examples are the Stochastic
Susceptible-Infectious-Recovered (SSIR) model in epidemiology
\citep{keeling2008methods} and the Stochastic Lotka-Volterra (SLV)
predator-prey model in ecology \citep{spencer2008lotka}, but they 
have also been applied in
a wide range of other fields such as 
population genetics \citep{beerenwinkel2009markov}, engineering
\citep{yin2003simulation,lipsky2008queueing}, biochemistry
\citep{schaeffer2015} and phylogenetics \citep{maddison2007}. 
As an example,
consider the SSIR model in its simplest form. At any time $t\geq0$, the system
is described by a triplet of integers $(S(t), I(t), R(t))$, denoting the number
$S(t)$ of healthy people susceptible of being infected, the number $I(t)$ of
people infected, and the number $R(t)$ of recovered people who have become immune.
The system switches states at random times whenever a susceptible 
individual becomes infected, when an infected person recovers, or when a new 
susceptible individual enters the population.

Our interest lies in using observations to perform Bayesian statistical
inference for the unknown parameters of the CTMC model. In particular, we aim
to infer the \textit{rate matrix} $Q$ that governs the dynamics of the process.
For any pair of states $i\neq j$, $Q_{i,j}$ is related to the probability of
the system jumping from $i$ to $j$ in an ``infinitesimally small'' interval of
time. Evaluating the likelihood for any particular $Q$ involves computing the
\textit{matrix exponential} of $Q$, denoted $e^{tQ}$; in particular, the matrix
exponential lets us move from ``infinitesimal time intervals'' to transition
functions for time intervals of any length $t\geq0$ without approximation
error. However, we can only compute $e^{tQ}$ exactly when the system takes at
most a finite number of different states. And even in this case, the 
computation is feasible only when the size of $\statespace$ 
is relatively small due to the
$O(|\statespace|^3)$ complexity of the matrix exponential
\citep{moler_nineteen_2003}.  In many applications the set $\statespace$ where
the CTMC takes values is infinite, which in turn implies that $Q$ is
infinite-dimensional too.

One way to address this issue is to use the \textit{pseudo-marginal} method for
exact Bayesian inference \citep{andrieu2009pseudo}, where one is allowed to
replace exact evaluation of the likelihood by a noisy estimate of the
likelihood that is
\textit{non-negative} (i.e., guaranteed to be at least 0) and
\textit{unbiased} (i.e., with mean equal to the true likelihood).
\citet{georgoulas_unbiased_2017}  applied this strategy to a
subclass of CTMCs called \textit{Reaction Networks}, of which the SSIR
model is a particular example. The idea is to define a growing, infinite
sequence of finite sets $\statespace_1,\statespace_2,\statespace_3$, etc., that
eventually covers the entirety of $\statespace$. Then, a corresponding sequence
of finite rate matrices $Q_r$ is obtained, so that each $e^{tQ_r}$ can be
computed exactly. Finally, an unbiased estimate is obtained through a
randomization-based debiasing method
\citep{mcleish2011,rhee2015unbiased,lyne2015russian} that preserves
non-negativity.

The method described above can be recast as exploiting 
\emph{monotone approximations} to the transition probabilities of an infinite 
CTMC. Indeed, \citet{georgoulas_unbiased_2017} constructs a monotone 
approximation by computing matrix exponentials for rate matrices arising from 
increasing truncations of the state-space $\statespace$. However, we argue that 
exactly computing (expensive) matrix exponentials is wasteful. We show that this
can be avoided by building a monotone approximation that simultaneously 
increases the truncation of the underlying state-space as well as the accuracy 
of an approximation to the finite matrix exponential. By preserving monotonicity, 
this new sequence fits into the pseudo-marginal framework described earlier; 
and because it involves only approximate matrix exponentials, it improves 
computational efficiency. We demonstrate that these monotone approximations can
substantially improve the efficiency of pseudo-marginal methods for CTMCs.

Crucial to this new approach is the use of \emph{doubly-monotone approximations} to the
finite rate matrix exponential---i.e., those that are monotone non-decreasing
in \emph{both} state-space truncation size \emph{and} approximation iteration---because
they enable the construction of the monotone approximations to the transition probability
required by the pseudo-marginal method.
Past approximations applicable to this setting,
such as the well-known uniformization method \citep{jensen53}, are not doubly-monotone 
except in some limited circumstances. Therefore we propose the \emph{Skeletoid},
a novel matrix exponential approximation. We demonstrate that the skeletoid is 
doubly-monotone, and provide bounds on its approximation
error as a function of the number of iterations. 
Empirical results show that our pseudo-marginal method based on approximate matrix exponentials---using Skeletoid or uniformization---gives substantial efficiency improvements.

The remainder of the paper is organized as follows. \cref{sec:background}
introduces the transition function for a CTMC, a description of Reaction Networks,
and the general pseudo-marginal Markov chain Monte Carlo (MCMC) method.
\cref{subsec:nonneg_unbiased_est,subsec:designing_sequences_SS}
introduce the strategy to construct pseudo-marginal
samplers using monotone transition function approximations together with a
debiasing scheme. \cref{subsec:doubly_monotone} shows how to improve the
efficiency of this approach using doubly-monotone matrix exponential approximations.
\cref{subsec:monotone_approx_matrix_exp} introduces the
Skeletoid approximation, provides an analysis of its error, and compares it 
with uniformization.  Finally,
\cref{sec:experiments} provides a comparison of our proposed method and
alternative approaches. We conclude
with a discussion of avenues of improvement and generalizations to other
classes of stochastic processes.

\section{Background}\label{sec:background}

We begin this section by defining transition functions in the context of CTMCs,
and providing conditions under which the inference problem is well-posed. Then,
we define the subclass of CTMCs known as \emph{reaction networks}.
Finally, we give a general description of the pseudo-marginal approach to Bayesian inference.

\subsection{CTMCs, rate matrices, and transition functions}\label{subsec:transition_functions}

Let $\{X(t)\}_{t\geq 0}$ denote a continuous time Markov chain (CTMC) on a 
countable space $\statespace$ \citep[Ch. 8]{cinlar2013introduction}. The set of equations
\begin{equation}\label{eq:def_trans_func_as_prob}
M_{x,y}(t) := \P(X(t) = y | X(0) = x), \hspace{1em} x,y\in\statespace, t\geq 0,
\end{equation}
define the \emph{transition function of the process $\{X(t)\}_{t\geq 0}$}. In applications, the usual input available
to produce $M(t)$ is the \emph{rate matrix}
$Q = (q_{x,y})$ such that $q_{x,y} \ge 0$ for $x \neq y$. When $|\statespace|<\infty$, and $Q$ is \emph{conservative}, i.e.\ $q_{x,x} = - \sum_{y \neq x} q_{x,y}$, then $M(t)$ is given by the \emph{matrix exponential} of $Q$,
\begin{equation}\label{eq:def_matrix_exponential}
M(t) = e^{tQ} := \sum_{n = 0}^\infty \frac{(tQ)^n}{n!}, \hspace{1em} t\geq 0.
\end{equation}
One can use this relationship to compute the exact likelihood of observations. 
For countably infinite state-spaces $|\statespace|=\infty$, there is no such
closed-form expression relating $M(t)$ and $Q$ (they are
generally related via \emph{Kolmogorov's equations} \citep[][Thm. 2.2.2]{anderson1991continuous}). 
Therefore, specialized inference methods are required.
Note that the rate matrix $Q$ is still the inferential target in this setting, 
as it still determines the dynamics of the CTMC:
given a conservative and \emph{non-explosive} rate matrix $Q$  \citep[][Def. 8.3.22]{cinlar2013introduction},
Algorithm \ref{algo:gillespie} 
produces realizations of the CTMC that are distributed 
according to \cref{eq:def_trans_func_as_prob} \citep{gillespie1977exact}.

\begin{algorithm}[t]
\DontPrintSemicolon
\SetKwInOut{Input}{input}
\SetKwInOut{Output}{output}
\SetKw{Break}{break}
\Input{$x_0, Q, t_\text{end}>0$.}
$x\gets x_0; t\gets 0$\Comment*{initialize starting point}
$X \gets \{(x,t)\}$\Comment*{initialize path storage}
\Loop{
	$\tau \sim \text{Exp}(-q_{x,x}); t_\text{jump}\gets t+\tau$\Comment*{sample next jump time}
    \lIf(\Comment*[f]{reached the end}){$t_\text{jump} > t_\text{end}$}{\Break}
	$x \gets \text{Categorical}\left(\left\{\frac{q_{x,y}}{(-q_{x,x})} \right\}_{y\neq x}\right)$\Comment*{sample next state}
	$t\gets t_\text{jump}$\Comment*{advance time}
	$X \gets X\cup\{(x,t)\}$\Comment*{append new jump to path}
}
\Output{Path $X$.}
\caption{Gillespie's algorithm for sampling a path of a CTMC}
\label{algo:gillespie}
\end{algorithm}

\subsection{Reaction networks}\label{subsec:reaction_networks}

\emph{Reaction networks} are a class of CTMCs that admit a structured
parametrization of their rate matrices. The state space $\statespace$ of a reaction network 
is a subset of the integer lattice $\Z^{n_s}$, where $n_s$ is the
number of \textit{species} in the network. In most cases, the state spaces of
reaction networks can be described by (possibly infinite) rectangular sets
\[
\statespace = \{x\in\Z_+^{n_s}: B^l_i\leq x_i \leq B^u_i, \forall i\in\{1,\dots,n_s\}\}
\]
where $B_i^l,B_i^u\in\Z\cup\{-\infty,\infty\}$ define the lower and upper bounds on the counts of each species.

At any $x\in\statespace$, the process can only evolve by moving in a finite set
of directions in the lattice, specified by an integer valued  \textit{update matrix} $U$ of
size $n_r \times n_s$, where $n_r$ is the number of \textit{reactions}. Each
row in $U$ corresponds to an update vector, such that the next state is any one
of $\{x+U_{r,\cdot}\}_{r=1}^{n_r}$. The rate of jumps towards these states is
characterized by the \textit{propensity functions}
$\{f_\theta^r\}_{r=1}^{n_r}$, where $f_\theta^r:\statespace\to [0,\infty)$ and
$\theta\in\Theta$ is a parameter vector governing the system, on which we aim
to do inference. The $x$-th row of $Q$ is then given by
\begin{equation}\label{eq:reaction_network_rate_matrix}
Q_{x,y} =
\left\{
\begin{array}{ll}
f_\theta^r(x) & \text{if }y=x+U_{r,\cdot}\text{ for some }r,\\
-\sum_{r=1}^{n_r}f_\theta^r(x) & \text{if }y=x, \\
0 & \text{otherwise.}
\end{array}
\right.
\end{equation}
Note that this definition imposes a high level of sparsity on $Q$, as its rows
contain at most $n_r+1$ nonzero elements.

As mentioned earlier, the running SSIR model example is a reaction network with
$\statespace=\Z_+^3$; i.e., with $B^l=(0,0,0)$ and
$B^u=(\infty,\infty,\infty)$. It is represented by the following diagram
\begin{equation}\label{react:SIR}
\begin{array}{cccl}
S + I &\xrightarrow{\mathmakebox[2em]{\theta_1 SI}}& 2I &  \text{(infection)} \\
I &\xrightarrow{\theta_2 I} & R & \text{(recovery)} \\
\emptyset & \xrightarrow{\mathmakebox[2em]{\theta_3}} & S & \text{(arrival of susceptible)}
\end{array}
\end{equation}
where $\theta:=(\theta_1,\theta_2,\theta_3)$ are the model's parameters, and
the formulae above the arrows correspond to the propensity functions. The
update matrix induced by the diagram is
\begin{equation}\label{eq:SIR_update_matrix}
U=
\begin{pmatrix*}[r]
-1 &  1 & 0\\
 0 & -1 & 1\\
 1 &  0 & 0
\end{pmatrix*}.
\end{equation}

Most of the theory in this paper is applicable to general CTMCs, the only
exception being \cref{subsec:designing_sequences_SS}. However, we will focus
our applications and experiments on reaction networks because of their broad applicability in
scientific practice and the convenient structure they exhibit.

\subsection{Pseudo-marginal approach to inference}\label{subsec:pseudomar}

Consider a general statistical problem where one has collected a dataset $\D$ with the
intention of performing inference for an unknown parameter $\theta\in\Theta$.
The likelihood function is given by $L(\theta)$, and we put a prior
distribution on $\theta$ with density $\pi(\theta)$. The \textit{posterior
density} over $\theta$ is given by
\begin{equation}\label{eq:def_posterior}
\pi(\theta|\D) := \frac{\pi(\theta)L(\theta)}{p(\D)} = \frac{\pi(\theta)L(\theta)}{\int_\Theta\pi(\theta')L(\theta')\mathrm{d}\theta'}.
\end{equation}
When $\pi(\theta)$ and $L(\theta)$ can be evaluated point-wise, standard tools
such as Markov Chain Monte Carlo (MCMC) can be used to carry out Bayesian
inference without the need to calculate $p(\D)$. However, in many situations
the likelihood $L(\theta)$ cannot be evaluated, either because it does not
admit a closed form expression, or because using such an expression requires
infinite computational resources.

The pseudo-marginal approach \citep{andrieu2009pseudo} is a useful tool to approach cases where $L(\theta)$ is intractable. Suppose that we have access to a non-negative estimator of the
likelihood $\tilde{L}(\theta,U)\geq0$, where $U$ is an auxiliary  $\mathcal{U}$-valued 
independent random variable with density $m(u)$, such that
$\E[\tilde{L}(\theta,U)]=L(\theta)$. Define the \textit{augmented posterior distribution}
\begin{equation}\label{eq:pseudo_marginal_aug_target_density}
\pi(\theta, u|\D) := \pi(\theta|\D) m(u) \frac{\tilde{L}(\theta,u)}{L(\theta)} \propto \pi(\theta) m(u) \tilde{L}(\theta,u) =: \gamma(\theta,u).
\end{equation}
We can check that this augmented density has the
correct marginal for $\theta$, since
\[
\int_\mathcal{U} \pi(\theta, u|\D) \mathrm{d}u = \frac{\pi(\theta|\D)}{L(\theta)} \int_\mathcal{U} \tilde{L}(\theta,u) m(u)\mathrm{d}u = \frac{\pi(\theta|\D)}{L(\theta)}\E[\tilde{L}(\theta,U)] = \pi(\theta|\D).
\]
Moreover, $\gamma(\theta,u)$ can be evaluated point-wise because it does not explicitly
depend on $L(\theta)$. Therefore, we can perform  inference for $\theta$
by employing any suitable MCMC algorithm targeting $\gamma$.

\section{Inference in countably infinite state spaces}\label{sec:inference}

Suppose that we observe a CTMC at a finite set of times
$\{t_i\}_{i=0}^{n_d}$ with states $\{x_i\}_{i=0}^{n_d}$. Denote this data by
$\D=\{x_i,t_i\}_{i=0}^{n_d}$. We aim to perform inference for the parameters
$\theta \in \Theta$ that define the rate matrix $Q=Q(\theta)$. The
likelihood has the form
\begin{equation}
L(\theta) := \prod_{i=1}^{n_d} M(t_i - t_{i-1};\theta)_{x_{i-1}, x_i},
\end{equation}
where $M(\delta;\theta)_{x,y}$ is the transition probability from state $x$ to
$y$ after a time interval of length $\delta$ for the CTMC governed by the rate matrix $Q(\theta)$
(we write $M(t)$ instead of $M(t;\theta)$ to simplify notation). As we have previously discussed, for countably infinite spaces
there is no general exact method to compute $M(u;\theta)_{x,y}$ in finite time.
In this section we describe a novel approach that uses the pseudo-marginal method to
address this problem.

\subsection{Non-negative unbiased estimators of the likelihood}\label{subsec:nonneg_unbiased_est}

To construct an unbiased estimator of the likelihood $L(\theta)$, we design a
debiasing scheme
\citep{kahn1954applications,mcleish2011,rhee2015unbiased,lyne2015russian,jacob2015nonnegative}
that exploits the monotonicity of increasing sequences converging to a quantity
of interest (all proofs in \cref{app:proofs}).

\begin{proposition}\label{prop:debiasing}
Let $\{a_n\}_{n\in\Z_+}$ be a monotone increasing sequence 
with $a_n \uparrow \alpha <\infty$. Fix $\omega \in \Z_+$ and let $N\in\Z_+$ be an independent
random variable with probability mass function $p(n)$ satisfying
\begin{equation}\label{eq:debias_estimator_assu_domination}
\forall n \in \Z_+ : \,\, p(n) = 0, \qquad a_{\omega+n+1} = a_{\omega+n}.
\end{equation}
Define the random variable
\begin{equation}\label{eq:debias_estimator}
Z := a_{\omega} + \frac{a_{\omega+N+1} - a_{\omega+N}}{p(N)}.
\end{equation}
Then $Z \geq a_\omega$ almost surely, $\E[Z]=\alpha$, and
\[
 \var(Z) = \sum_{n:p(n)>0} \frac{(a_{\omega+n+1} - a_{\omega+n})^2}{p(n)} - (\alpha-a_\omega)^2. \label{eq:debias_variance}
\]
Furthermore, if $\var(Z) < \infty$ for a fixed $\omega\in\Z_+$, then $\var(Z) \to 0$ as $\omega \to \infty$.
\end{proposition}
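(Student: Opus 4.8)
The plan is to dispatch the first three conclusions quickly and concentrate on the limit claim, which is the substantive one. For the almost-sure bound, I would note that on the event $\{N=n\}$ the denominator $p(n)$ is necessarily positive, while monotonicity of $\{a_n\}$ makes the numerator $a_{\omega+n+1}-a_{\omega+n}\ge 0$; hence the correction term in \eqref{eq:debias_estimator} is nonnegative and $Z\ge a_\omega$. For the mean I would condition on $N$, so that the weights $p(n)$ cancel and $\E[Z]=a_\omega+\sum_{n:p(n)>0}(a_{\omega+n+1}-a_{\omega+n})$; the domination hypothesis \eqref{eq:debias_estimator_assu_domination} lets me adjoin the indices with $p(n)=0$ at no cost, since their increments vanish, turning this into the full telescoping series $\sum_{n\ge 0}(a_{\omega+n+1}-a_{\omega+n})=\alpha-a_\omega$ because $a_n\uparrow\alpha$, whence $\E[Z]=\alpha$. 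The variance formula then comes from the same cancellation applied to the second moment, via $\var(Z)=\E[(Z-a_\omega)^2]-(\E[Z]-a_\omega)^2$, the first term being $\sum_{n:p(n)>0}(a_{\omega+n+1}-a_{\omega+n})^2/p(n)$ and the second being $(\alpha-a_\omega)^2$.

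For the final statement I would abbreviate the increments $d_j:=a_{j+1}-a_j\ge 0$ and write the variance as $\var(Z)=S_\omega-(\alpha-a_\omega)^2$ with $S_\omega:=\sum_{n:p(n)>0}d_{\omega+n}^2/p(n)$. The subtracted term is handled immediately, since $a_\omega\uparrow\alpha$ forces $(\alpha-a_\omega)^2\to 0$, so everything reduces to showing $S_\omega\to 0$. I would first record the two facts that drive the argument: because $\{a_n\}$ converges, its increments satisfy $d_j\to 0$, so each individual summand $d_{\omega+n}^2/p(n)\to 0$ as $\omega\to\infty$ for fixed $n$; and because the $d_j$ are nonnegative, $d_{\omega+n}\le\sum_{k\ge\omega}d_k=\alpha-a_\omega\to 0$, which gives smallness of the increments that is uniform in $n$. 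The strategy is then to pass the limit $\omega\to\infty$ inside the infinite sum defining $S_\omega$ by dominated convergence, using the finiteness hypothesis $S_{\omega_0}<\infty$ at the fixed base point $\omega_0$ to supply the summable envelope.

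The hard part will be precisely this interchange of limit and sum: the summand depends on $\omega$ through the numerator, and it must be dominated, uniformly over the relevant range of $\omega$, by a single summable sequence in $n$. I would split $S_\omega=\sum_{n\le K}+\sum_{n>K}$: for fixed $K$ the leading block is a finite sum of terms each tending to $0$, so it vanishes as $\omega\to\infty$; the genuine obstacle is to bound the tail $\sum_{n>K}d_{\omega+n}^2/p(n)$ by a quantity that is small for large $K$ \emph{uniformly} in $\omega$. This is where the finiteness at $\omega_0$ and the structure of $\{a_n\}$ and $p$ must be used in concert—through the envelope furnished by $n\mapsto d_{\omega_0+n}^2/p(n)$ together with the uniform increment bound $d_{\omega+n}\le\alpha-a_\omega$—and I expect the uniform tail control to be the only step requiring real care. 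Once a summable dominating sequence is secured, dominated convergence delivers $S_\omega\to 0$ and therefore $\var(Z)\to 0$.
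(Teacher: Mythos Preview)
Your handling of the first three conclusions matches the paper's argument line for line: nonnegativity of the correction term, telescoping (justified by monotone convergence and the domination hypothesis) for the mean, and the second-moment computation for the variance formula.

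On the limit claim, your worry about uniform tail control is exactly the right place to worry---and in fact that step cannot be completed at the stated generality. The envelope $n\mapsto d_{\omega_0+n}^2/p(n)$ dominates $d_{\omega+n}^2/p(n)$ only if the increments $d_j$ are themselves non-increasing, which is not assumed, and the auxiliary bound $d_{\omega+n}\le\alpha-a_\omega$ does not rescue things because $\sum_n 1/p(n)$ is typically infinite. Worse, no summable envelope need exist at all. A concrete counterexample: set $d_{2k+1}=2^{-k}$, $d_{2k}=0$ (so $a_n\uparrow\alpha$), and take $p(2k+1)=2^{-(k+2)}$, $p(2k)=\tfrac{7}{16}\cdot 8^{-k}$ (all strictly positive, summing to $1$, so the domination hypothesis is vacuous). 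Then $S_0=\sum_{k\ge 0}4\cdot 2^{-k}=8<\infty$, yet $S_1=\sum_{k\ge 0}d_{2k+1}^2/p(2k)=\tfrac{16}{7}\sum_{k\ge 0}2^{k}=\infty$; in fact $S_\omega=\infty$ for every odd $\omega$. Thus finiteness of $\var(Z)$ at one $\omega$ does \emph{not} force $\var(Z)\to 0$, and your dominated-convergence plan cannot be salvaged.

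The paper's own proof of this part is a single sentence (``the fact that $a_\omega\uparrow\alpha$ yields $\var(Z)\to 0$'') and does not engage the difficulty either, so you have not missed an idea present there. What \emph{is} true---and covers all the paper's applications, which take $p$ geometric---is that if $p$ is non-increasing then for $\omega\ge\omega_0$ and $m\ge\omega$ one has $1/p(m-\omega)\le 1/p(m-\omega_0)$, whence
\[
S_\omega=\sum_{m\ge\omega}\frac{d_m^2}{p(m-\omega)}\le\sum_{m\ge\omega}\frac{d_m^2}{p(m-\omega_0)},
\]
the tail of the convergent series defining $S_{\omega_0}$, which tends to $0$ directly. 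This bypasses dominated convergence entirely and is the clean extra hypothesis under which the claim holds.
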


We refer to the above method as the Offset Single Term Estimator (OSTE) because
it builds on the single-term estimator of \citet{rhee2015unbiased} by
incorporating the offset $\omega$ as a tuning parameter. OSTE has two advantages over
other debiasing schemes. First, it requires computing at most $3$ (random)
elements of the sequence $\{a_n\}_{n\in\Z_+}$, and only $2$ with probability
$\P(N=0)$. This provides a large reduction in computation in situations---as in the present work---where 
$a_n$ can be evaluated
directly without needing to first evaluate $a_1, \dots, a_{n-1}$. Second, 
the offset allows us to control a trade-off between computation cost and variance separate
from the design of the distribution $p$. 

In the context of pseudo-marginal inference, it is important
to control the variance of the estimate, as it directly influences the convergence rate of the sampler. Aside from the offset,
\cref{eq:debias_variance} suggests that the variance of the OSTE is determined
by the relative values of the difference sequence $a_{n+1}-a_n$ and the probability
mass function $p(n)$. The faster $a_n$ converges, the lighter-tailed $p(n)$ can be;
in situations where $a_n$ becomes more expensive to compute as $n$ increases---as in the present work---it 
is desirable to use a light-tailed $p(n)$ that rarely returns large values of $N$.
In particular, a case of special interest to the present work is when the sequence $\{a_n\}_{n\in\Z_+}$ 
converges exponentially fast to its limit. 
\cref{prop:suff_cond_debiasing} shows that in this case, we can use a geometric
distribution $p$ while still guaranteeing that $\var(Z) < \infty$.
\begin{proposition}\label{prop:suff_cond_debiasing}
Suppose there exist constants $c, \rho>0$ such that
\[
\forall n\in\Z_+, \qquad a_{n+1}-a_n \leq ce^{-n\rho}. \label{eq:debias_assu_exp_error_decay}
\]
Then if $N \sim \mathrm{Geom}\left(1-e^{-\beta}\right)$ for $0 < \beta < 2\rho$,
\[
\var(Z) \leq \frac{c^2 e^{-2\omega \rho}}{(1-e^{-\beta})(1-e^{\beta-2\rho})}.
\]
\end{proposition}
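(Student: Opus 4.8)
The plan is to apply \cref{prop:debiasing} with the specific choice of geometric $p$ and then bound the resulting variance formula term by term. First I would verify that the hypotheses of \cref{prop:debiasing} are in force: the sequence $\{a_n\}$ is monotone increasing (since $a_{n+1}-a_n \leq ce^{-n\rho}$ together with convergence is consistent with, though does not by itself assert, monotonicity—I would invoke the standing assumption from \cref{prop:debiasing} that $a_n \uparrow \alpha$) and the geometric pmf $p(n) = (1-e^{-\beta})e^{-\beta n}$ is strictly positive for every $n\in\Z_+$, so the domination condition \cref{eq:debias_estimator_assu_domination} holds vacuously. Thus the variance identity from \cref{prop:debiasing} is available.

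Next I would discard the nonpositive term and work with the bound
\[
\var(Z) \leq \sum_{n=0}^{\infty} \frac{(a_{\omega+n+1}-a_{\omega+n})^2}{p(n)}.
\]
Into this I substitute the exponential decay hypothesis, which gives $(a_{\omega+n+1}-a_{\omega+n})^2 \leq c^2 e^{-2(\omega+n)\rho}$, and the explicit geometric pmf $p(n) = (1-e^{-\beta})e^{-\beta n}$. The summand then becomes
\[
\frac{c^2 e^{-2(\omega+n)\rho}}{(1-e^{-\beta})e^{-\beta n}} = \frac{c^2 e^{-2\omega\rho}}{1-e^{-\beta}}\, e^{-n(2\rho-\beta)}.
\]
Factoring the constant out front, the remaining sum is a geometric series in $e^{-(2\rho-\beta)}$, which converges precisely because $\beta < 2\rho$ makes the ratio $e^{-(2\rho-\beta)} = e^{\beta-2\rho} < 1$. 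Summing yields the factor $1/(1-e^{\beta-2\rho})$, producing exactly the claimed bound.

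The main thing to be careful about—though it is not deep—is the convergence condition on the geometric series, which is where the constraint $0 < \beta < 2\rho$ enters essentially: for $\beta \geq 2\rho$ the series would diverge and the variance could be infinite, so this is the sharp boundary the proposition identifies. I would also note that the lower bound $\beta > 0$ is needed merely for $p$ to be a valid (proper) geometric pmf. I expect no genuine obstacle here; the only subtlety is matching the index conventions in the variance identity (the sum ranges over $n$ with the shifted sequence starting at $\omega$) so that the exponent on the decay bound is $2(\omega+n)\rho$ rather than $2n\rho$, which is what produces the $e^{-2\omega\rho}$ prefactor reflecting the variance-reduction effect of the offset.
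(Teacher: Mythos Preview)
Your proposal is correct and follows essentially the same route as the paper: invoke the variance identity from \cref{prop:debiasing} (valid since the geometric pmf is strictly positive), drop the negative $(\alpha-a_\omega)^2$ term, insert the decay bound and the geometric pmf, and sum the resulting geometric series using $\beta<2\rho$. Your care about the index shift producing the $e^{-2\omega\rho}$ prefactor and about where the constraint $0<\beta<2\rho$ enters matches the paper's argument exactly.
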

In this work we follow this general strategy of
creating monotone sequences that converge at least exponentially quickly 
to their limit. In practice, the constants $c, \rho$ above are 
typically unknown; in \cref{app:tuning_stopping_times} we describe a 
procedure to automatically tune OSTE that is suitable for the present work.

It remains to design an appropriate monotone sequence that converges to
$L(\theta)$. A first step in this direction is given by the following result
due to \cite{anderson1991continuous}, which shows that any increasing sequence
of state spaces induces an increasing sequence of transition functions.

\begin{proposition}[{\citet[Prop.~2.2.14]{anderson1991continuous}}]\label{prop:monotone_trunc}
Let $\{\statespace_r\}_{r\in \Z_+}$ be an increasing sequence of sets such that $\bigcup_{r\geq0}\statespace_r = \statespace$. Define a corresponding sequence of rate matrices $\{Q_r\}_{r\in \Z_+}$ by
\begin{equation}\label{eq:def_truncated_Q}
(Q_r)_{x,y} :=
\left\{
\begin{array}{cc}
Q_{x,y} & x,y\in \statespace_r \\
0 & \text{o.w.}
\end{array}
\right.
\end{equation}
Then for all $x,y\in\statespace$ and $t\geq 0$, $[e^{tQ_r}]_{x,y}\uparrow M(t)_{x,y}$ as $r\to\infty$.
\end{proposition}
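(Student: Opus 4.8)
The plan is to give the truncated matrix exponential a probabilistic meaning and then pass to the limit by monotone convergence. For each $r$ let $\tau_r := \inf\{s\ge 0 : X(s)\notin\statespace_r\}$ be the first exit time of $\statespace_r$ for the CTMC $\{X(t)\}$ generated by $Q$, and write $\P_x(\cdot) := \P(\cdot\mid X(0)=x)$. The central claim I would establish is the representation
\[
[e^{tQ_r}]_{x,y} = \P_x\big(X(t)=y,\ \tau_r > t\big),\qquad x,y\in\statespace_r,
\]
i.e.\ that $[e^{tQ_r}]_{x,y}$ is the sub-probability that the original chain is at $y$ at time $t$ having never left $\statespace_r$. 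This is natural because zeroing the entries of $Q$ that leave $\statespace_r$ keeps the diagonal $q_{x,x}$ unchanged, so each row sum of $Q_r$ equals $-\sum_{y\notin\statespace_r} q_{x,y}\le 0$: the deleted mass is exactly the rate at which the chain is \emph{killed} (sent to an absorbing cemetery) upon attempting to jump out of $\statespace_r$. Thus $e^{tQ_r}$ is the transition semigroup of this killed process.

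To prove the representation I would restrict attention to the finite (in the applications of interest) block $\statespace_r$, on which $Q_r$ is a bounded sub-generator. Both sides, viewed as functions of $t$, solve the Kolmogorov backward system $\tfrac{\mathrm d}{\mathrm d t}P(t)=Q_rP(t)$ with $P(0)=I$: the left side by differentiating the power series \cref{eq:def_matrix_exponential}, and the right side by the standard first-jump decomposition of the killed process. Since $\statespace_r$ is finite this linear system has a unique solution, so the two coincide. Equivalently, a uniformization expansion $e^{tQ_r}=e^{-\lambda t}\sum_k \tfrac{(\lambda t)^k}{k!}P^k$ with $P=I+Q_r/\lambda$ sub-stochastic exhibits $[e^{tQ_r}]_{x,y}$ directly as a sum over discrete skeleton paths that remain in $\statespace_r$. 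In particular this shows $e^{tQ_r}\ge 0$ entrywise.

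Monotonicity and convergence then follow cleanly. Because $\statespace_r\subseteq\statespace_{r+1}$, the exit times satisfy $\tau_r\le\tau_{r+1}$, so for every $r$ large enough that $x,y\in\statespace_r$ the events $\{X(t)=y,\ \tau_r>t\}$ increase with $r$; by the representation, $[e^{tQ_r}]_{x,y}$ is non-decreasing along this tail. Next, non-explosiveness of $Q$ guarantees that $\{X(s)\}_{s\le t}$ makes only finitely many jumps and hence visits only finitely many states on $[0,t]$ almost surely; since $\statespace_r\uparrow\statespace$, all these states lie in $\statespace_r$ for $r$ large, giving $\tau_r\uparrow\infty$ a.s.\ and therefore $\{\tau_r>t\}\uparrow$ an event of full probability. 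Monotone convergence (continuity of $\P$ along increasing events) then yields $[e^{tQ_r}]_{x,y}\uparrow\P_x(X(t)=y)=M(t)_{x,y}$.

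I expect the representation above to be the only real obstacle: everything downstream is monotone convergence. The delicate points there are handling the \emph{sub}-conservative generator $Q_r$ (whose row sums are strictly negative exactly on the boundary of $\statespace_r$) and, if one wishes to allow infinite truncations $\statespace_r$, replacing the finite-dimensional uniqueness argument by the minimal-process construction for $Q_r$. A minor bookkeeping caveat is that for small $r$ with $x\notin\statespace_r$ the $x$-row of $Q_r$ vanishes and $[e^{tQ_r}]_{x,y}=\delta_{x,y}$, so monotonicity is asserted only once $x,y\in\statespace_r$; this is harmless since the statement concerns the limit $r\to\infty$.
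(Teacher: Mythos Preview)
Your approach is correct and mirrors the paper's logic. The paper does not include a self-contained proof of \cref{prop:monotone_trunc} (it is attributed to Anderson), but it does establish exactly the representation you single out as the key step, namely
\[
[e^{tQ_r}]_{x,y}=\P_x\big(X(t)=y,\ \forall s\in[0,t]:X(s)\in\statespace_r\big),
\]
in \cref{prop:matexp_nonconservative_rate_mat}; from there the monotonicity and limit follow by the same $\tau_r\uparrow\infty$ / monotone-convergence argument you give (and which the paper also uses informally in \cref{subsec:designing_sequences_SS}).

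The only genuine methodological difference is how that representation is obtained. The paper proceeds by \emph{conservatization}: it augments $Q_r$ with an explicit cemetery state $\Delta$ to form a bona fide conservative rate matrix $R$ (\cref{lemma:conservatization}), computes $e^{tR}$ block-wise to identify its lower-right block with $e^{tQ_r}$, and then couples the CTMCs driven by $Q$ and $R$ through the shared randomness of Gillespie's algorithm until the first exit of $\statespace_r$. You instead argue directly that both sides solve the same Kolmogorov backward ODE on the finite block (or expand via uniformization with a sub-stochastic $P$). Your route is the more classical textbook argument and is slightly shorter; the paper's conservatization trick has the advantage of giving the identity without invoking ODE uniqueness or uniformizability, and it makes the cemetery interpretation of the row-sum defect completely explicit. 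Either way, the downstream monotone-convergence step is identical, and your caveat about the diagonal entries when $x\notin\statespace_r$ is an accurate reading of the statement.
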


\cref{prop:monotone_trunc} tells us how to construct an increasing sequence of
probabilities which we can pass to OSTE to produce a non-negative and unbiased
estimator of the transition probability of a CTMC. Note that the truncated rate matrices $Q_r$ are not guaranteed to be \emph{conservative}, meaning that some of its rows may fail to sum to a value larger than zero (due to the missing states). We refer to these matrices as \emph{non-conservative}. 
Nonetheless, \cref{prop:matexp_nonconservative_rate_mat} and its proof  
shows that $[e^{tQ_r}]_{x,y}$ can be interpreted as a transition probability in an enlarged space.  

Let us assume that for each
observation $(x_{i-1},x_i,\Delta t_i)$ we have constructed an increasing
sequence of state-spaces $\{\statespace_r^{i}\}_{r\in\Z_+}$. Fix
$\omega_i\in\Z_+$ and let $N_i$ be an independent random integer with
probability mass function $p_i(n)$. Then an estimator of 
$L_i(\theta):=M(\Delta t_i)_{x_{i-1},x_i}$ is given by applying the OSTE debiasing scheme to the
monotone sequence of transition probabilities 
$\{M_r(\Delta t_i)_{x_{i-1},x_i}\}_{r\in\Z_+}$ given by \cref{prop:monotone_trunc},
\begin{equation}\label{eq:def_est_IE_elements}
\tilde{L}_i^{\text{IE}}(\theta,N_i) := M_{\omega_i}(\Delta t_i)_{x_{i-1},x_i} + \frac{M_{\omega_i+N_i+1}(\Delta t_i)_{x_{i-1},x_i}-M_{\omega_i+N_i}(\Delta t_i)_{x_{i-1},x_i}}{p_i(N_i)}.
\end{equation}
IE stands for an estimator for \textbf{I}rregular time series based on
\textbf{E}xact matrix exponentials. When the increasing sequence of state-spaces
is designed such that the conditions of
\cref{prop:debiasing} are satisfied, $\tilde{L}_i^{\text{IE}}(\theta,N_i)$ is
unbiased. Furthermore, by the independence of $\{N_i\}_{i=1}^{n_d}$, 
we can obtain an unbiased estimator of the likelihood via
\begin{equation}\label{eq:def_est_IE}
\tilde{L}_{\text{IE}}(\theta,U) := \prod_{i=1}^{n_d} \tilde{L}_i^{\text{IE}}(\theta,N_i),
\end{equation}
where $U:=(N_1,\dots,N_{n_d})$.

As the emphasis on irregular time series suggests, there is another estimator
that is suitable only for regular time series; i.e., with $\Delta t_i =\Delta t$ 
for all $i\in\{1,\dots,n_d\}$. It is motivated by a potential efficiency
gain associated with the fact that in this case we can obtain all the
transition probabilities $\{M(\Delta t)_{x_{i-1},x_i}\}_{i=1}^{n_d}$ from the
same matrix exponential $M(\Delta t)$. To exploit this property we begin by
collapsing the collection of sequences of state-spaces
$\{\statespace_r^i\}_{i,r}$ into a unique sequence
$\{\statespace_r\}_{r\in\Z_+}$
\begin{equation}\label{eq:def_RE_merged_seq_SS}
\statespace_r := \bigcup_{i=1}^{n_d} \statespace_r^i.
\end{equation}
Since each sequence $\{\statespace_r^i\}_{r\in\Z_+}$ is increasing to
$\statespace$, the collapsed sequence also satisfies this property. Then we
proceed as before: fix $\omega\in\Z_+$ and let $N$ be a random integer with
probability mass function $p(n)$. Define
\begin{equation}\label{eq:def_est_RE_L_hat}
\tilde{L}_r^{\text{RE}}(\theta) := \prod_{i=1}^{n_d} M_r(\Delta t)_{x_{i-1},x_i}.
\end{equation}
Applying the OSTE debiasing scheme to the sequence
$\{\tilde{L}_r^{\text{RE}}(\theta)\}_{r\in\Z_+}$ yields a new estimator of the
likelihood
\begin{equation}\label{eq:def_est_RE}
\tilde{L}_{\text{RE}}(\theta,N) := \tilde{L}_\omega^{\text{RE}}(\theta) + \frac{\tilde{L}_{\omega+N+1}^{\text{RE}}(\theta) - \tilde{L}_{\omega+N}^{\text{RE}}(\theta)}{p(N)}
\end{equation}
RE now stands for ``\textbf{R}egular time series with \textbf{E}xact
matrix exponentials.'' The estimator $\tilde{L}_{\text{RE}}(\theta,N)$ is unbiased whenever the
increasing state-space sequence is designed such that the
conditions of \cref{prop:debiasing} hold.

The strategy presented in this section is not restricted in
principle to CTMCs. In other words, the problem of performing exact inference
on stochastic processes with intractable transition functions can be reduced to
constructing monotone approximations to this function. Then, passing this
sequence through a debiasing scheme such as OSTE produces a likelihood
estimator which can be readily used as the basis of a pseudo-marginal sampler
targeting the correct (augmented) posterior distribution.

\subsection{Designing the state-space sequence}\label{subsec:designing_sequences_SS}

Up until this point, we have assumed the existence of a sequence of state spaces
increasing to $\statespace$. In this section we describe a procedure to create
such a sequence that leverages the structure of reaction networks.

For every observation $(x_{i-1},x_{i},\Delta t_i)$, we set $\statespace_0^i$ 
to be a positive probability path between $x_{i-1}$ and $x_i$. Then, we 
form an increasing sequence of sets using a simple iterative rule.
Let $D$ be any collection of vectors that span $\statespace$. To grow
the current space $\statespace_r^i$, we get new points by ``moving'' all the
elements of $\statespace_r^i$ one step along every direction in $D$, and then
discarding points outside $\statespace$
\begin{equation}\label{eq:state_space_seq_grow_rule}
\statespace_{r+1}^i = \statespace_r^i\cup(\{s+d:s\in\statespace_r^i,d\in D\}\cap \statespace).
\end{equation}
In this work we set $D$ to the canonical basis vectors of $\Z^{n_s}$ and 
their negations $D:=\{\pm e_j\}_{j=1}^{n_s}$; this is  
the default choice in the literature \citep{georgoulas_unbiased_2017,sherlock_exact_2019}. 
Note that this choice implies that the truncated state space
$\statespace_r^i$ grows polynomially, i.e., $|\statespace_r^i|=O(r^{n_s})$.
Another possible choice of $D$, for example, is to use the rows of the update matrix $U$ of the CTMC (e.g., 
\cref{eq:SIR_update_matrix} for the SIR model).

The benefit of including a valid path between endpoints in $\statespace_0^i$ is
that it ensures that the transition probability $M_0(\Delta t_i)_{x_{i-1},x_i}$ 
estimated with the corresponding truncated rate matrix $Q_0^{(i)}$ is positive.
When this condition fails, the performance of the pseudo-marginal sampler is
impaired because proposals with zero estimated likelihood are immediately
rejected. 
In \cref{app:build_base_statespace} we describe a simple heuristic based on 
linear programming to obtain $\statespace_0^i$. 
More general-purpose
pathfinding algorithms like $A^*$ \citep{hart1968formal} could also be used. 
Regardless of the algorithm used, one only
needs to build $\statespace_0$ once before running the sampler.

We now investigate the convergence rate of the estimated transition probability
sequence using this state-space truncation scheme. The goal is to show that
the error decays exponentially quickly, and hence that we may use a light-tailed
random truncation levels $N_i$ by \cref{prop:suff_cond_debiasing}.
Let $(x,y,t)$ represent a generic observation, and let $\P_x$ be the law of the
CTMC initialized at $X(0)=x$. As \cref{prop:matexp_nonconservative_rate_mat} shows, the $r$-th estimated transition
probability $M_r(t)_{x,y}$ can be written as the probability of moving from $x$
to $y$ without ever leaving $\statespace_{r}$, 
\[
M_r(t)_{x,y} = \P_x(X(t)=y\text{ and }\forall s \in [0,t]: X(s)\in\statespace_{r}).
\] 
Then
\begin{align}
M(t)_{x,y} - M_r(t)_{x,y} &= \P_x(X(t)=y) -  \P_x(X(t)=y\text{ and }\forall s \in [0,t], \, X(s)\in\statespace_{r}) \\
&= \P_x(X(t)=y\text{ and }\exists u\in[0,t] : X(u)\notin\statespace_{r}) \\
&\leq \P_x(\exists u\in[0,t]: X(u)\notin \statespace_{r}).
\end{align}
Reaching a point
$y\in\statespace\setminus\statespace_{r}$ 
from a point $x\in\statespace_0$
involves at the very minimum a
trip from the boundary of $\statespace_0$ to the boundary of $\statespace_r$. 
Since the directions $D$ span $\statespace$, there exists a constant $\gamma>0$ such that 
this trip requires at least $\gamma r$ jumps.
Therefore
\begin{align}
M(t)_{x,y} - M_r(t)_{x,y} &\leq \P_x(N(0,t]> \gamma r),
\end{align}
where $N(0,t]$ is the number of jumps the process makes in $(0,t]$.
To our knowledge, there are no analytically tractable expressions for tail
probabilities of $N(0,t]$ for general, non-explosive CTMCs in countably infinite state spaces.
Nevertheless, when the full rate matrix $Q$ is \textit{uniformizable}, i.e., 
\[
\bar{q}:=\inf_{x\in \statespace}q_{x,x} > -\infty,
\]
it is possible to derive an upper bound. In particular, one can add 
self-transitions to the CTMC---a process that is known as
\textit{uniformization} \citep{jensen53}---so that the enlarged set of jump
events is given by a homogeneous Poisson process $\bar{N}$ with intensity $-\bar{q}$.
Then, standard results for the Poisson distribution \citep[see e.g.][]{boucheron2013concentration} yield
\[
M(t)_{x,y} - M_r(t)_{x,y}\leq \P(N(0,t]> \gamma r) \leq \P(\bar{N}(0,t]> \gamma r) = O(e^{-\gamma r\log(\gamma r)}).
\]
Thus, under the assumption of uniformizability, the error in the truncation 
sequence decreases superexponentially. This is desirable given our aim to use the OSTE scheme, because it implies 
that \cref{eq:debias_assu_exp_error_decay} in \cref{prop:suff_cond_debiasing}
is satisfied.

\begin{figure}[t]
\centering
\includegraphics[width=\textwidth]{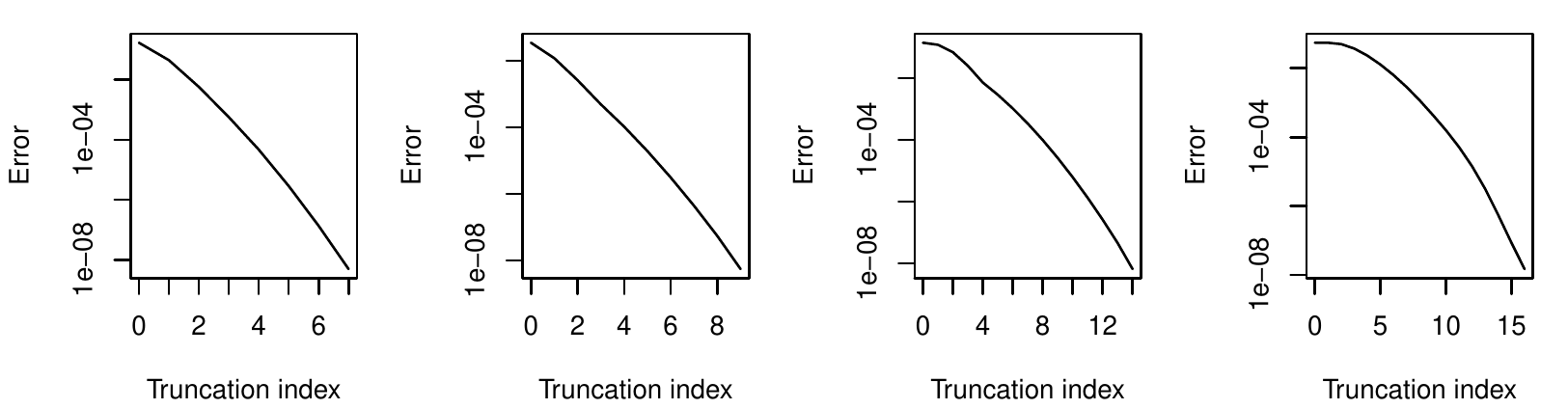}
\includegraphics[width=\textwidth]{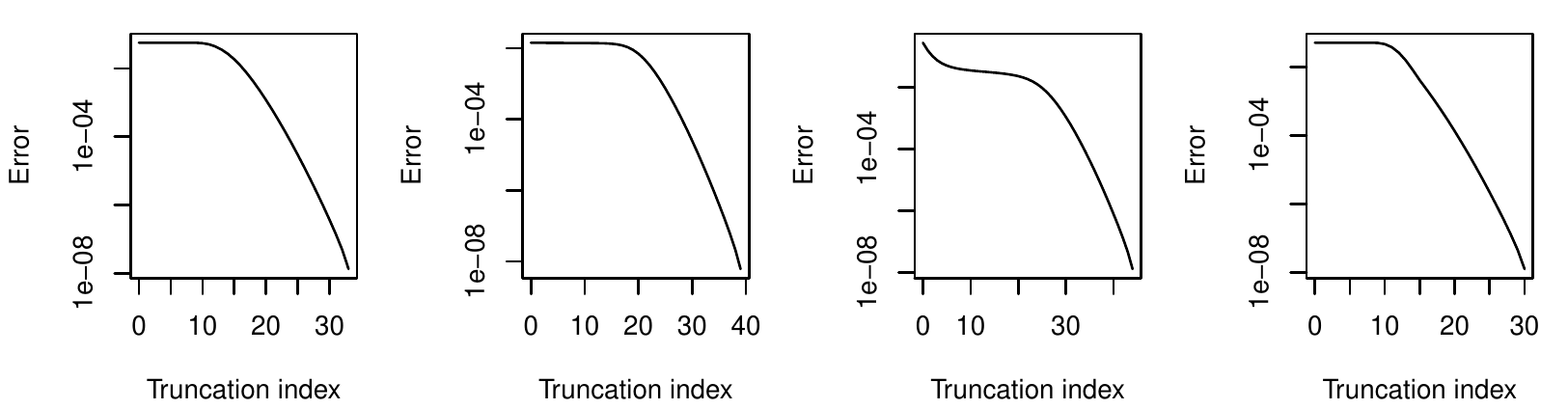}
\caption{Error in estimated transition probabilities versus truncation index $r$ for selected transition pairs. Within each row, a column corresponds to an observation $(x_i,y_i,\Delta t_i)$. The error is computed here as $M(\Delta t_i)_{x_i,y_i} - M_r(\Delta t_i)_{x_i,y_i}$. \textbf{Top:} $M/M/c$ model. \textbf{Bottom:} Schl{\"o}gl model.}
\label{fig:truncation_conv}
\end{figure}

\cref{fig:truncation_conv} shows the error incurred versus the truncation index $r$ 
for selected observation pairs in two CTMCs. The top row corresponds to a
simplified model of a queue. It assumes jobs arrive at a rate $\lambda$, and
that there are $c$ servers available to process those jobs at a rate of $\mu$.
This model---known as an $M/M/c$ queue \citep{lipsky2008queueing}---satisfies 
the uniformization condition, since $\bar{q} = -(\lambda+c\mu)>-\infty$. 
The plots show that the decrease in error is slightly
faster than exponential, in agreement with the derivation above. The bottom row
in \cref{fig:truncation_conv} corresponds to the Schl{\"o}gl model, another
CTMC which will be described in detail in \cref{sec:experiments}. In contrast
to the $M/M/c$ queue, the Schl{\"o}gl model has a rate matrix that is \emph{not}
uniformizable. Nevertheless, the observed decay in error still seems
superexponential, which suggests that fast convergence is robust to violations of
the uniformization assumption.

In practical application, the specific constants governing the rate of (super)exponential
error decay are not known. In \cref{app:tuning_stopping_times} we show how to empirically
estimate these rates as part of the process of tuning a pseudo-marginal sampler.

\subsection{Avoiding exact computation of matrix exponentials}\label{subsec:doubly_monotone}

The IE and RE estimators described in \cref{subsec:nonneg_unbiased_est} require
algorithms that produce an exact  evaluation of
$e^{tQ_r}$ for each truncated rate matrix $Q_r$. In this section, we show how to
construct new unbiased likelihood estimators that take advantage of monotonic
approximations of each $e^{tQ_r}$ to avoid the cost of exact computation. 
In particular, suppose for each $r\in\Z_+$ we
are given a monotone approximation $M_r^{(k)}(t)$, $k\in\Z_+$ to $M_r(t)$, $M_r^{(k)}(t) \uparrow M_r(t)$ entry-wise as $k\to\infty$ (we describe such algorithms in 
\cref{subsec:monotone_approx_matrix_exp}).
We say that the collection $\{M_r^{(k)}(t)\}_{r,k\in\Z_+}$ is \emph{doubly-monotone}
if it also satisfies 
\[
\forall r, k\in\Z_+, \qquad M_r^{(k)}(t)\leq M_{r+1}^{(k)}(t).
\]
The following result shows that for a doubly-monotone collection $\{M_r^{(k)}(t)\}_{r,k\in\Z_+}$, 
any sequence of matrices constructed from pairs of indices
$\{(r_n,k_n)\}_{n\in\Z_+}$ with $r_n \uparrow \infty$ and $k_n \uparrow \infty$
converges monotonically to the correct limit.
\begin{proposition}\label{prop:doubly_monotone_sequences}
Consider a collection 
\begin{equation}\label{eq:def_ark_collection}
\mathcal{A} = \{\ark:r\in\Z_+,k\in\Z_+\}\subset \R.
\end{equation}
Suppose that 
for all $r\in\Z_+$, $\ark \uparrow a_r\in\R$ where 
 $a_r\uparrow \alpha\in\R$,
and that 
for all $k\in\Z_+$, $\{\ark\}_{r\in\Z_+}$ is monotone increasing.
Let $\{(r_n, k_n)\}_{n\in\Z_+}$ be any sequence of pairs
$r_n, k_n\in\Z_+$ such that $r_n \uparrow \infty$ 
and $k_n \uparrow \infty$. Then $\aseq{r_n}{k_n} \uparrow \alpha$.  
\end{proposition}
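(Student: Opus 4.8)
The plan is to verify the two components implicit in the conclusion $\aseq{r_n}{k_n}\uparrow\alpha$ separately: first that the diagonal sequence $\{\aseq{r_n}{k_n}\}_{n\in\Z_+}$ is monotone non-decreasing, and second that its limit equals $\alpha$. For the monotonicity, the key observation is that although the hypotheses only supply \emph{separate} monotonicity in $r$ (for fixed $k$) and in $k$ (for fixed $r$), these can be chained along a ``staircase''. Since $r_n\uparrow\infty$ and $k_n\uparrow\infty$ imply $r_{n+1}\geq r_n$ and $k_{n+1}\geq k_n$, I would write $\aseq{r_n}{k_n}\leq \aseq{r_{n+1}}{k_n}\leq \aseq{r_{n+1}}{k_{n+1}}$, where the first inequality uses monotonicity in $r$ at fixed $k=k_n$ and the second uses monotonicity in $k$ at fixed $r=r_{n+1}$. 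This gives $\aseq{r_n}{k_n}\leq \aseq{r_{n+1}}{k_{n+1}}$ for every $n$.

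For the upper bound, note that for every pair $(r,k)$ we have $\ark\leq a_r$ (a term of an increasing sequence lies below its limit $a_r$) and $a_r\leq\alpha$ (since $a_r\uparrow\alpha$), so $\ark\leq\alpha$ for all $r,k$. In particular $\aseq{r_n}{k_n}\leq\alpha$ for all $n$. Combined with the monotonicity just established, the diagonal sequence therefore converges to some limit $L\leq\alpha$; it remains only to rule out $L<\alpha$.

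The crux, and the step requiring the most care, is the matching lower bound $L\geq\alpha$, which I would obtain with a two-stage $\epsilon$ argument. Fix $\epsilon>0$. Since $a_r\uparrow\alpha$, choose $R$ with $a_R>\alpha-\epsilon/2$; then, since $\aseq{R}{k}\uparrow a_R$ as $k\to\infty$, choose $K$ with $\aseq{R}{K}>a_R-\epsilon/2>\alpha-\epsilon$. Because $r_n\uparrow\infty$ and $k_n\uparrow\infty$, there is an $N$ such that $r_n\geq R$ and $k_n\geq K$ simultaneously for all $n\geq N$. For such $n$, the same staircase chaining gives $\aseq{r_n}{k_n}\geq\aseq{r_n}{K}\geq\aseq{R}{K}>\alpha-\epsilon$, so $L\geq\alpha-\epsilon$; letting $\epsilon\downarrow 0$ yields $L\geq\alpha$, hence $L=\alpha$. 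The only subtlety to watch is that one cannot simply invoke a single joint limit in $(r,k)$: the argument genuinely relies on first fixing $R$ to approach $\alpha$ at the level of the limits $a_r$, and only \emph{then} fixing $K$ to approach $a_R$, after which the tail condition $r_n\geq R$, $k_n\geq K$ lets the separate monotonicities do the rest.
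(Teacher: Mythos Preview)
Your proof is correct and follows essentially the same route as the paper's: both establish monotonicity of the diagonal sequence via the staircase chaining $\aseq{r_n}{k_n}\leq\aseq{r_{n+1}}{k_n}\leq\aseq{r_{n+1}}{k_{n+1}}$, bound it above by $\alpha$, and then use the double monotonicity to trap the limit from below. The only cosmetic difference is that the paper wraps the final step in a proof by contradiction (assuming the limit $\beta<\alpha$, fixing $n_0$ with $a_{r_{n_0}}>\beta+\epsilon$, extracting a subsequence with $k_{n_j}$ strictly increasing, and chaining $\aseq{r_{n_j}}{k_{n_j}}\geq\aseq{r_{n_0}}{k_{n_j}}\to a_{r_{n_0}}$), whereas your direct two-stage $\epsilon$ argument avoids the subsequence extraction and is arguably cleaner.
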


Applying \cref{prop:doubly_monotone_sequences} to the collection
\begin{equation}\label{eq:def_collection_approx_trans_prob}
\mathcal{A} := \{M_r^{(k)}(t)_{x,y}: r\in\Z_+, k\in\Z_+\}
\end{equation}
for each matrix entry $(x,y)$
ensures that we can pick any sequence of increasing pairs $\{(r_n,k_n)\}_{n\in\Z_+}$ 
and obtain a sequence $M_{r_n}^{(k_n)}(t)$ that converges monotonically to the true transition probability.
Therefore, we can construct a debiased estimator using the result of \cref{prop:debiasing}
without ever having to evaluate the exact matrix exponential $M_r(t)$ of any state-space truncation. 
In order to control the variance of the estimator via \cref{prop:suff_cond_debiasing}, we require 
further conditions on the approximation sequence given by \cref{prop:doublymonotonevariance}.

\begin{proposition}\label{prop:doublymonotonevariance}
In addition to the conditions in \cref{prop:doubly_monotone_sequences}, suppose that there exists $c_1,c_2,\rho,\kappa>0$ such that
\begin{equation}\label{eq:doublymonotonevariance_assumptions}
\forall r\in\Z_+: \alpha-a_r \leq c_1e^{-\rho r}, \hspace{1em} \text{ and } \hspace{1em} \forall r,k\in\Z_+: 
a_r - \ark \leq c_2e^{-\kappa k}.
\end{equation}
Then, setting $r_n=n$ and $k_n=\lceil(\rho/\kappa) n\rceil$ gives for all $n\in\Z_+$
\begin{equation}\label{eq:doublymonotonevariance_result}
\aseq{r_{n+1}}{k_{n+1}} - \aseq{r_n}{k_n} \leq 2\max\{c_1,c_2\}e^{-\rho n}.
\end{equation}
\end{proposition}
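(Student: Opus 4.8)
The plan is to avoid telescoping the diagonal increment directly --- which would force me to track how both indices move simultaneously --- and instead to bound it by the total remaining gap to the limit $\alpha$. Because $\aseq{r_{n+1}}{k_{n+1}} \leq a_{r_{n+1}} \leq \alpha$ by the two monotonicity hypotheses inherited from \cref{prop:doubly_monotone_sequences}, the increment is immediately controlled by
\[
\aseq{r_{n+1}}{k_{n+1}} - \aseq{r_n}{k_n} \leq \alpha - \aseq{r_n}{k_n}.
\]
This reduces the whole problem to bounding a single gap $\alpha - \aseq{r_n}{k_n}$, rather than a difference of two diagonal terms.

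Next I would split this gap along the two coordinate directions using the trivial identity $\alpha - \aseq{r_n}{k_n} = (\alpha - a_{r_n}) + (a_{r_n} - \aseq{r_n}{k_n})$, which is precisely the decomposition that lets the two hypotheses in \cref{eq:doublymonotonevariance_assumptions} be applied separately. Substituting $r_n = n$, the first term is at most $c_1 e^{-\rho n}$ by the row-convergence bound, and the second is at most $c_2 e^{-\kappa k_n}$ by the uniform-in-$r$ bound (which applies at the specific row $r = n$ exactly because $c_2$ does not depend on $r$).

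The only step that genuinely uses the prescribed choice $k_n = \lceil (\rho/\kappa)\, n \rceil$ is reconciling the two possibly different decay rates $\rho$ and $\kappa$. Since $k_n \geq (\rho/\kappa)\, n$, we have $\kappa k_n \geq \rho n$ and hence $c_2 e^{-\kappa k_n} \leq c_2 e^{-\rho n}$; this is exactly why the ceiling is defined as it is, forcing the $k$-direction error to decay at least as fast as the $r$-direction error. Combining the two terms yields $(c_1 + c_2) e^{-\rho n} \leq 2\max\{c_1, c_2\}\, e^{-\rho n}$, which is the claimed bound \cref{eq:doublymonotonevariance_result}. There is no real obstacle here; the only care needed is getting the inequality direction right when passing from the ceiling to $\kappa k_n \geq \rho n$, and noting that the bound on the increment is really just an upper bound on the already-nonnegative diagonal increment guaranteed by \cref{prop:doubly_monotone_sequences}.
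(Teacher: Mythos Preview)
Your proposal is correct and follows essentially the same argument as the paper: bound the increment by $\alpha - \aseq{r_n}{k_n}$, split this gap as $(\alpha - a_{r_n}) + (a_{r_n} - \aseq{r_n}{k_n})$, apply the two hypotheses, use $k_n \geq (\rho/\kappa)n$ to reconcile the rates, and finish with $c_1 + c_2 \leq 2\max\{c_1,c_2\}$. There is no substantive difference between your proof and the paper's.
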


The first assumption in \cref{eq:doublymonotonevariance_assumptions} is discussed in \cref{subsec:designing_sequences_SS}. The second condition is satisfied whenever the matrix exponential approximation admits an upper bound on the error that is invertible (\cref{subsec:monotone_approx_matrix_exp} gives such bounds for the methods described therein). Finally, we again note that the constants in \cref{prop:doublymonotonevariance} are unknown in practice, but nonetheless the sequences $r_n,k_n$ can still be tuned as we show in \cref{app:tuning_stopping_times}.

We now proceed as in
\cref{subsec:nonneg_unbiased_est} to define a new estimator of the true
transition probability $M(\Delta t_i)_{x_{i-1},x_i}$ for the $i$-th
observation. Suppose $\{M_r^{(k)}\}_{r,k\in\Z_+}$ is doubly-monotone,
and let $\{r_i(n),k_i(n)\}_{n\in\Z_+}$ be a sequence of increasing pairs.
Then
\begin{equation}\label{eq:def_est_IA_elements}
\tilde{L}_i^{\text{IA}}(\theta,N_i) := M_{r_i(0)}^{(k_i(0))}(\Delta t_i)_{x_{i-1},x_i} + \frac{M_{r_i(N+1)}^{(k_i(N+1))}(\Delta t_i)_{x_{i-1},x_i} - M_{r_i(N)}^{(k_i(N))}(\Delta t_i)_{x_{i-1},x_i}}{p_i(N_i)}
\end{equation}
is an unbiased estimator of $M(\Delta t_i)_{x_{i-1},x_i}$. In turn, these estimators can then be composed into an unbiased estimator of the likelihood
\begin{equation}\label{eq:def_est_IA}
\tilde{L}_{\text{IA}}(\theta,U) := \prod_{i=1}^{n_d} \tilde{L}_i^{\text{IA}}(\theta,N_i),
\end{equation}
where $U:=(N_1,\dots,N_{n_d})$. We use the suffix IA to denote the use of
\textbf{A}pproximate solutions in the setting of \textbf{I}rregular
time-series.
Likewise, for regular time series let
\begin{equation}\label{eq:def_joint_approx_like}
L_r^{(k)}(\theta) := \prod_{i=1}^{n_d} \mrk(\Delta t)_{x_{i-1},x_i}.
\end{equation}
Applying \cref{prop:doubly_monotone_sequences} to $\mathcal{A}=\{L_r^{(k)}(\theta):r\in\Z_+,k\in\Z_+\}$ shows 
that
\begin{equation}\label{eq:def_joint_approx_full like}
\tilde{L}_n^{\text{RA}}(\theta) := L_{r(n)}^{(k(n))}(\theta) \uparrow L(\theta) \; \text{ as } n\to\infty.
\end{equation}
Applying the OSTE scheme to the sequence $\{\tilde{L}_n^{\text{RA}}(\theta)\}_{n\in\Z_+}$ yields a new estimator of the likelihood
\begin{equation}\label{eq:def_est_RA}
\tilde{L}_{\text{RA}}(\theta,N) := \tilde{L}_0^{\text{RA}}(\theta) + \frac{\tilde{L}_{N+1}^{\text{RA}}(\theta) - \tilde{L}_{N}^{\text{RA}}(\theta)}{p(N)}.
\end{equation}

Since the IE and RE estimators based on exact solutions are closely related to the ones based on approximate solutions, the rest of the paper will focus on the IA and RA estimators.

The relative efficiency of the IA and RA estimators is not obvious. On the one
hand, RA needs only one call to the matrix exponentiation algorithm for each
likelihood estimate, while IA needs $n_d$. Also, a likelihood estimate produced
by RA involves a smaller number of states, since from
\cref{eq:def_RE_merged_seq_SS} we see that
\[
|\statespace_{r}| \leq \sum_{i=1}^{n_d} |\statespace_{r}^i|.
\]
On the other hand, IA allows for finer tuning since each observation gets its
own joint sequence. For example,
transition probabilities for observations of the form $(x,x,\Delta t)$ in
models with low rates of transition can be well approximated by the probability
of not moving---i.e., $e^{q_{x,x}\Delta t}$---and therefore we can set both of
its offsets to a low value. In contrast, for RA we need to set its parameters
so that all transition probabilities are well approximated.

Due to these competing effects, the relative efficiency of IA and RA will
depend on the particulars of the problem at hand, as we shall see in
experiments later on. Nevertheless, a good rule of thumb is to use the RA
approach whenever
\[
|\statespace_0| \leq \frac{1}{3}\sum_{i=1}^{n_d} |\statespace_0^i|.
\]
This rule selects RA whenever there is sufficient redundancy in the base
state-spaces so that it is more efficient to work with the union of them.

\subsection{Doubly-monotone approximations}\label{subsec:monotone_approx_matrix_exp}

In this section we describe doubly-monotone approximations 
that can be used as the basis of
the IA and RA unbiased likelihood estimators described in the previous section.
In particular, we begin with a simple approach based on uniformization \citep{jensen53}, which is efficient for rate matrices that have low norm. We present two variants which differ in their applicability to more general classes of CTMCs, as well as in their fulfillment of the double monotonicity property. Additionally, we 
develop a novel matrix exponential approximation---the \emph{skeletoid}---which is doubly monotonic and excels in the high rate regime. Finally, we compare the efficiency of the two methods in practical
application to the exponentiation of rate matrices for a number of CTMCs.

\subsubsection{Uniformization}\label{subsubsec:uniformization}

The process of uniformization described in \cref{subsec:designing_sequences_SS} 
can be used to produce a monotone sequence of approximations to $e^{tQ}$.
Suppose $Q$ 
satisfies $\bar{q} = \inf_{x\in \statespace}q_{x,x}>-\infty$. Define $P:=I+Q/(-\bar{q})$, then
\[
M(t) = e^{tQ} = e^{\bar{q} t(I-P)} = \sum_{n=0}^\infty e^{\bar{q} t} \frac{(-\bar{q} t)^n}{n!} P^n,%
\]
and the elements of the sum on the right are all (entry-wise)
non-negative. Using this fact, we can obtain an increasing sequence of
approximate solutions,
\begin{equation}\label{eq:uniformization_approx_sol}
\forall s\in\Z_+: M^{(s)}(t) := \sum_{n=0}^s e^{\bar{q} t} \frac{(-\bar{q}t)^n}{n!} P^n,
\end{equation}
such that $M^{(s)}(t)\uparrow M(t)$ as $s\to\infty$. Moreover, for $r\in\Z_+$, let $P_r:=I+Q_r/(-\bar{q})$ (with $Q_r$ defined in \cref{prop:monotone_trunc}). Define
\begin{equation}\label{eq:uniformization_global_double_seq}
\forall r,s \in \Z_+, t\geq0: M_r^{(s)}(t) := \sum_{n=0}^s e^{\bar{q} t} \frac{(-\bar{q}t)^n}{n!} P_r^n.
\end{equation}
Then, for any $x,y\in\statespace$ and $t\geq 0$, the collection  $\{\mseq{r}{s}(t)_{x,y}\}_{r,s\in\Z_+}$ is doubly monotonic, as \cref{prop:doubly_monotone_global_unif} shows.

\begin{proposition}\label{prop:doubly_monotone_global_unif}
The collection in \cref{eq:uniformization_global_double_seq} is doubly monotone.
\end{proposition}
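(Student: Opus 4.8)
The plan is to verify the two requirements in the definition of double monotonicity separately: monotonicity in the iteration index $s$ (together with convergence $\mseq{r}{s}(t)\uparrow M_r(t)$), and monotonicity in the truncation index $r$. Both reduce to elementary properties of the matrices $P_r = I + Q_r/(-\bar q)$, so the first thing I would record is that every $P_r$ has non-negative entries: the off-diagonal entries are $(Q_r)_{x,y}/(-\bar q)\ge 0$, while each diagonal entry equals $1 + (Q_r)_{x,x}/(-\bar q)$, which lies in $[0,1]$ because $(Q_r)_{x,x}\in[\bar q,0]$ and $-\bar q>0$. Consequently every power $P_r^n$ is entrywise non-negative.

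For monotonicity in $s$, I would write $\mseq{r}{s+1}(t) - \mseq{r}{s}(t) = e^{\bar q t}\frac{(-\bar q t)^{s+1}}{(s+1)!}P_r^{s+1}$, which is entrywise non-negative since $e^{\bar q t}>0$, $(-\bar q t)^{s+1}\ge 0$ (as $\bar q \le 0$ and $t\ge 0$), and $P_r^{s+1}\ge 0$. Convergence to $M_r(t)=e^{tQ_r}$ then follows from the uniformization identity $e^{tQ_r}=e^{\bar q t(I-P_r)}=e^{\bar q t}\sum_{n\ge 0}\frac{(-\bar q t)^n}{n!}P_r^n$ (the same identity used for the full $Q$, now applied to the finite block $Q_r$), so $\mseq{r}{s}(t)$ is a partial sum of a series of non-negative terms increasing to $M_r(t)$.

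The substantive claim is monotonicity in $r$, i.e. $\mseq{r}{s}(t)\le \mseq{r+1}{s}(t)$. Because the weights $e^{\bar q t}(-\bar q t)^n/n!$ are non-negative, it suffices to prove the entrywise comparison $[P_r^n]_{x,y}\le[P_{r+1}^n]_{x,y}$ for every $n$ and every relevant entry, namely $x\in\statespace_r$ (which covers all entries used by the estimators, since the observed endpoints lie in $\statespace_0\subseteq\statespace_r$). The hard part is that one cannot simply iterate an inequality $P_r\le P_{r+1}$: that comparison is actually false, because for a newly added state $z\in\statespace_{r+1}\setminus\statespace_r$ the diagonal drops from $(P_r)_{z,z}=1$ (the state is isolated in $Q_r$) to $(P_{r+1})_{z,z}=1+q_{z,z}/(-\bar q)<1$.

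To get around this I would use the path-sum representation $[P_r^n]_{x,y}=\sum_{x=z_0,\dots,z_n=y}\prod_{j=0}^{n-1}(P_r)_{z_j,z_{j+1}}$ and exploit two structural facts. First, starting from $x\in\statespace_r$ every $P_r$-path with nonzero weight stays inside $\statespace_r$, since from a state in $\statespace_r$ the only nonzero transitions of $P_r$ are the self-loop and the off-diagonal moves into $\statespace_r$. Second, on $\statespace_r\times\statespace_r$ the matrices $P_r$ and $P_{r+1}$ coincide (both off-diagonal entries equal $q_{\cdot,\cdot}/(-\bar q)$ and both diagonal entries equal $1+q_{\cdot,\cdot}/(-\bar q)$, as $\statespace_r\subseteq\statespace_{r+1}$). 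Hence $[P_r^n]_{x,y}$ equals the sum of $\prod(P_{r+1})_{z_j,z_{j+1}}$ over paths confined to $\statespace_r$, while $[P_{r+1}^n]_{x,y}$ is the analogous sum over the larger set of paths confined to $\statespace_{r+1}\supseteq\statespace_r$; since all summands are non-negative, enlarging the admissible set only adds terms, giving $[P_r^n]_{x,y}\le[P_{r+1}^n]_{x,y}$ and completing the argument. Equivalently, $[P_r^n]_{x,y}$ is the $n$-step probability that the uniformized jump chain travels from $x$ to $y$ without leaving $\statespace_r$, which is manifestly non-decreasing as the forbidden set $\statespace\setminus\statespace_r$ shrinks --- the discrete-time analogue of the restricted-probability interpretation in \cref{prop:matexp_nonconservative_rate_mat}.
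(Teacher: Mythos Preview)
Your proof is correct and follows essentially the same route as the paper: both reduce the $r$-monotonicity to the entrywise inequality $[P_r^n]_{x,y}\le[P_{r+1}^n]_{x,y}$ for $x,y\in\statespace_r$, using that $P_r$ and $P_{r+1}$ agree on $\statespace_r\times\statespace_r$ and that the extra states only contribute non-negative terms. The paper phrases this as an induction on $n$ (split the sum over $z\in\statespace_{r+1}$, drop the summands with $z\notin\statespace_r$, apply the hypothesis), whereas you unroll the same induction into a path-sum comparison; these are the same argument in two equivalent presentations, and your added observation that the naive global inequality $P_r\le P_{r+1}$ fails on the new diagonal entries is a nice clarification of why one must restrict to $x,y\in\statespace_r$.
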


We now study the error incurred by uniformization. Let
$\|A\|_\infty$ be the $\ell_\infty$ operator norm of the matrix $A$,
\begin{equation}\label{eq:def_norm_ell_infty}
\|A\|_\infty := \sup_{x\in \statespace}\sum_{y\in\statespace} |A_{x,y}|.
\end{equation}
The $\ell_\infty$ error incurred by uniformization admits a simple bound.

\begin{proposition}\label{prop:unif_ell_infty_error}
Consider a possibly non-conservative rate matrix $Q$ with associated state space $\statespace$. Suppose there exist $\bar{q}\in(-\infty,0)$ such that $\bar{q} \leq \inf_{x\in \statespace} q_{x,x}$. Let $\lambda:=(-\bar{q})t$ and $F(n;\lambda)$ be the cumulative distribution function of a $\mathrm{Poisson}(\lambda)$ distribution. Then,
\begin{equation}\label{eq:unif_ell_infty_error}
\|M(t) - M^{(s)}(t)\|_\infty \leq 1-F(s;\lambda).
\end{equation}
\end{proposition}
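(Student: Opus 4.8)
The plan is to write the approximation error exactly as the tail of the uniformization series and then bound each term using the fact that $P := I + Q/(-\bar{q})$ is \emph{sub-stochastic}. Subtracting \cref{eq:uniformization_approx_sol} from the series representation of $M(t)$ used to define it gives
\[
M(t) - M^{(s)}(t) = \sum_{n=s+1}^\infty e^{\bar{q}t}\frac{(-\bar{q}t)^n}{n!}P^n,
\]
so by subadditivity of the operator norm (justified here because every $P^n$ is a non-negative matrix, so the norm-series converges),
\[
\|M(t)-M^{(s)}(t)\|_\infty \leq \sum_{n=s+1}^\infty e^{\bar{q}t}\frac{(-\bar{q}t)^n}{n!}\,\|P^n\|_\infty.
\]

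The crux is to establish $\|P\|_\infty \leq 1$, which I would do by verifying that $P$ has non-negative entries and row sums at most one. For the off-diagonal entries, $P_{x,y}=q_{x,y}/(-\bar{q})\geq 0$ since $q_{x,y}\geq 0$ and $-\bar{q}>0$. For the diagonal, $P_{x,x}=1+q_{x,x}/(-\bar{q})$; because $\bar{q}\leq\inf_{x}q_{x,x}$ we have $q_{x,x}/(-\bar{q})\geq -1$, hence $P_{x,x}\geq 0$. For the row sums, $\sum_y P_{x,y}=1+\big(\sum_y q_{x,y}\big)/(-\bar{q})\leq 1$, using that each row of $Q$ sums to a non-positive value---automatic for conservative $Q$, and valid for the non-conservative truncations of \cref{prop:monotone_trunc} because those only ever \emph{remove} off-diagonal rates while keeping the diagonal. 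Thus $P$ is sub-stochastic, $\|P\|_\infty=\sup_x\sum_y P_{x,y}\leq 1$, and by submultiplicativity of the induced $\ell_\infty$ norm, $\|P^n\|_\infty\leq\|P\|_\infty^n\leq 1$.

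Substituting this bound and writing $\lambda=(-\bar{q})t$, so that $e^{\bar{q}t}(-\bar{q}t)^n/n!=e^{-\lambda}\lambda^n/n!$, the remaining sum is exactly a Poisson tail probability,
\[
\|M(t)-M^{(s)}(t)\|_\infty \leq \sum_{n=s+1}^\infty e^{-\lambda}\frac{\lambda^n}{n!} = \P(\mathrm{Poisson}(\lambda)>s) = 1-F(s;\lambda),
\]
which is the claim. The only genuine subtlety is handling the non-conservative case: one must confirm that $P$ stays sub-stochastic even when $Q$ is a truncation. This reduces precisely to the two facts checked above---non-negativity of the diagonal of $P$, where the hypothesis $\bar{q}\leq\inf_x q_{x,x}$ is used, and non-positivity of the row sums of $Q$. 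Everything else is routine manipulation of a non-negative series together with submultiplicativity of the matrix norm.
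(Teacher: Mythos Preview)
Your proof is correct and follows essentially the same route as the paper's: both identify the error as the tail of the uniformization series and bound it by the Poisson tail using sub-stochasticity of $P$ (and hence of $P^n$). The only cosmetic difference is that the paper works directly with row sums and invokes Tonelli to exchange the sum over $y$ with the sum over $n$, whereas you phrase the same step via the triangle inequality and submultiplicativity of the $\ell_\infty$ operator norm; unpacked, the two computations coincide.
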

Given any $\epsilon>0$, we can invert \cref{eq:unif_ell_infty_error} to find $s\in\Z_+$ such that the error is less than $\epsilon$. Indeed, let
\[
F^{-1}(p;\lambda) := \inf\{s\in\Z_+: p\leq F(s;\lambda)\}
\]
be the quantile function of a $\text{Poisson}(\lambda)$
distribution. Then, setting
\begin{equation}\label{eq:def_seq_approx_sol_poisson_quantile}
s = F^{-1}(1-\epsilon;\lambda),
\end{equation}
guarantees that the $\ell_\infty$ error is bounded by $\epsilon$. In particular, note that by imposing $\epsilon=\epsilon(k)=e^{-\kappa k}$, for any $\kappa>0$, the second requirement in \cref{prop:doublymonotonevariance} is satisfied.

For the purposes of pseudo-marginal inference for CTMCs (\cref{subsec:doubly_monotone}), the uniformization approach presented so far---which we call \emph{global uniformization}---assumes that the rate matrix $Q$ for the infinite space $\statespace$ satisfies $\inf_{x\in \statespace}q_{x,x}>-\infty$. Although appealing due to its double monotonicity, the global uniformization assumption does not hold in many infinite state space CTMCs of interest---like the ones included in our experiments (\cref{sec:experiments}), and in particular in the running example of the SSIR model. 

\emph{Sequential uniformization} is an alternative approach that is applicable to CTMCs that are not uniformizable; i.e., models for which $\inf_{x\in \statespace}q_{x,x}=-\infty$. Here, a decreasing sequence of lower bounds $\bar{q}_r := \inf_{x\in \statespace_r}q_{x,x}$ is computed for every truncated rate matrix $Q_r$. The corresponding double sequence of approximations becomes
\begin{equation}\label{eq:uniformization_seq_double_seq}
\forall r,s \in \Z_+: M_r^{(s)}(t) := \sum_{n=0}^s e^{\bar{q}_r t} \frac{(-\bar{q}_rt)^n}{n!} \tilde P_r^n,
\end{equation}
where $\tilde P_r:=I+Q_r/(-\bar{q_r})$.
For every $r\in\Z_+$, $t\geq0$, and $x,y\in\statespace_r$, $\{\mseq{r}{s}(t)_{x,y}\}_{s\in\Z_+}$ is increasing. However, the sequential approach is not doubly monotonic. Consider the following  counterexample. Let $(x,x,1)$ be the observation of interest, so that the base state space is a singleton $\statespace_0=\{x\}$ (see \cref{subsec:designing_sequences_SS}). Suppose further that the associated truncated rate matrix is the one-by-one matrix $Q_0=(-1)$.  This yields $(\mseq{0}{0}(1))_{1,1} = e^{-1}$.  Now suppose that $\statespace_1$ has an additional state, and that
\[
Q_1 = 
\begin{pmatrix}
-1 & 0 \\
0 & -10
\end{pmatrix}.
\]
Then, the base sequential uniformization approximation gives
\[
\mseq{1}{0} = 
\begin{pmatrix}
e^{-10} & 0 \\
0 & e^{-10}
\end{pmatrix}.
\]
Thus, $(\mseq{1}{0}(1))_{1,1}<(\mseq{0}{0}(1))_{1,1}$, so sequential uniformization is not doubly monotonic. 

In practice, situations that break double monotonicity for the sequential approach are rare. And, for observations $(x_{i-1},x_i,\Delta t_i)$ that do exhibit this phenomenon, we can always set the values of the sequence $k_i(n)$ high enough so that---up to numerical accuracy---we effectively fall back to using the IE estimator (which does not require double monotonicity). As we show in \cref{sec:experiments}, this implementation of sequential uniformization works well for models with low $\bar{q}_r$ values. As a consequence, we take uniformization to mean sequential uniformization unless mentioned otherwise.

Besides the issue with double monotonicity, an important drawback of uniformization is that it is inefficient when
$\lambda=-\bar{q}t$ is large. 
In particular, for fixed $\epsilon$, the $\mathrm{Poisson}(\lambda)$ quantile function
behaves like $F^{-1}(1-\epsilon,\lambda) = O(\lambda)$ 
\citep{giles2016} and so we must compute $s=O(\lambda)$ terms in
\cref{eq:uniformization_approx_sol} if we wish to have $\ell_\infty$ error below $\epsilon$. Therefore uniformization
can be prohibitively slow in cases with large $\lambda$, such as the
Schl{\"o}gl model (described in \cref{sec:experiments}). 

Finally, we note that a naive implementation of
 \cref{eq:uniformization_approx_sol} is prone to
numerical instability at high values of $\lambda$. In our experiments we use the
more robust algorithm described in \cite{sherlock2018direct}.

\subsubsection{The skeletoid}\label{subsec:Skeletoid}

In the following we describe a novel monotone
approximation to the matrix exponential---which we 
refer to as the \textit{skeletoid}---that 
is doubly monotonic and requires only $O(\log\lambda)$ computation, therefore excelling in the large $\lambda$ setting. 

For any $\delta>0$, let $S(\delta)$ be a matrix with entries $x,y\in\statespace$ defined by
\begin{equation}\label{eq:def_S_delta}
(S(\delta))_{x,y} := 
\bracearraycond{e^{q_{x,x} \delta} & \text{if }x = y, \\
q_{x,y}\delta e^{q_{x,x}\delta} & \text{if }q_{x,x}=q_{y,y}, \\
	q_{x,y} \frac{ e^{q_{y,y} \delta} - e^{q_{x,x} \delta}}{q_{y,y} - q_{x,x}} &\text{otherwise.}} 
\end{equation}
The \emph{skeletoid approximation} to the transition function is given by
\begin{equation}\label{eq:Skeletoid_approx_sol}
M^{(s)}(t) := S(t2^{-s})^{2^s}, \hspace{1em} s\in\Z_+.
\end{equation}
Note that, once $S(t2^{-s})$ is computed, its $2^s$-th power can be evaluated using only $s$ matrix-matrix multiplications via repeated squaring. Indeed, for $l=0,1,\dots,s-1$, set
\[
S(t2^{-s})^{2^{l+1}} = \left[ S(t2^{-s})^{2^l} \right]^2.
\]
\cref{prop:Skeletoid_monotonicity} shows that $S(t2^{-s})^{2^s}$ defines a novel 
monotone increasing sequence of
approximations to the transition function of a CTMC.

\begin{proposition}\label{prop:Skeletoid_monotonicity}
If $Q$ is non-explosive, then for all $x,y \in \statespace$ and $t\geq 0$,
\begin{equation}\label{eq:Skeletoid_monotonicity}
\left( S(t2^{-s})^{2^s}\right)_{x,y} \uparrow (M(t))_{x,y},\;\text{ as }s\to \infty.
\end{equation}
\end{proposition}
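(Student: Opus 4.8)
The plan is to give the matrix $S(\delta)$ a probabilistic meaning and then read off both monotonicity and convergence from a nested family of path events. First I would show that $S(\delta)_{x,y}$ equals the probability that the CTMC started at $x$ sits at $y$ at time $\delta$ having made \emph{at most one jump} on $[0,\delta]$. For $x=y$ this is the no-jump probability, i.e.\ the probability $e^{q_{x,x}\delta}$ that the holding time at $x$ exceeds $\delta$, matching the first branch of \eqref{eq:def_S_delta}. For $x\neq y$ it is the exactly-one-jump probability, which by conditioning on the jump time $u$ (holding at $x$ up to $u$, with jump-time density $-q_{x,x}\,e^{q_{x,x}u}$ and jump to $y$ with probability $q_{x,y}/(-q_{x,x})$, then holding at $y$ on $[u,\delta]$) equals $\int_0^\delta q_{x,y}\,e^{q_{x,x}u}\,e^{q_{y,y}(\delta-u)}\,\mathrm{d}u$. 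Evaluating this integral reproduces the second branch when $q_{x,x}=q_{y,y}$ and the third branch otherwise. Carrying out the integral cleanly and matching all three cases is the one genuinely computational step.

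Next I would express the matrix power probabilistically. Fix $s$, write $\delta=t2^{-s}$, and let $A_s$ be the event that the path makes at most one jump in each of the $2^s$ dyadic subintervals $[jt2^{-s},(j+1)t2^{-s}]$. Combining the first step with the Markov property at the grid times $jt2^{-s}$, the probability of a prescribed sequence of grid states together with at most one jump per subinterval factorizes into a product of $S(\delta)$ entries; summing over the intermediate grid states yields
\[
\bigl(S(t2^{-s})^{2^s}\bigr)_{x,y} = \P_x\!\left(X(t)=y,\; A_s\right).
\]
The care here lies in justifying the factorization: conditionally on the grid states the restrictions of the CTMC to the subintervals are independent, and the event ``at most one jump'' on a subinterval depends only on that piece of the path.

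With this identity in hand, both claims are structural. Each level-$s$ subinterval is the union of two level-$(s+1)$ subintervals, so a path with at most one jump in the coarse interval necessarily has at most one jump in each half; hence $A_s\subseteq A_{s+1}$, and intersecting with $\{X(t)=y\}$ shows that the sequence in \eqref{eq:Skeletoid_monotonicity} is nondecreasing. For the limit, non-explosivity of $Q$ guarantees that on $[0,t]$ the path makes finitely many jumps almost surely, at distinct times; once $t2^{-s}$ is smaller than the minimal gap between consecutive jump times (and from the endpoints), every subinterval contains at most one jump, so $\{N(0,t]<\infty\}\subseteq\bigcup_s A_s$ and therefore $\P_x\!\left(\bigcup_s A_s\right)=1$. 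Monotone convergence, applied to the increasing events $\{X(t)=y\}\cap A_s$, then gives $\P_x(X(t)=y,A_s)\uparrow \P_x(X(t)=y)=M(t)_{x,y}$. The main obstacle is not these limiting arguments but establishing the two representations rigorously—especially the Markov-property factorization underpinning the matrix-power identity—after which the conclusion follows purely from the nesting of the events $A_s$ together with non-explosivity.
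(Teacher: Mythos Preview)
Your proposal is correct and follows essentially the same route as the paper: the paper first establishes $S(\delta)_{x,y}=\P_x(X(\delta)=y,\ N_\delta\le 1)$ (your first step), then proves the matrix-power identity $[S(\delta)^{2^s}]_{x,y}=\P_x(X(t)=y,\ A_s)$ via the Markov property (your second step), shows $\{A_s\}$ is increasing, and uses non-explosivity to get $\P_x(\cup_s A_s)=1$ by the same ``minimal inter-arrival time is positive'' argument you describe, concluding by continuity from below. The only cosmetic difference is that the paper packages the factorization step as an induction on the number of factors rather than appealing directly to conditional independence of path segments given grid states, but the substance is identical.
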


The proof for \cref{prop:Skeletoid_monotonicity} can be found in \cref{app:proofs}. 
It is important to 
highlight that the non-explosivity condition of \cref{prop:Skeletoid_monotonicity} 
is qualitatively less stringent than the assumption that $\inf_x q_{x,x}>-\infty$, which is required by the global uniformization approach.

Additionally, for $r\in\Z_+$, let $S_r(\delta)$ be the matrix constructed from \cref{eq:def_S_delta} using the truncated rate matrix $Q_r$. Let
\begin{equation}\label{eq:Skeletoid_double_seq}
\forall r,s \in \Z_+, t\geq0: M_r^{(s)}(t) := S_r(t2^{-s})^{2^s}
\end{equation}
be the collection of approximations induced by the skeletoid method. \cref{prop:doubly_monotone_skeletoid} shows that this collection is doubly monotonic.

\begin{proposition}\label{prop:doubly_monotone_skeletoid}
The skeletoid double sequence in \cref{eq:Skeletoid_double_seq} is doubly monotone.
\end{proposition}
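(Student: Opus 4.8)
The plan is to verify the two properties packaged in the definition of double monotonicity: (i) for each fixed $r$ the skeletoid iterates increase to the truncated transition matrix, $\mseq{r}{s}(t)_{x,y}\uparrow M_r(t)_{x,y}$ as $s\to\infty$; and (ii) for each fixed $s$ they increase in the truncation level, $\mseq{r}{s}(t)_{x,y}\le\mseq{r+1}{s}(t)_{x,y}$. Property (i) is immediate from \cref{prop:Skeletoid_monotonicity}: each truncated rate matrix $Q_r$ is supported on the finite set $\statespace_r$ and is therefore non-explosive, so applying that proposition with $Q$ replaced by $Q_r$ gives $S_r(t2^{-s})^{2^s}\uparrow M_r(t)=e^{tQ_r}$ entrywise. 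Hence all the novelty lies in (ii), the monotonicity in $r$, which I establish for the entries $x,y\in\statespace_r$ -- equivalently, comparing the two approximations on their common, nested index set. These are exactly the entries the estimators query, since the endpoints $x_{i-1},x_i$ lie in $\statespace_0\subseteq\statespace_r$ for every $r$.

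First I would record three structural facts about $S_r(\delta)$, where $\delta:=t2^{-s}$. (a) Every entry of $S_r(\delta)$ is non-negative: the diagonal entries are $e^{q_{x,x}\delta}>0$, and each off-diagonal entry equals $q_{x,y}\ge 0$ times a strictly positive factor -- either $\delta e^{q_{x,x}\delta}$, or the divided difference $(e^{q_{y,y}\delta}-e^{q_{x,x}\delta})/(q_{y,y}-q_{x,x})$, which is positive because $u\mapsto e^{u\delta}$ is increasing (numerator and denominator carry the same sign). (b) For $x,y\in\statespace_r$ the two matrices agree, $(S_r(\delta))_{x,y}=(S_{r+1}(\delta))_{x,y}$, because truncation preserves the original entries on $\statespace_r\times\statespace_r$, so the quantities $q_{x,x},q_{y,y},q_{x,y}$ feeding \cref{eq:def_S_delta} coincide for $Q_r$ and $Q_{r+1}$. (c) $S_r(\delta)$ is block diagonal for the partition $\statespace=\statespace_r\sqcup(\statespace\setminus\statespace_r)$: whenever exactly one of $x,y$ lies outside $\statespace_r$ the off-diagonal rate $(Q_r)_{x,y}$ vanishes, so $(S_r(\delta))_{x,y}=0$.

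The comparison (ii) then follows from a path-sum representation of the matrix power,
\[
(S_r(\delta)^{2^s})_{x,y}=\sum_{z_0=x,\dots,z_{2^s}=y}\,\prod_{l=0}^{2^s-1}(S_r(\delta))_{z_l,z_{l+1}}.
\]
By fact (c), a path starting at $x\in\statespace_r$ contributes a nonzero product only if it never crosses the boundary of $\statespace_r$; hence for $x,y\in\statespace_r$ every contributing path stays inside $\statespace_r$. Along such a path fact (b) lets me replace each $S_r(\delta)$ factor by the corresponding $S_{r+1}(\delta)$ factor, so $(S_r(\delta)^{2^s})_{x,y}$ equals the sum of $S_{r+1}(\delta)$-weights over paths confined to $\statespace_r$. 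Since $\statespace_r\subseteq\statespace_{r+1}$ and all weights are non-negative by fact (a), this is bounded above by the same sum over paths confined to $\statespace_{r+1}$, which is exactly $(S_{r+1}(\delta)^{2^s})_{x,y}$. This yields $\mseq{r}{s}(t)_{x,y}\le\mseq{r+1}{s}(t)_{x,y}$ and completes (ii). (An induction on the exponent, peeling off one factor at a time and using (b) and (c) together with the non-negativity (a), gives the same conclusion and avoids writing out the full path sum.)

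The main obstacle is that the seemingly natural shortcut fails: one cannot simply argue $S_r(\delta)\le S_{r+1}(\delta)$ entrywise and then invoke monotonicity of powers of non-negative matrices. A state $x\in\statespace_{r+1}\setminus\statespace_r$ is frozen (absorbing) under $S_r(\delta)$, giving diagonal value $1$, whereas under $S_{r+1}(\delta)$ it acquires genuine dynamics and its diagonal value $e^{q_{x,x}\delta}<1$ is strictly smaller; so the full-matrix inequality is false precisely on the diagonal entries of newly admitted states. The block-diagonal structure of fact (c) is exactly what repairs the argument: it confines the relevant paths to $\statespace_r$, where $S_r(\delta)$ and $S_{r+1}(\delta)$ coincide, so entrywise domination is recovered on the entries $x,y\in\statespace_r$ -- which is all that is needed to feed the collection \cref{eq:Skeletoid_double_seq} into \cref{prop:doubly_monotone_sequences} one entry at a time.
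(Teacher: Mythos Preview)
Your proof is correct and follows essentially the same route as the paper. The paper establishes the $r$-monotonicity by induction on $s$ (writing $S_{r+1}(t2^{-(s+1)})^{2^{s+1}}$ as a product of two $2^s$-th powers, dropping the non-negative summands over $\statespace_{r+1}\setminus\statespace_r$, and applying the hypothesis at the halved time $t'=t/2$), which is exactly the inductive alternative you flag in your closing parenthetical; your main path-sum argument is simply that induction unrolled, resting on the same two ingredients---agreement of $S_r(\delta)$ and $S_{r+1}(\delta)$ on $\statespace_r\times\statespace_r$ and non-negativity of all entries.
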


In order to find an $s\in\Z_+$ that achieves an $\ell_\infty$ error lower than any given $\epsilon>0$, we require an error bound that admits a straightforward
inverse. The following proposition gives such a bound.

\begin{proposition}\label{prop:Skeletoid_err_bound}
If $\bar{q}:=\inf_x q_{x,x}>-\infty$, then for all $s\in\Z_+$
\begin{equation}\label{eq:Skeletoid_error_upper_bound}
\left\|M(t) - S(t2^{-s})^{2^s}\right\|_\infty \leq (\bar{q} t)^22^{-(s+1)} + O((\bar{q} t)^32^{-2s}).
\end{equation}
\end{proposition}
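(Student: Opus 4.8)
The plan is to reduce the global error over $[0,t]$ to the error of a single skeletoid step, and then to bound that local error through a probabilistic reading of $S(\delta)$. Write $\delta := t2^{-s}$ and $N := 2^s$. Since the hypothesis $\bar q > -\infty$ forces non-explosivity, the transition function obeys the Chapman--Kolmogorov semigroup identity $M(t) = M(\delta)^{N}$, so the target is $\|M(\delta)^N - S(\delta)^N\|_\infty$. Setting $A := M(\delta)$ and $B := S(\delta)$, I would invoke the telescoping identity $A^N - B^N = \sum_{k=0}^{N-1} A^{k}(A-B)B^{N-1-k}$ together with submultiplicativity of the $\ell_\infty$ operator norm. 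Both $M(\delta)$ and $S(\delta)$ have non-negative entries and row sums at most one, so $\|A\|_\infty \le 1$ and $\|B\|_\infty \le 1$, and the telescoping bound collapses to $\|M(\delta)^N - S(\delta)^N\|_\infty \le N\,\|M(\delta) - S(\delta)\|_\infty$.

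The crux is to pin down the per-step error exactly. The key observation is that $S(\delta)$ is precisely the \emph{at-most-one-jump} restriction of the transition function: a direct integration shows that for $x \ne y$ one has $S(\delta)_{x,y} = \int_0^\delta e^{q_{x,x}u}\, q_{x,y}\, e^{q_{y,y}(\delta-u)}\,du$, the probability of leaving $x$ exactly once, landing on $y$, and remaining there, while $S(\delta)_{x,x} = e^{q_{x,x}\delta}$ is the no-jump probability. Partitioning $M(\delta)_{x,y} = \P_x(X(\delta)=y)$ by the number of jumps then gives $(M(\delta) - S(\delta))_{x,y} = \P_x(X(\delta)=y,\ N(0,\delta] \ge 2) \ge 0$. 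Because the error matrix is entry-wise non-negative, its $\ell_\infty$ norm is the supremum of its row sums, and summing over the endpoint $y$ collapses the event to $\|M(\delta) - S(\delta)\|_\infty = \sup_{x\in\statespace} \P_x(N(0,\delta] \ge 2)$.

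It then remains to bound this two-or-more-jumps probability. Under $\bar q = \inf_x q_{x,x} > -\infty$, uniformization dominates the jump counter $N(0,\delta]$ by a $\mathrm{Poisson}(-\bar q\delta)$ variable $\bar N(0,\delta]$, so $\P_x(N(0,\delta] \ge 2) \le \P(\bar N(0,\delta]\ge 2) = 1 - (1-\bar q\delta)e^{\bar q \delta}$. Expanding in $\mu := -\bar q\delta$ yields $\tfrac{1}{2}(\bar q\delta)^2 + O((\bar q\delta)^3)$. Substituting $\delta = t2^{-s}$ and multiplying by $N = 2^s$ from the telescoping bound gives $2^s\big[\tfrac12(\bar q t)^2 2^{-2s} + O((\bar q t)^3 2^{-3s})\big] = (\bar q t)^2 2^{-(s+1)} + O((\bar q t)^3 2^{-2s})$, which is the claimed estimate.

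I expect the main obstacle to be the per-step identification rather than the asymptotics: one must check carefully that $S(\delta)$ captures exactly the zero- and one-jump contributions to $M(\delta)$ --- including the subtle point that a return to the starting state requires at least two jumps, so that $S(\delta)_{x,x}$ correctly omits such returns --- and that the resulting error matrix is entry-wise non-negative, since this is what converts the operator norm into a probability and lets the sum over $y$ telescope. Once that identity is in hand, the remaining ingredients (submultiplicativity of $\|\cdot\|_\infty$ on these substochastic matrices, Poisson domination via uniformization, and the Taylor expansion of $1-(1+\mu)e^{-\mu}$) are routine.
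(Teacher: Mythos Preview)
Your proof is correct and reaches the same bound, but the route differs from the paper's. The paper does not telescope; instead it establishes (in a separate lemma) that $[S(\delta)^{2^s}]_{x,y}$ is itself the probability $\P_x(\{X(t)=y\}\cap A_s)$, where $A_s$ is the event that every one of the $2^s$ subintervals of $[0,t]$ sees at most one jump. The global error is then $\P_x(X(t)=y,\,A_s^c)$, whose row sum is bounded by $\P_x(A_s^c)$; after uniformization the bins become independent Poisson, so this probability is computed \emph{exactly} as $1-e^{\bar q t}(1-\bar q t\,2^{-s})^{2^s}$, and the claimed bound follows from the expansion $e^{-x}(1+x/n)^n=1-x^2/(2n)+O(x^3/n^2)$.

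Your argument replaces this direct probabilistic computation by the numerical-analysis idiom of telescoping plus submultiplicativity, reducing the global error to $2^s$ times the one-step error. In effect you are trading the exact product $1-\prod_l\P(\tilde N_{l,s}\le 1)$ for the union-bound-type estimate $2^s\,\P(\tilde N_{0,s}\ge 2)$; the two agree to the order stated in the proposition. Your route is more portable---it needs only the one-step identification of $S(\delta)$ (which you correctly flag as the crux) and not the Chapman--Kolmogorov-type product interpretation of $S(\delta)^{2^s}$, and it would apply verbatim to any substochastic one-step scheme---whereas the paper's route is shorter once that lemma is available and yields a closed-form bound before expansion. One minor nit: your equality $\|M(\delta)-S(\delta)\|_\infty=\sup_x\P_x(N(0,\delta]\ge 2)$ should in general be $\le$ (equality requires $Q$ conservative), but only the upper bound is used so the argument is unaffected.
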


Discarding the higher-order terms in \cref{eq:Skeletoid_error_upper_bound} and solving for $s$ yields the rule
\begin{equation}\label{eq:def_seq_approx_sol_Skeletoid}
s = \left\lceil\log_2\left(\frac{(\bar{q}t)^2}{2\epsilon}\right)\right\rceil.
\end{equation}
\cref{eq:def_seq_approx_sol_Skeletoid} shows that the
cost of achieving a given error is $O(\log\lambda)$ 
(with $\lambda:=(-\bar{q})t$ as before), which explains the differences in
performance between the skeletoid and uniformization methods in the large $\lambda$ setting. Also, and notwithstanding the fact that \cref{eq:Skeletoid_error_upper_bound} already implies exponential decay of the error as $2^{-s}$, we can achieve any other rate of decay $\kappa>0$ by setting $\epsilon=\epsilon(k)=e^{-\kappa k}$ in \cref{eq:def_seq_approx_sol_Skeletoid} (c.f. \cref{prop:doublymonotonevariance}).

Finally, we note that the skeletoid approximation is related to the ``scaling and squaring'' 
method, a common trick used as part of various algorithms for computing the
matrix exponential \citep{moler_nineteen_2003}. Scaling and squaring reduces the problem of
computing $e^{tQ}$ to that of obtaining the transition function $e^{\delta Q}$
of the $\delta$-skeleton---with $\delta:=t2^{-s}$---by applying the identity 
$e^{tQ} = (e^{\delta Q})^{2^s}$ (a consequence of Chapman-Kolmogorov). The key difference here is that $S(\delta)$ is a specific computationally cheap 
approximation to the $\delta$-skeleton solution $e^{\delta Q}$---hence  the
name ``skeletoid.'' This specific choice is critical to establishing double monotonicity in a general non-explosive setting,  making the skeletoid approximation particularly suitable to pseudo-marginal methods.

\subsubsection{Performance comparisons}

\begin{figure}[t]
\centering
\includegraphics[width=\textwidth]{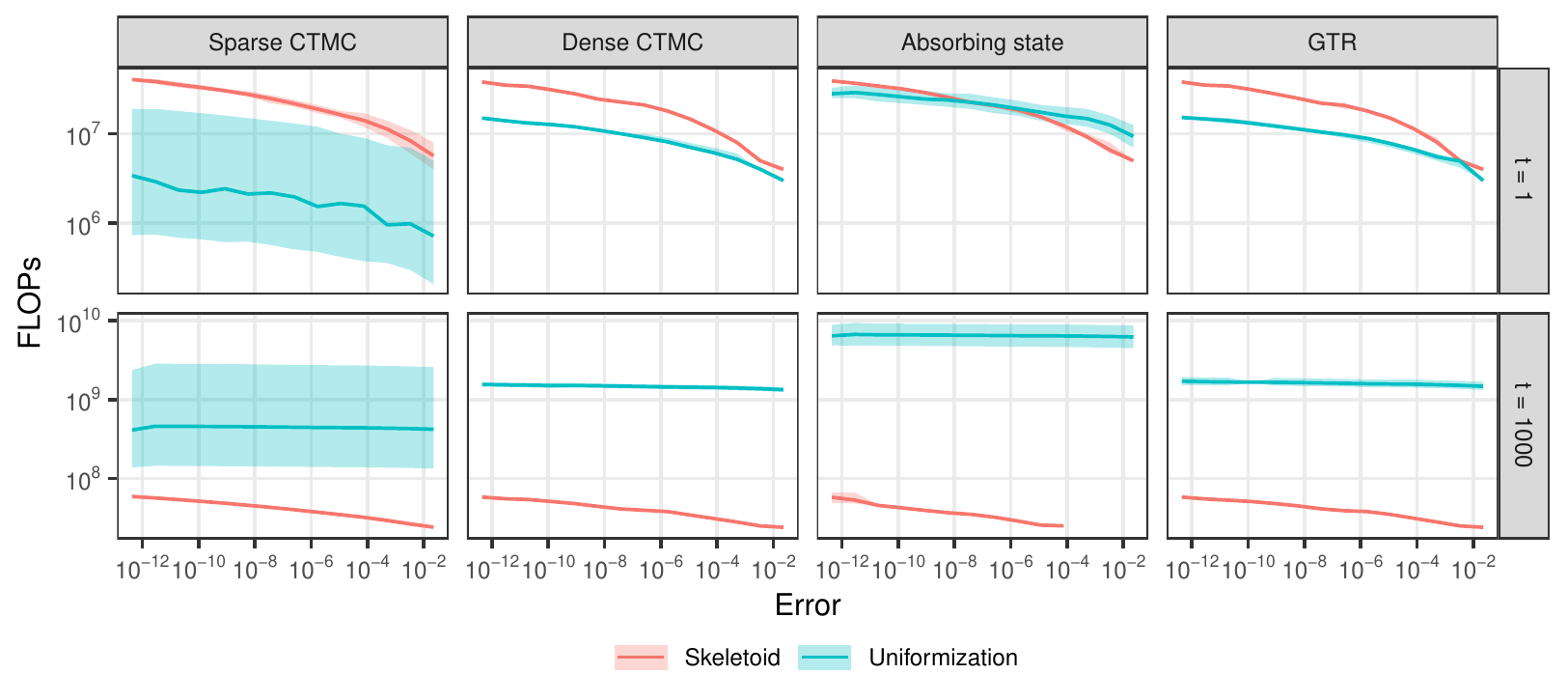}
\caption{FLOPs associated to matrix multiplications needed to compute approximate solutions to the transition function for random rate matrices of size 100, versus the $\ell_\infty$ error incurred. Rows represent increasing time exponents ($t$ in $e^{tQ}$). Lines show mean error over 100 repetitions of a given combination, while bands give a 95\% coverage range. The mean rate of each random rate matrix is normalized to 1.}
\label{fig:solvers_speed_vs_acc_expm_100}
\end{figure}

\cref{fig:solvers_speed_vs_acc_expm_100} shows the performance of
uniformization and skeletoid in approximating the exponential of random
rate matrices $Q$ drawn at random from four qualitatively different classes of CTMCs. In
the sparse class, for each row we sample a random integer giving the number of non-zero
entries, and then sample the positions of these elements in each row at
random. In contrast, matrices in the dense clase have only non-zero entries. 
In the absorbing state class we make the first state absorbing, meaning that
$q_{1j}=0$ for all $j$; then, we draw a random dense rate sub-matrix for the
remaining states, and finally complete the first column to satisfy the
conservative property. Lastly, in the General Time Reversible (GTR) class we
simulate matrices satisfying detailed balance: $\mu_x q_{x,y}=\mu_yq_{y,x}$ for
all $x,y$, for some probability vector $\mu$. In all these
classes, the non-zero off-diagonal elements of the matrices are drawn from
$\mathrm{Exp}(1)$ distributions; then, the diagonal is set to enforce the conservative
property, and finally the matrix is scaled to achieve a mean rate of one. 
Note that the skeletoid is much faster than uniformization in
the $t=1000$ setting, across all four CTMC classes, whereas in the $t=1$ setting, the ranking of the two methods depend on the computational budget.

\begin{figure}[t]
\centering
\includegraphics[width=\textwidth]{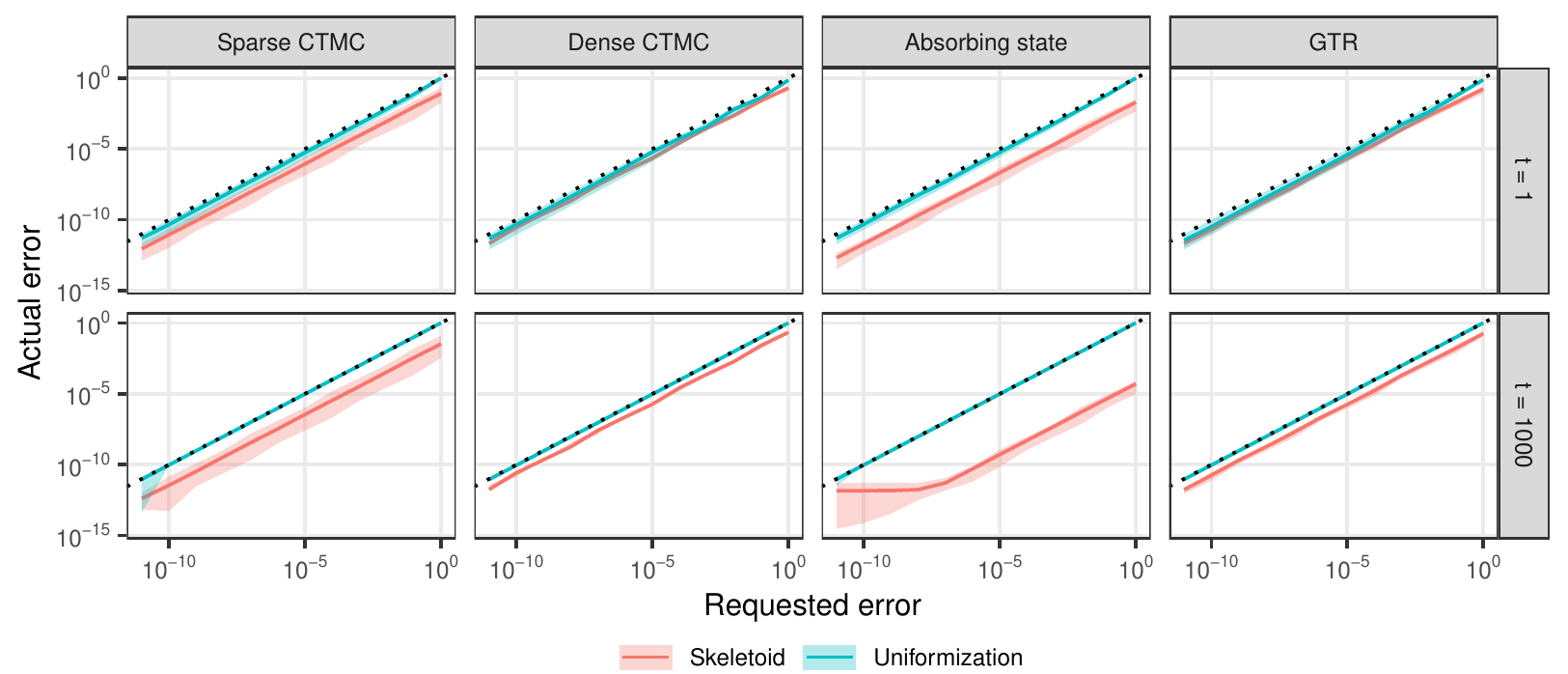}
\caption{Realized $\ell_\infty$ error versus requested error for the two algorithms when applied to random rate matrices of size 100. Rows represent increasing time exponents ($t$ in $e^{tQ}$). Lines show mean error over 100 repetitions of a given combination, while bands denote the range of the error. The dotted diagonal line represents ideal behavior. The mean rate of each rate matrix is normalized to 1.}
\label{fig:check_auto_tune}
\end{figure}

\cref{fig:check_auto_tune} shows the result of using \cref{eq:def_seq_approx_sol_poisson_quantile,eq:def_seq_approx_sol_Skeletoid} to produce approximations with a given error bound. We see a high
level of agreement between requested and realized errors in most of the
settings, with a bias towards more conservative outcomes when disagreement
occurs.

Notwithstanding the usefulness of \cref{eq:unif_ell_infty_error,eq:Skeletoid_error_upper_bound}
to provide bounds on errors that can be evaluated before doing any 
computation, monotone increasing approximations (not necessarily doubly monotone) to the matrix exponential of rate
matrices---like skeletoid and uniformization---offer a simple way of obtaining 
the exact error of a given approximation, since
\[
\|M(t) - M^{(s)}(t)\|_\infty = \sup_{x\in \statespace}\sum_{y\in\statespace}[M(t)_{x,y} - M^{(s)}(t)_{x,y}] \leq 1 - \inf_{x\in \statespace}\sum_{y\in\statespace}M^{(s)}(t)_{x,y},
\]
where the inequality arises from the possibility that $Q$ is non-conservative. The right-hand side is computable since the formula does not depend on $M(t)$. The inequality becomes an equality in the case of a conservative $Q$, and in this case a stronger statement holds for all $x \in \statespace$:
\[
\sum_{y\in\statespace}[M(t)_{x,y} - M^{(s)}(t)_{x,y}] = 1 - \sum_{y\in\statespace}M^{(s)}(t)_{x,y}.
\]

Finally, we note that it is always possible to improve a collection of monotone
increasing approximations by computing all of them in parallel and returning their
element-wise maximum, which will again be monotone increasing. However, in our
tests the overhead needed to parallelize the skeletoid and uniformization methods was not
compensated by the efficiency gains obtained. However, it is possible that for models not considered in this study, 
an algorithm returning the element-wise maximum of the uniformization, skeletoid,
and potentially other monotone approximations could be advantageous.

\section{Related work}

\cite{georgoulas_unbiased_2017} were the first to describe a pseudo-marginal
method to perform exact inference for CTMCs on countably infinite spaces by
debiasing a sequence of likelihood estimates arising from an increasing
sequence of truncated state-spaces. They used the randomly-truncated series
debiasing scheme first introduced by \cite{mcleish2011}, which has the downside
of requiring an unbounded random number of calls to the matrix
exponentiation algorithm. In contrast, the OSTE scheme needs at most $3$ calls to
these routines. Additionally,
\cite{georgoulas_unbiased_2017} used a stopping time whose distribution has
super-exponential tails. 
This introduces the risk of obtaining likelihood estimates with infinite
variance, unless one is able to prove that the approximating sequence converges
faster. As discussed in \cref{subsec:designing_sequences_SS}, there is
evidence that the sequences converge at super-exponential speed, but the
precise rate must still be estimated on a case-by-case basis.

Building on \citet{georgoulas_unbiased_2017}, \cite{sherlock_exact_2019} introduced the MESA
algorithm. This method expands the state space of 
the sampler to include the random index $r$ of the smallest truncated state
space that contains the full (unobserved) path of the process for all
observation pairs. This strategy is appealing because it dispenses with the
need to design debiasing methods with appropriate stopping times. The variant
nMESA is also proposed, which has a different index $r_i$ for each observation
pair. In terms of computational cost, the sampler used in MESA requires $3$
calls to matrix exponentiation routines, which are based on uniformization and
scaling-and-squaring.
On the other hand, because they are not pseudo-marginal samplers, neither MESA
nor nMESA can take advantage of the accuracy relaxation technique described in
\cref{subsec:monotone_approx_matrix_exp}. Thus, they require computing matrix
exponentials up to numerical accuracy.

Previous work has explored the use of particle MCMC to design positive unbiased
estimators in the context of infinite state space CTMC.
The most straightforward way to use particle MCMC for infinite CTMC is to
impute full paths and set weights to zero when a particle does not match an
observed end-point \citep{Saeedi2011Priors,golightly_bayesian_2011}. However
this approach can suffer from severe weight degeneracy, including particle
population collapse where all weights are equal to zero. A more sophisticated
approach builds a martingale to guide particle paths toward the end-point
\citep{Hajiaghayi2014Efficient}, but this requires the construction of a
problem-specific potential function.

The literature on exact inference for CTMCs on finite state spaces is extensive
\citep[see e.g.][]{geweke1986exact,fearnhead2006exact,ferreira2008bayesian},
but methods therein cannot be extended to the countably infinite case in a
straightforward manner. In particular, as we have previously mentioned,
uniformization cannot be applied to infinite state-space CTMCs whose rate
matrices do not admit a uniform bound on the diagonal. Another line of
work has built advanced extensions for the related but distinct problem of
end-point simulation \citep{rao_fast_nodate,zhang_efficient_2020}.

\section{Experiments}\label{sec:experiments}

\begin{algorithm}[t]
\DontPrintSemicolon
\SetKwInOut{Input}{input}
\SetKwInOut{Output}{output}
\Input{$n_\text{samples},\theta_0,\Sigma_{\text{prop}},p,\{r(n),k(n)\}_{n\in\Z_+}$}
$\mathcal{C}\gets \{\statespace_0\}$\Comment*{construct and store $\statespace_0$}
$\bar{r}\gets 0$\Comment*{index of the largest state-space available}
$N\sim \text{Geom}(p)$\;
$L_\text{cur}\gets \tilde{L}_{\text{RA}}(\theta_0,N)$\Comment*{using \cref{eq:def_est_RA} with $\{r(n),k(n)\}_{n\in\Z_+}$}
\For{$s=1,2,\dots,n_\text{samples}-1$}{
    $\theta_\text{prop}|\theta_{s-1}\sim\mathcal{N}(\theta,\Sigma_\text{prop})$\;
    $N\sim \text{Geom}(p)$\;
	\If(\Comment*[f]{need to build larger state-spaces}){$r(N+1) > \bar{r}$}{
        $\mathcal{C}\gets \mathcal{C} \cup \{\statespace_r\}$ for all $r\in\{\bar{r}+1,\dots, r(N+1)\}$\;
        $\bar{r}\gets r(N+1)$\;
    }
    $L_\text{prop}\gets \tilde{L}_{\text{RA}}(\theta_\text{prop},N)$\;
    $A \gets1\wedge \frac{\pi(\theta_\text{prop})L_\text{prop}}{\pi(\theta_{s-1})L_\text{cur}}$\;
	$U \sim U[0,1]$\;
    \uIf(\Comment*[f]{proposal accepted}){$U\leq A$}{
        $\theta_s\gets \theta_\text{prop}$\;
        $L_\text{cur}\gets L_\text{prop}$\;
    }
    \Else(\Comment*[f]{proposal rejected}){
        $\theta_s\gets \theta_{s-1}$\;
    }
}
\Output{$\{\theta_s\}_{s=0}^{n_\text{samples}-1}$.}
\caption{Metropolis pseudo-marginal sampler with RA estimator}
\label{algo:overview_sampling} 
\end{algorithm}

We evaluate our methods using experimental benchmarks introduced in
\cite{georgoulas_unbiased_2017}\footnote{Code and data available at \url{https://github.com/ageorgou/roulette/}} and \cite{sherlock_exact_2019}.\footnote{The authors kindly shared their code and data with us.} We run a random
walk Metropolis sampler targeting the augmented posterior distribution
$\pi(\theta,U|\D)$ of \cref{eq:pseudo_marginal_aug_target_density}. The
proposal for $\theta$ uses a multivariate normal with covariance matrix
$\Sigma_{\text{prop}}=\alpha \var(\theta|\D)$, for some $\alpha>0$ and with
$\var(\theta|\D)$ estimated in a trial run. For the auxiliary variables $U$ we
take independent draws from the distribution used by the OSTE scheme, which we
set to $\text{Geom}(p)$ distributions with $p\in(0,1)$.
\cref{app:tuning_stopping_times} contains detailed explanations on how to tune
all the parameters involved. An overview of the process to run the Metropolis
sampler using the RA estimator is shown in \cref{algo:overview_sampling}. The
process used for a sampler using the IA estimator is similar but involves also
looping over observations.

In order to fairly compare the efficiency of algorithms implemented in
different languages, we use the effective sample size per billion floating
point operations (ESS/GFLOPs), when considering FLOPs used in matrix
multiplications within matrix exponentiation algorithms. The reason to focus on
these operations is that they account for most of the computational cost
incurred by the samplers. The ESS is measured using the method implemented in the
\verb|R| package \verb|mcmcse| \citep{flegal2020mcmcse} and described by
\cite{gong2016practical}.
Due to the importance of correctly tuning samplers to obtain good results, we only compare against other methods that have been previously tuned for and tested in a given dataset by their respective authors.

\begin{figure}[t]
\centering
\includegraphics[width=0.7\textwidth]{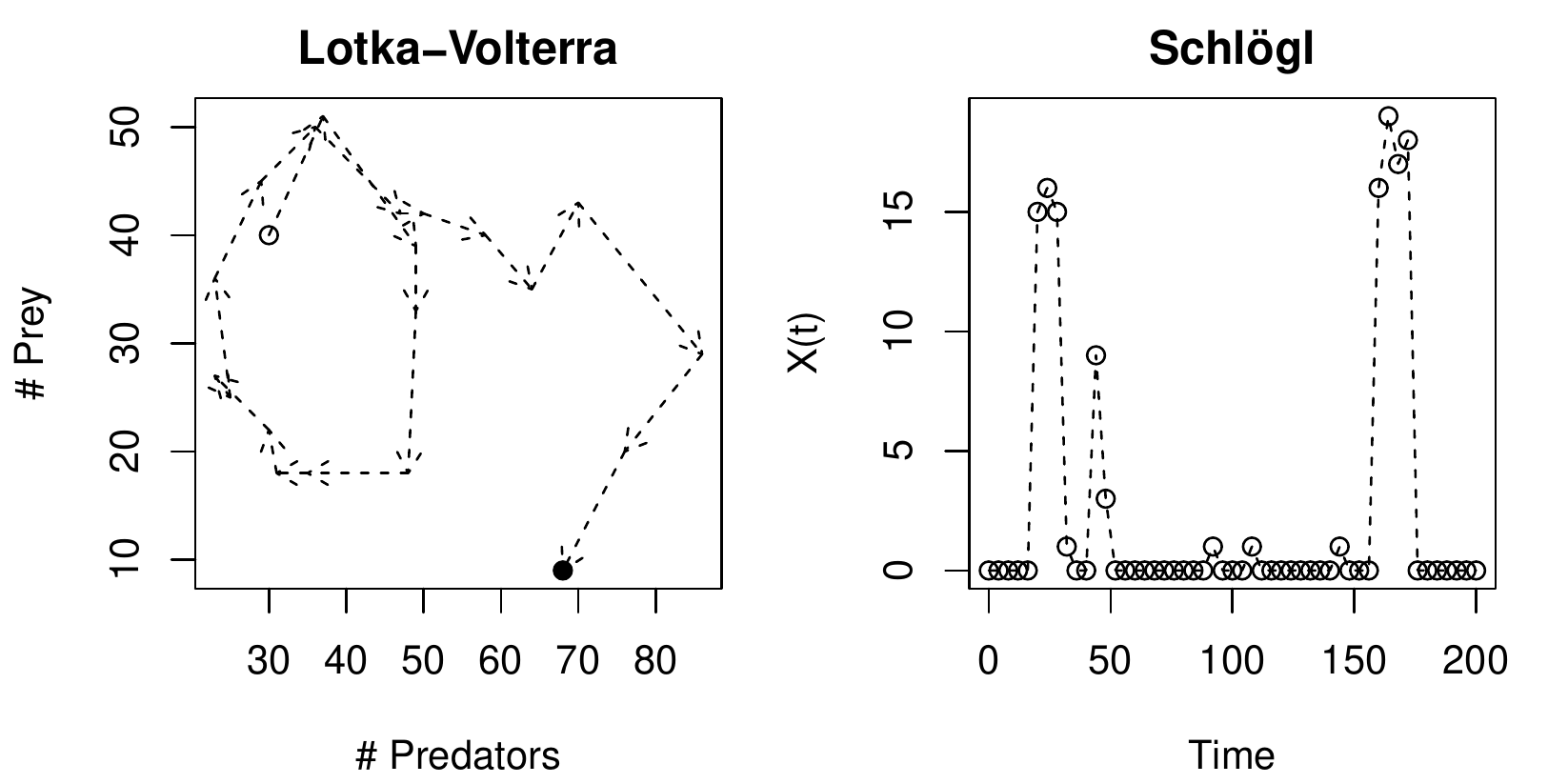}
\includegraphics[width=0.7\textwidth]{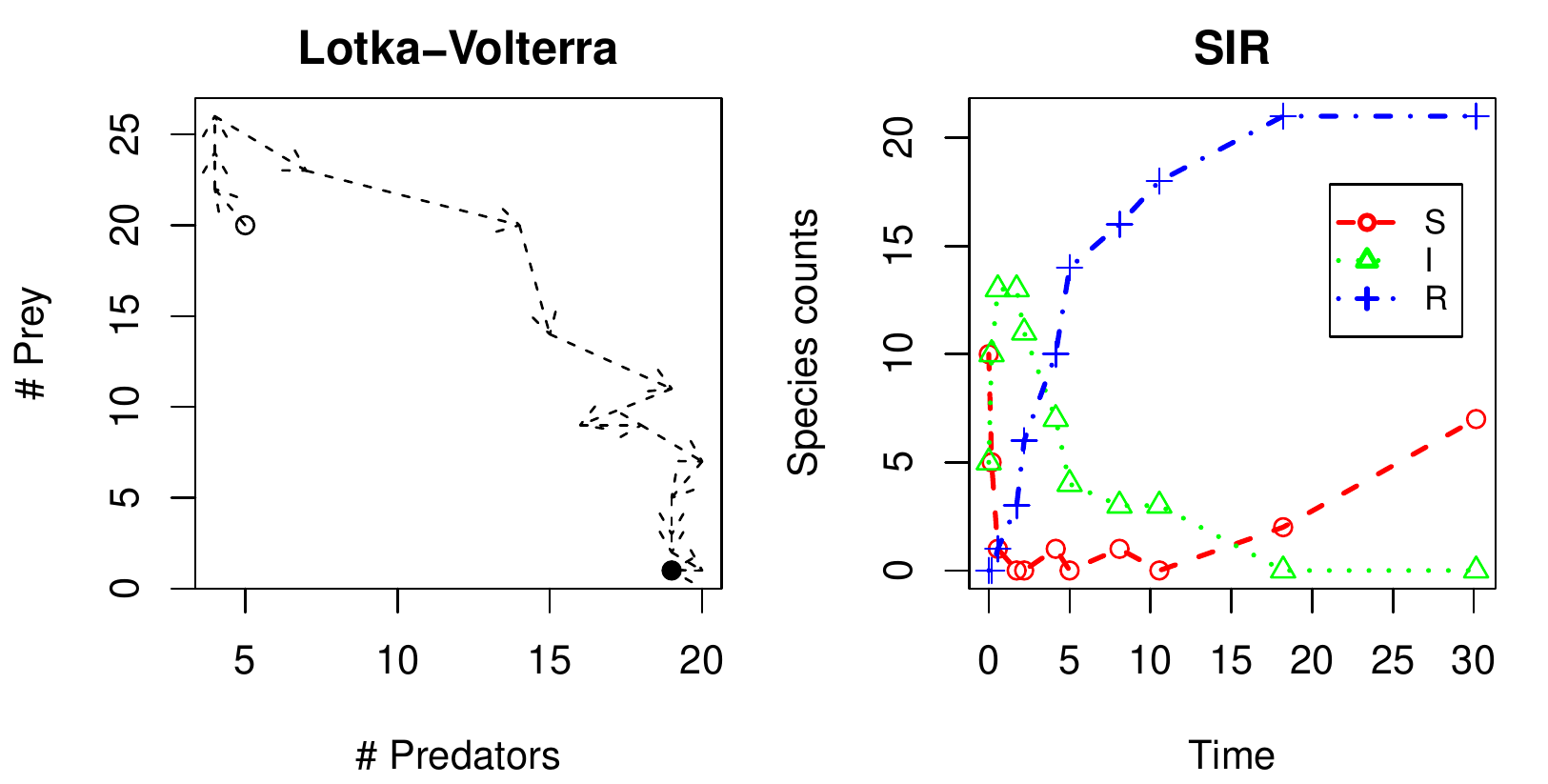}
\caption{Plots of the datasets considered in the experiments. Dashed lines are used to emphasize that the path of the process is unknown between observations. \textbf{Top:} datasets from \citet{sherlock_exact_2019} for the Lotka--Volterra (3 reactions) and Schl{\"o}gl models. \textbf{Bottom:} datasets from \citet{georgoulas_unbiased_2017} for the Lotka--Volterra (4 reactions) and SIR models}
\label{fig:datasets}
\end{figure}

\subsection{Schl{\"o}gl model}

The Schl{\"o}gl model is defined by four reactions
\begin{equation}\label{react:schloegl}
\begin{array}{ccc}
A + 2X &\xrightarrow{\mathmakebox[7em]{\theta_1 X(X-1)/2}}& 3X \\
3X &\xrightarrow{\theta_2 X(X-1)(X-2)/6} & 2X + A \\
B &\xrightarrow{\mathmakebox[7em]{\theta_3}}& X \\
X &\xrightarrow{\mathmakebox[7em]{\theta_4X}}& B \\
\end{array}
\end{equation}
where the number of $A$ and $B$ molecules are kept constant, so that the only
species evolving in time is $X$. The model is known in the mathematical biology
literature as the canonical example of a chemical reaction system exhibiting
``bi-stability'' \citep{vellela2009schloegl}. This means that for some
parameter configurations, the system oscillates between two meta-states or
regimes, one with very low or zero counts of $X$, and one with high counts.

For the purpose of inference, it is simpler to work with the birth--death
version of the process, obtained by collapsing the reactions that increase $X$
and the ones that decrease it
\begin{align}
\lambda_{x} &= \theta_1\I\{x\geq2\} x(x-1)/2 + \theta_3 \\
\mu_{x} &= \theta_2 \I\{x\geq3\}x(x-1)(x-2)/6 + \I\{x\geq1\}\theta_4
\end{align}
for $x\in\Z_+$.

\begin{figure}[t]
\centering
\includegraphics[width=0.9\textwidth]{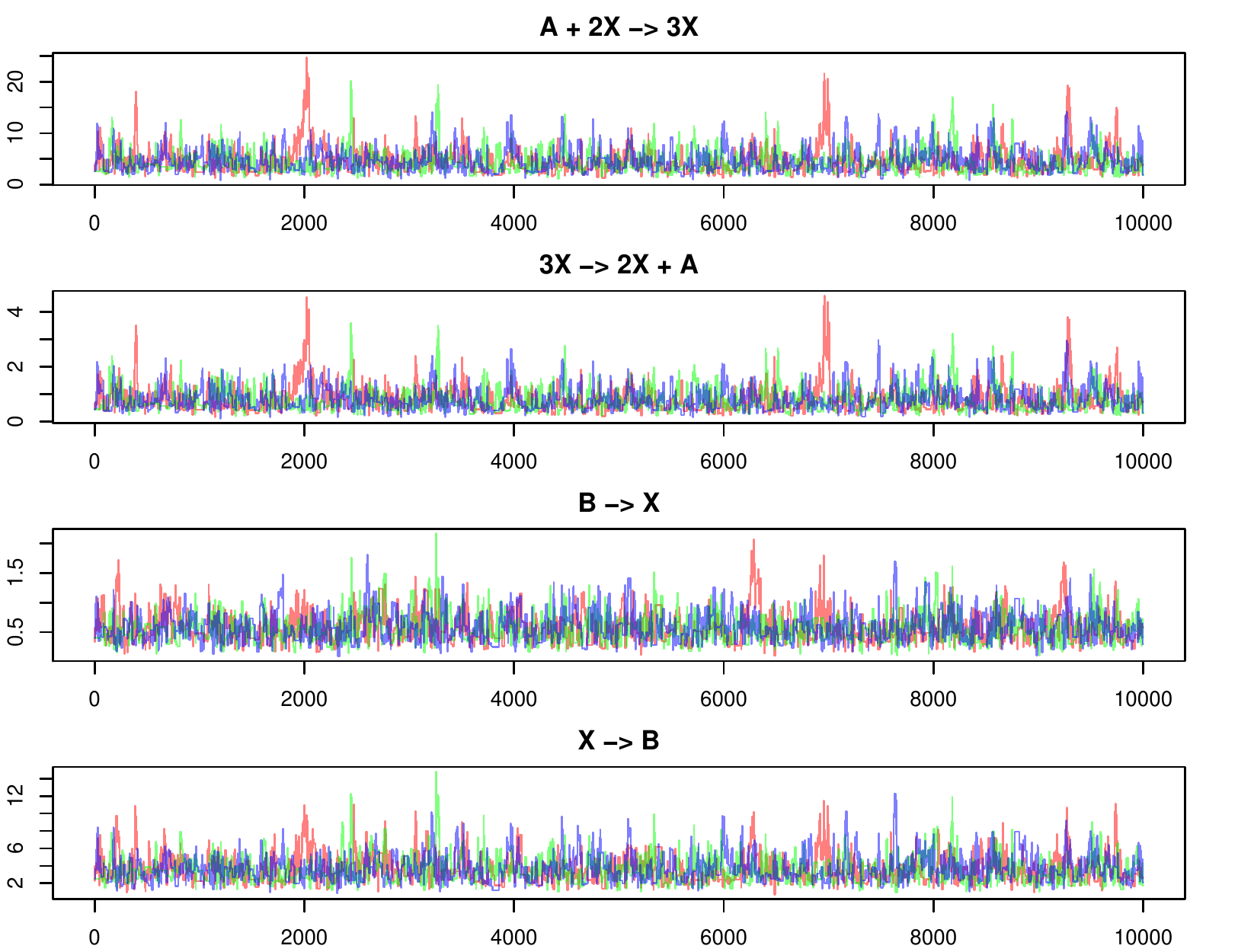}
\caption{Trace plots for 10,000 samples taken using 3 parallel chains of the RA pseudo-marginal sampler with Skeletoid targeting the posterior distribution of the Schl{\"o}gl model.}
\label{fig:mcmc_traceplots}
\end{figure}

\begin{figure}[t]
\centering
\includegraphics[width=0.8\textwidth]{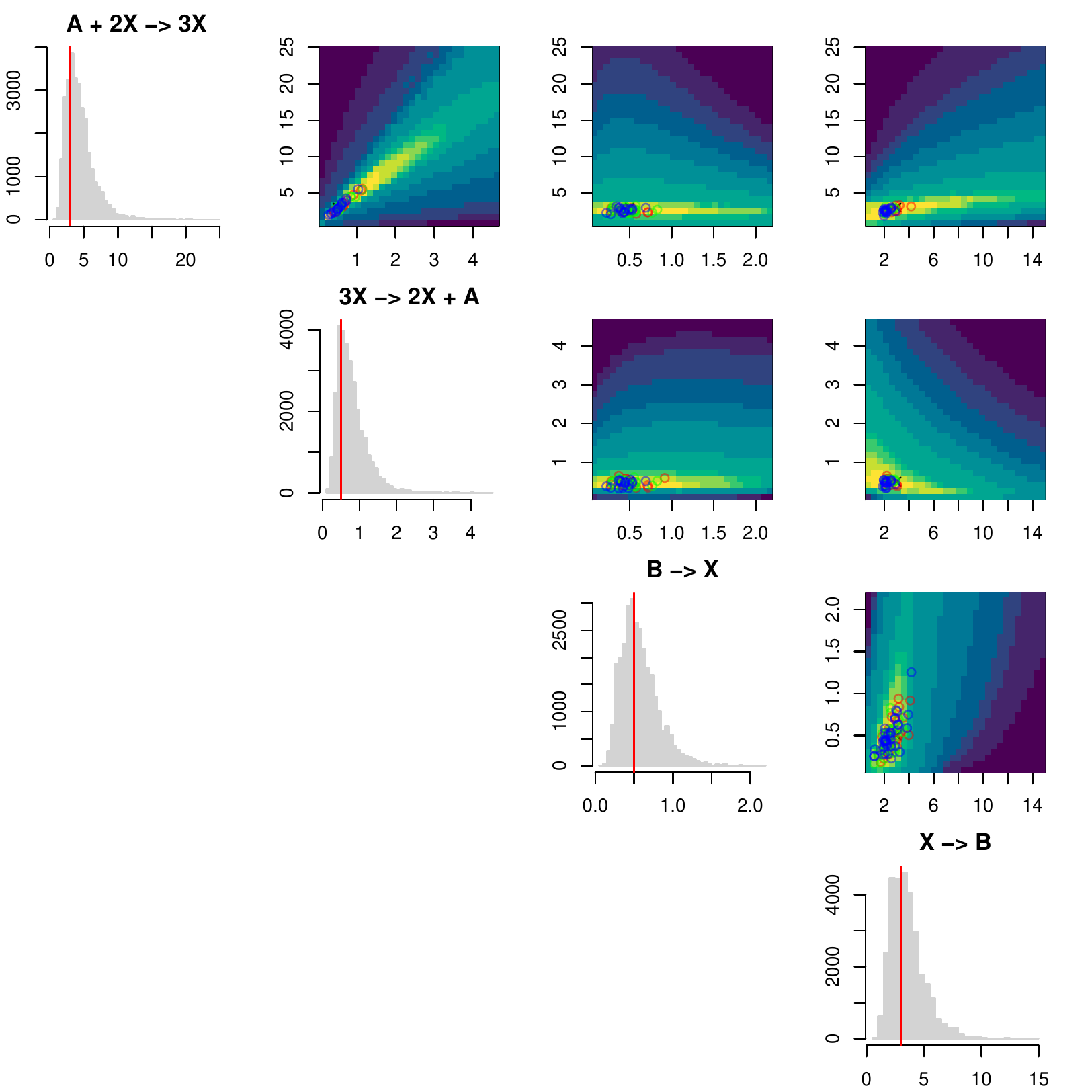}
\caption{Histograms of the samples obtained from the experiment described in \cref{fig:mcmc_traceplots}. Heatmaps show bi-variate views of the approximated posterior when using the approximate likelihood $L_r^k$ (\cref{eq:def_joint_approx_like}) with high values of $r,k$. Vertical lines in histograms and X's in bi-variate plots show location of data-generating parameter. Points in bi-variate plots show selected samples from every chain.}
\label{fig:density_plots}
\end{figure}

To give a more detailed description of the performance of our pseudo-marginal
approach, we first focus on the results obtained for the Schl{\"o}gl model,
then present summaries for several models. The dataset we use is taken from
\citet{sherlock_exact_2019}, and is depicted in \cref{fig:datasets}. Note that
it exhibits the bi-stability phenomenon described above. These data are taken
every $\Delta t=4$ from a path simulated with 
$\theta_\text{true}=(3, 0.5, 0.5, 3)$. \citet{sherlock_exact_2019} use a Metropolis 
sampler on the log-transformed variables $\xi=\log(\theta)$, and place independent
standard normal priors on $\xi$. Since we work directly with $\theta$, we use 
i.i.d.\ standard log-normals. After
following the tuning procedure outlined in \cref{app:tuning_stopping_times}, we
run three parallel chains for 10,000 iterations each, initialized at
$\theta_\text{true}$. \cref{fig:mcmc_traceplots} shows the trace-plots for an
RA sampler using the skeletoid method, evidencing good mixing in all
parameters and chains. \cref{fig:density_plots} summarizes the approximated
posterior. Note how there is almost perfect correlation between $\theta_1$ and
$\theta_2$, demonstrating that having a proposal with non-diagonal covariance
is useful.

\begin{figure}[t]
\centering
\includegraphics[width=\textwidth]{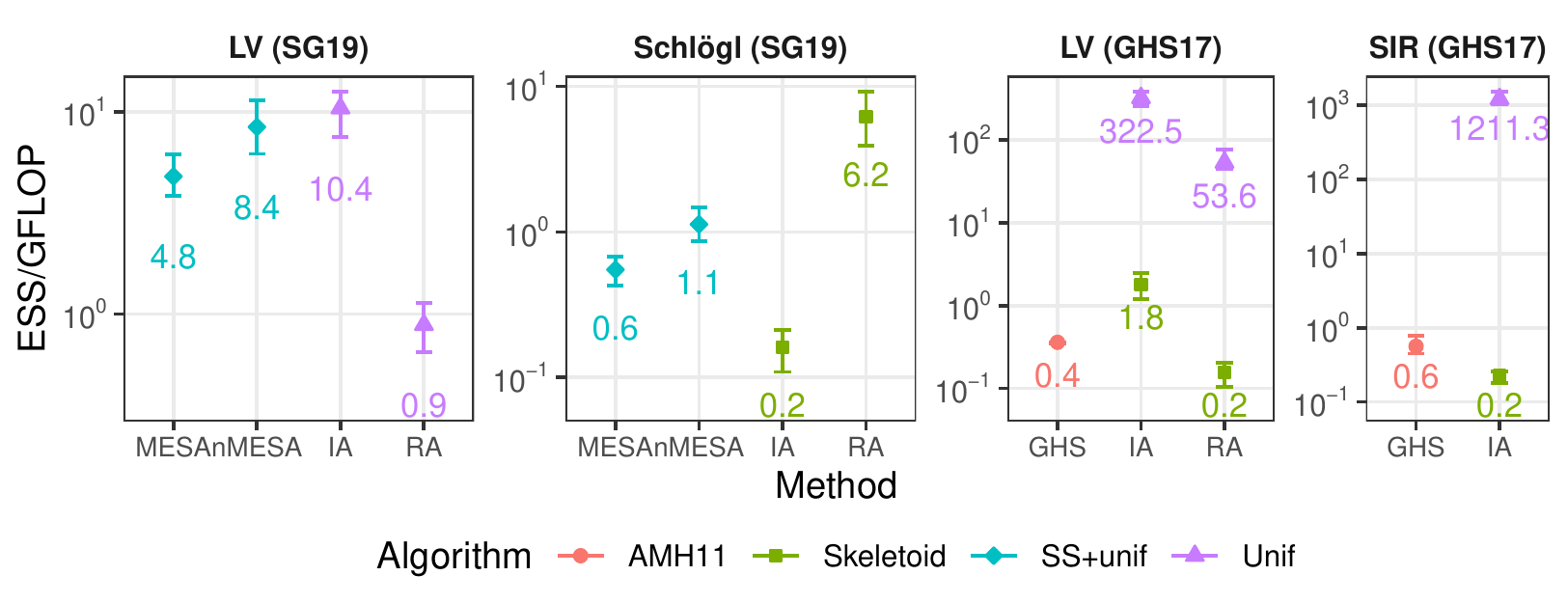}
\caption{Comparison of ESS/GFLOP between our proposed pseudo-marginal samplers and competing methods. The intervals denote the range across 30 repetitions, while points and text correspond to the means.}
\label{fig:benchmarks}
\end{figure}

The second panel in \cref{fig:benchmarks} shows the ESS/GFLOP achieved by the IA 
and RA samplers with the skeletoid, compared to the results obtained for the MESA 
and nMESA methods. Our RA sampler achieves an improvement of more than 450\% over 
nMESA.

\subsection{Lotka--Volterra model}

The Lotka--Volterra (LV) LV process has state-space $\statespace=\Z_+^2$; i.e., defined by boundaries $B^l=(0,0)$ and $B^u=(\infty,\infty)$. We consider two versions of this model with different numbers of reactions.

\subsubsection{Three reactions}

\citet{sherlock_exact_2019} use a version of LV containing three distinct reactions
\begin{equation}\label{react:LV3}
\begin{array}{cccl}
R &\xrightarrow{\mathmakebox[2em]{\theta_1 R}}& \emptyset &  \text{(death of predator)} \\
R + Y &\xrightarrow{\theta_2 RY} & 2R & \text{(predator eats prey \& reproduces)} \\
Y & \xrightarrow{\mathmakebox[2em]{\theta_3Y}} & 2Y & \text{(birth of prey)}
\end{array}
\end{equation}
Here, $R$ denotes the number of predators and $Y$ the size of the prey
population. For our experiments, we focus on the LV20 dataset---shown in the top-left panel
of \cref{fig:datasets}---containing 20
observations obtained at regular time intervals $\Delta t=1$ of an LV path
simulated with $\theta_\text{true}=(0.3, 0.4, 0.01)$. Again, \citet{sherlock_exact_2019} use a Metropolis 
sampler on the log-transformed variables $\xi$, putting independent Gaussian
priors with unit variance and means $(\log(0.2),\log(0.2),\log(0.02))$. To match these, we put independent log-normal priors on $\theta$
with the same parameters.

In the first panel of \cref{fig:benchmarks} we show that the IA sampler with 
uniformization achieves more than a 20\% average increase in ESS/GFLOP when 
compared agains nMESA, although the ranges do not show a strong separation. 
These results also show that the relative performance between IA and RA samplers 
depends on the particulars of the model and data.

\subsubsection{Four reactions}

\citet{georgoulas_unbiased_2017} use a version of the LV model with four reactions
\begin{equation}\label{react:LV4}
\begin{array}{cccl}
R &\xrightarrow{\mathmakebox[2em]{\theta_1 R}}& \emptyset &  \text{(death of predator)} \\
R + Y &\xrightarrow{\theta_2 RY} & 2R+Y & \text{(birth of predator)} \\
R + Y &\xrightarrow{\theta_3 RY} & R & \text{(death of prey)} \\
Y & \xrightarrow{\mathmakebox[2em]{\theta_4Y}} & 2Y & \text{(birth of prey)}
\end{array}
\end{equation}
The data used by the authors is depicted in the bottom-left panel of \cref{fig:datasets}.
It represents observations at regular intervals with $\Delta t=200$ of a path 
simulated using $\theta_\text{true}=(1,5,5,1)\cdot10^{-4}$.

We compare our methods against the pseudo-marginal sampler
proposed by the authors, which builds a chain directly targeting the augmented posterior for $\theta$ using i.i.d.\ $\text{Gamma}(4,10^4)$ priors. As the third panel in \cref{fig:benchmarks} shows, we achieve an improvement of
3 orders of magnitude using the IA sampler with uniformization. AMH11 denotes the matrix exponentiation algorithm described in \citet{al2011computing}.

\subsection{SSIR model}

The last CTMC we consider is the SSIR model, described in detail in \cref{subsec:reaction_networks}. The data---taken from \citet{georgoulas_unbiased_2017} and shown in the bottom-right panel of \cref{fig:datasets}---are obtained from observations taken at irregular
time steps from a path simulated using $\theta_\text{true}=(0.4, 0.5, 0.4)$. In our experiments, we match the prior used by the authors, corresponding to i.i.d.\ $\text{Gamma}(1.5,5)$. We can see in the last panel of \cref{fig:benchmarks} that the IA sampler with uniformization achieves an improvement of almost 4 orders of magnitude.

\section{Conclusions}

In this work we have proposed pseudo-marginal methods based on monotonic
approximations to the likelihood of stochastic processes with intractable
transition functions. A monotonic approximation can be seen as one of two inputs---the
other being a debiasing scheme---of an almost automatic procedure for designing
pseudo-marginal samplers targeting the correct posterior distribution. To this
end, we have described the OSTE scheme which
improves over other debiasing methods by using a bounded number of calls to the
approximating sequence, and by offering a flexible way to balance the
variance-cost trade-off. We have also shown how to accelerate a recently
proposed pseudo-marginal sampler for countable state-space CTMCs by considering
matrix exponentiation algorithms that produce increasing approximations when
applied to finite rate matrices. Moreover, we have introduced the skeletoid as
a novel probabilistic approach for computing monotonic approximations to
the exponential of rate matrices, that excels in cases where the norm of these
matrices is high. Finally, we have given experimental evidence that our
proposed algorithms offer sizable efficiency gains.

Note that the only component of the proposed algorithms that is tailored to
reaction networks is the design of the increasing sequence of state-spaces.
Therefore, extending our algorithms to CTMCs other than the ones considered in
this study should not be cumbersome. Some interesting areas of application
include CTMCs arising from the study of the secondary structure of nucleic acids
through elementary step kinetic models
\citep{schaeffer2015,zolaktaf2019efficient}, as well as CTMCs underlying
state-dependent speciation and extinction models in phylogenetics
\citep{maddison2007,louca2019}.

More generally, we aim to extend the ``monotone approximation plus debiasing''
framework to other stochastic processes with intractable likelihoods. For example,
exact inference methods for diffusions exist only for a restricted class of
such processes \citep{beskos2006exact,sermaidis2013markov}. On the other hand,
there is a growing literature on asymptotic expansions to the transition
functions of diffusions \citep{ait2008closed,li2013maximum} which are valid in
more general settings. These expansions could potentially be modified
to produce monotone approximations, thus enabling the use of our framework for
exact inference. A different way to tackle this problem would be to rely on
techniques to approximate diffusions using CTMCs
\citep{kushner1980robust,di1981continuous}. As long as the transition function
of the latter converges monotonically to the transition function of the
diffusion, the methods in this paper could be applied.

\bibliography{ref}

\appendix

\section{Implementation details}\label{app:imp_details}

\subsection{Numerically stable implementation of the skeletoid method}

When $\delta=t2^{-k}$ is close to the machine's precision, the straightforward computation of $S(t2^{-k})^{2^k}$ produces rounding errors. The reason is that in these cases $S(\delta)\approx I$. A simple way to bypass this issue is to rely on the identity
\[
A^2 = (I+A-I)^2 = I + 2(A-I) + (A-I)^2
\]
which holds for any square matrix $A$. Defining $B:=A-I$, the identity becomes
\[
A^2 = I+B' \hspace{1em} \text{with} \hspace{1em} B' = 2B + B^2
\] 
An inductive argument then shows that we can compute $A^{2^k}$ via the recursion depicted in \cref{algo:implicit_squaring}.

\begin{algorithm}[t]
\DontPrintSemicolon
\SetKwInOut{Input}{input}
\SetKwInOut{Output}{output}
\Input{$B:=A-I,k$}
\lFor{$i = 1, 2, \dots, k$}{
    $B\gets 2B+B^2$
}
\Output{$A^{2^k}=I+B$.}
\caption{Implicit matrix squaring}
\label{algo:implicit_squaring} 
\end{algorithm}

To fully take advantage of \cref{algo:implicit_squaring} we must implicitly compute $S(\delta)-I$. This is achieved by replacing the formula for the diagonal elements in \cref{eq:def_S_delta} by
\[
[S(\delta)]_{i,i} = \mathtt{expm1}(q_{i,i}\delta)
\]
where \texttt{expm1} is the standard numerical routine for accurately computing $e^x-1$ when $x\approx0$.

\subsection{Computing elements of the matrix exponential}

When computing the likelihood function we are generally interested in only a handful of entries of the transition function. This can be used to  speed-up computation, by only performing computation for the rows of the matrix exponential that contain those entries.

Let $Q_r$ be the rate matrix associated to $\statespace_r$, and define $b_r:=|\statespace_r|$. Suppose we are interested in the entries $\{(i_l,j_l)\}_{l=1}^m$ of the $s$-th approximation to the transition function $M^{(s)}(t)$, with $m\ll b_r$. Define the matrix $L$ of size $m\times b_r$
\[
L_{u,v} :=
\left\{
\begin{array}{cc}
1 & \text{if }v = j_u, \\
0 & \text{otherwise.}
\end{array}
\right.
\]
In the case of uniformization, recall that $P:=I-Q_r/\bar{q}$. Then
\[
(M^{(s)}(t))_{i_l,j_l} = (LM^{(s)}(t))_{l,j_l} = \sum_{n=0}^s e^{\bar{q} t} \frac{(-\bar{q}t)^n}{n!} (LP^n)_{l,j_l}
\]
Note that the matrix $LP^n$ can be efficiently computed recursively setting $LP^0=L$ and updating via $LP^n=(LP^{n-1})P$. Suppose that $P$ has $\eta$ non-zero elements per column on average. Then the update operation involves $\eta b_r$ FLOPs, so we can obtain $(M^{(s)}(t))_{i,j}$ using only $s\eta b_r$ operations. Compare to $s\eta b_r^2$ which would be the cost of computing the full matrix exponential.

For the skeletoid method, we can apply a trick used for computing the action of the 
matrix exponential obtained through scaling and squaring \citep{sherlock2018direct}.
Consider the squaring decomposition
\[
S(\delta)^{2^k} = \underbrace{S(\delta)^{2^{k_1}}\cdot\dots\cdot S(\delta)^{2^{k_1}}}_{2^{k_2} \text{ times}} = \prod_{i=1}^{2^{k_2}} S(\delta)^{2^{k_1}}
\]
for any $k_1,k_2\in\Z_+$ such that $k_1+k_2=k$. Pre-multiplying by $L$
\[
LS(\delta)^{2^k} = L \prod_{i=1}^{2^{k_2}} S(\delta)^{2^{k_1}} = (L S(\delta)^{2^{k_1}}) \prod_{i=1}^{2^{k_2}-1} S(\delta)^{2^{k_1}}
\]
Now, in general $S(\delta)^{2^{k_1}}$ is a dense matrix of size $b_r\times b_r$ (due to CTMCs being irreducible in general), so we model its computational cost to $k_1b_r^3$ (ignoring subcubic algorithms due to their higher associated constants for simplicity). Then the product $L S(\delta)^{2^{k_1}}$ costs $mb_r^2$, and so does every subsequent multiplication by $S(\delta)^{2^{k_1}}$ on the right. Therefore, the total cost becomes
\[
C(k_1,k_2) = \beta k_1b_r^3 + mb_r^2 2^{k_2}
\]
where we have added the parameter $\beta<1$ to account for the fact that $n$ vector-matrix multiplications with cost $n^2$ are slower than one matrix-matrix multiplication with cost $n^3$. We have found $\beta=0.1$ gives consistent results.

We can easily optimize the function above if we work in $\R$ by setting $x\gets k_2$, and $k_1=k-x$. Indeed, let
\[
C(x) = \beta (k-x)b_r^3 + mb_r^2 2^x.
\]
Differentiating gives the condition
\[
0 = C'(x) = -\beta b_r^3 + \log(2) mb_r^2 2^x.
\]
Thus
\[
x^* = \log_2\left( \frac{\beta b_r}{\log(2) m}\right).
\]
Constraining $x^*$ to an integer in the correct interval gives
\[
k_2^* = 0\vee (k\wedge \arg\min\{ C(y): y\in\{\lfloor x^* \rfloor, \lceil x^*\rceil\}\}).
\]
Finally, setting $k_1^*=k-k_2^*$ yields a strategy for computing $LS(\delta)^{2^k}$ which is in general faster than the naive approach requiring forming $S(\delta)^{2^k}$.

\subsection{Building the base state space}\label{app:build_base_statespace}

Here we present a heuristic for computing a ``seed'' path between pairs of 
observations that is specialized to reaction networks with $\statespace=\Z_+^{n_s}$. 
Fix an observed change in the state of the system $(x_{i-1},x_i,\Delta t_i)$, 
with $\Delta t_i := t_i-t_{i-1}$. Assuming non-explosivity, there 
exists a finite sequence of states $\{s_p\}_{p=0}^P$ and a corresponding sequence 
of reactions $\{r_p\}_{p=1}^P$---with $s_0=x_{i-1}$ and $s_P=x_i$---such that for
all $p\in\{1,\dots,P\}$
\[
s_p = s_{p-1} + U_{r_p,\cdot},
\]
where $U$ is the reaction matrix of the model of size $n_r\times n_s$. Unrolling the above recursion yields
\begin{equation}\label{eq:jump_obs_unrolled_reaction}
\Delta x_i := x_i - x_{i-1} = \sum_{p=1}^P U_{r_p,\cdot} = U^T\nu,
\end{equation}
where $\nu\in\Z_+^{n_r}$ is an integer vector denoting the number of times each reaction is used to form the path.

The solution to \cref{eq:jump_obs_unrolled_reaction}, although not necessarily unique, can be computed by expressing the problem as an Integer Program
\[
\begin{array}{ccc}
\displaystyle\min_{\nu \in \Z^{n_r}} \boldsymbol{1}_{n_r}^T\nu \hspace{1em}  & \text{s.t.} & U^T\nu = \Delta x_i,\; \nu \geq 0.
\end{array}
\]
Since for the reaction networks considered in this paper, $n_r,n_s<10$, the problem is computationally tractable. Moreover, since $\Delta x_i$ is an integer vector, when $U$ is Totally Unimodular---which is the case for all the reaction networks we experiment with---the corresponding Linear Program relaxation yields an integer solution \citep{hoffman1956integral}. Finally, once $\nu$ is obtained, we recover the path $\{s_p\}_{p=1}^P$---with $P:=\boldsymbol{1}_{n_r}^T\nu$---using a simple heuristic search algorithm.

\subsection{Numerically stable computation of the log-likelihood}

The formula for computing $\tilde{L}_{\text{RA}}(\theta,U)$ in \cref{eq:def_est_RA} is numerically unstable because it depends on the products in \cref{eq:def_joint_approx_like}, which in most cases will be $\approx 0$. However, we can use the following identities to directly compute $\log(\tilde{L}_{\text{RA}}(\theta,U))$ without explicitly computing products. The function \texttt{expm1} is the standard numerical routine for accurately computing $e^x-1$ when $x\approx0$. Similarly, \texttt{log1p} refers to standard numerical routines that accurately compute $\log(1+x)$ for $|x|\ll 1$.

\begin{proposition}\label{prop:accurate_log_L}
Let $0\leq p_1\leq p_2\leq p_3$. Fix $\alpha\in(0,1]$ and let
\[
z := p_1 + \frac{p_3-p_2}{\alpha}.
\]
Define $s_i := \log(p_i)$ for each $i\in\{1,2,3\}$. Then
\begin{align}
\log(z)=\left\{
\begin{array}{cc}
-\infty & p_3 = 0\\
s_3-\log(\alpha) & p_3 > 0, p_2=0 \\
s_2 + \log(\mathtt{expm1}(s_3-s_2)) -\log(\alpha) & p_2 > 0, p_1=0 \\
s_1 + \mathtt{log1p}\left(\exp(s_2-s_1-\log(\alpha))\mathtt{expm1}(s_3-s_2)\right) & \text{otherwise.}
\end{array}
\right.
\end{align}
\end{proposition}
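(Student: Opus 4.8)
The plan is to prove the identity by a direct case analysis, partitioning the possibilities according to which of the $p_i$ vanish. Because the hypothesis enforces the ordering $0 \le p_1 \le p_2 \le p_3$, vanishing propagates downward: $p_3 = 0$ forces $p_1 = p_2 = p_3 = 0$, and $p_2 = 0$ forces $p_1 = 0$. Consequently the four branches $\{p_3 = 0\}$, $\{p_3 > 0,\, p_2 = 0\}$, $\{p_2 > 0,\, p_1 = 0\}$, and $\{p_1 > 0\}$ are mutually exclusive and exhaust all cases, matching exactly the four lines of the claimed formula. For each branch I would substitute the surviving terms into the definition $z = p_1 + (p_3 - p_2)/\alpha$ and rewrite $\log z$ using the identities $\mathtt{expm1}(x) = e^x - 1$ and $\mathtt{log1p}(x) = \log(1 + x)$, which hold exactly for the idealized functions these numerical routines approximate.

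The first two branches are immediate: when $p_3 = 0$ all terms vanish, so $z = 0$ and $\log z = -\infty$; when $p_2 = 0 < p_3$ we get $z = p_3/\alpha$, whence $\log z = s_3 - \log\alpha$. The remaining two branches rely on a single factoring trick, namely to extract the largest surviving $p_i$ and express the residual through $\mathtt{expm1}$. In the branch $p_1 = 0 < p_2$ I would write $p_3 - p_2 = p_2(e^{s_3 - s_2} - 1) = p_2\,\mathtt{expm1}(s_3 - s_2)$, so that $z = p_2\,\mathtt{expm1}(s_3 - s_2)/\alpha$ and taking logs yields the third line. In the branch $p_1 > 0$ all $p_i$ are positive, so I would factor $z = p_1\bigl(1 + (p_3 - p_2)/(\alpha p_1)\bigr)$ and rewrite the bracketed fraction as $\exp(s_2 - s_1 - \log\alpha)\,\mathtt{expm1}(s_3 - s_2)$, using $p_2/p_1 = e^{s_2 - s_1}$; applying $\log$ and $\mathtt{log1p}$ then produces the fourth line.

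The calculations themselves are routine; the only point requiring genuine care — and the closest thing to an obstacle — is verifying that every argument stays in a valid domain and that the boundary behaviour is consistent with the $-\infty$ convention. The ordering guarantees $s_3 - s_2 \ge 0$, so $\mathtt{expm1}(s_3 - s_2) \ge 0$ and the inner logarithms are well defined, taking the value $-\infty$ exactly when $p_3 = p_2$, which is precisely when $z = 0$. I would check each boundary explicitly: in the third line $p_3 = p_2$ gives $\mathtt{expm1}(0) = 0$ and $\log 0 = -\infty$, matching $z = 0$; in the fourth line $p_3 = p_2$ gives $\mathtt{log1p}(0) = 0$, so $\log z = s_1 = \log p_1$, matching $z = p_1$. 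With these consistency checks in place, the four-way identity follows.
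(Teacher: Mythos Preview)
Your proposal is correct and follows essentially the same approach as the paper: a direct case split with the factoring trick $p_3 - p_2 = p_2\,\mathtt{expm1}(s_3 - s_2)$ in the third case and the additional extraction of $p_1$ followed by $\mathtt{log1p}$ in the fourth. Your added boundary checks (e.g.\ $p_3 = p_2$) are a nice touch but go slightly beyond what the paper itself verifies.
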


\begin{proof}
The first two cases follow directly. For the third case, note that
\[
z = \frac{\exp(s_3)-\exp(s_2)}{\alpha} = \exp(s_2)\frac{(\exp(s_3-s_2)-1)}{\alpha} = \exp(s_2)\frac{\mathtt{expm1}(s_3-s_2)}{\alpha}.
\]
Hence
\[
\log(z) = s_2 + \log(\mathtt{expm1}(s_3-s_2)) - \log(\alpha).
\]
The general case follows similarly
\begin{align}
z &= \exp(s_1) + \frac{\exp(s_3)-\exp(s_2)}{\alpha} \\
&= \exp(s_1)\left(1 + \frac{\exp(s_3)-\exp(s_2)}{\alpha \exp(s_1)}\right) \\
&= \exp(s_1)\left(1 + \exp(s_2)\frac{\mathtt{expm1}(s_3-s_2)}{\alpha \exp(s_1)}\right) \\
&= \exp(s_1)\left(1 + \exp(s_2-s_1-\log(\alpha))\mathtt{expm1}(s_3-s_2)\right).
\end{align}
Taking $\log(\cdot)$ at both sides gives the required expression
\[
\log(z) = s_1 + \mathtt{log1p}(\exp(s_2-s_1-\log(\alpha))\mathtt{expm1}(s_3-s_2)).
\]
\end{proof}

\subsection{Tuning the pseudo-marginal samplers}\label{app:tuning_stopping_times}

\begin{figure}[t]
\centering
\includegraphics[width=0.9\textwidth]{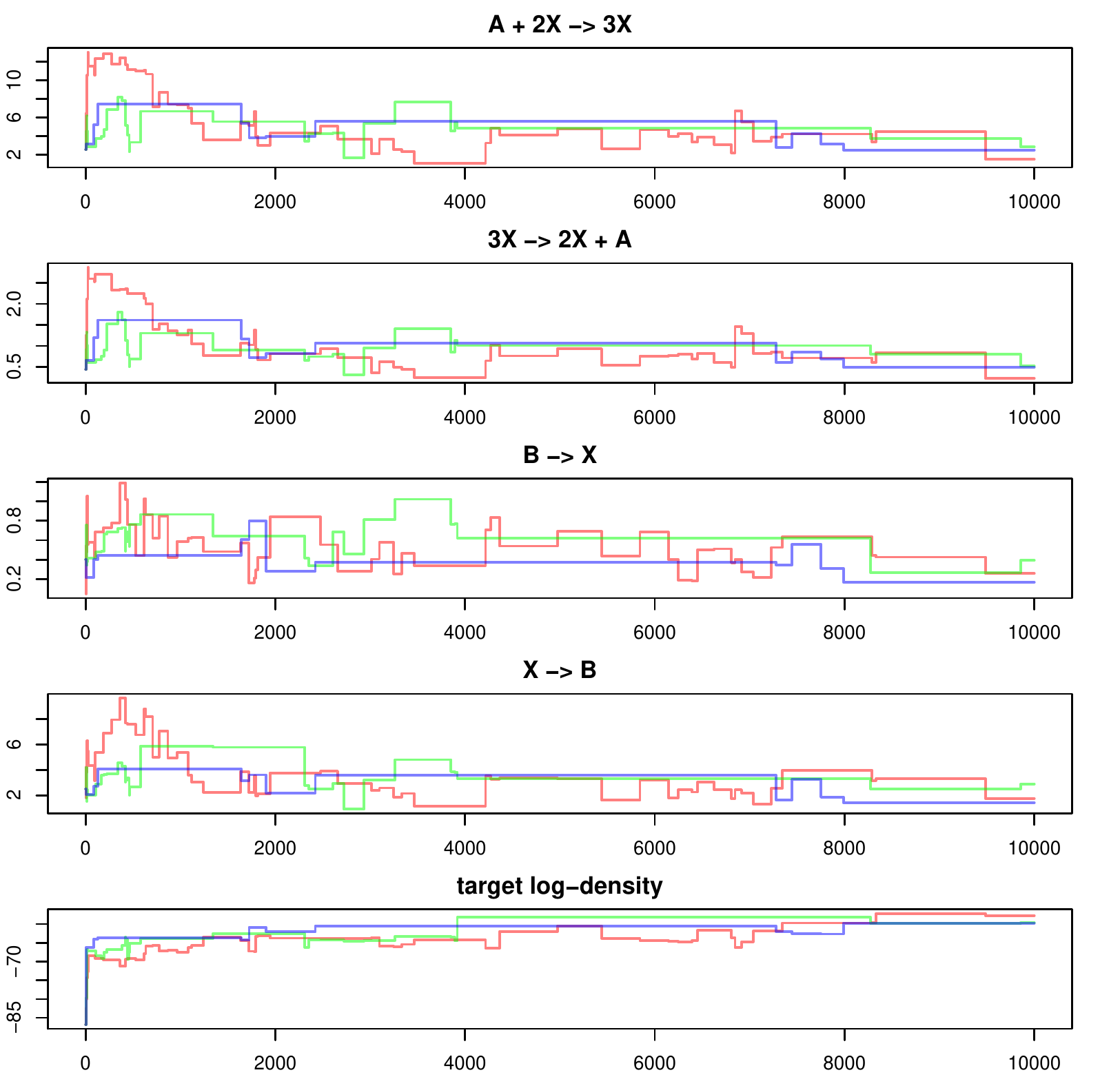}
\caption{Example of a poorly tuned pseudo-marginal sampler for the Schl{\"o}gl model. Each line depicts an independent parallel chain.}
\label{fig:mcmc_fail}
\end{figure}

The approach we have presented depends crucially on correctly tuning the joint truncation--accuracy sequences (see \cref{subsec:monotone_approx_matrix_exp}) and the probability mass function $p(n)$ of the stopping times so that de-biasing behaves properly. As \cref{fig:mcmc_fail} shows, without proper tuning the sampler can give rise to chains that get stuck for long stretches (in stark contrast to \cref{fig:mcmc_traceplots}). The reason for this behavior is to be found in \cref{eq:debias_estimator}. The ratio can give rise to arbitrarily high estimates of the likelihood if the probability mass function $p(n)$ is much smaller than the observed decrease in error (i.e., the numerator). In this respect, the design of $p(n)$ bears similarity with the design of proposal distributions for importance sampling, as we must ensure that the tails of $p(n)$ match the decay of the first differences $a_{\omega+n+1} - a_{\omega+n}$.

There is a growing literature concerned with optimally tuning pseudo-marginal samplers to maximize some measure of efficiency \citep{pitt2012properties,doucet2015efficient,sherlock2015efficiency,schmon2020large}. Recall the augmented target density in \cref{eq:pseudo_marginal_aug_target_density}, and define
\begin{equation}
\zeta(\theta,U) :=  \log(\tilde{L}(\theta,U)) - \log(L(\theta)).
\end{equation}
Moreover, let $\sigma_\zeta^2(\theta) := \var(\zeta|\theta)$. The recommendations suggest tuning the sampler to achieve $\sigma_\zeta(\theta_\text{MAP}) = \bar{\sigma}$, for some pre-specified target $\bar{\sigma}$, and with $\theta_\text{MAP}$ a mode of $\pi(\theta|\D)$. The benefit of this approach is that it does not require running the chains for long periods of time with different configurations, since $\sigma_\zeta(\theta_\text{MAP})$ can be estimated using simple Monte Carlo whenever we can sample $U \sim m$.

On the other hand, depending on the particular assumptions of each study, the recommended $\bar{\sigma}$ ranges from 0.92 to 1.8. Given the ambiguity of these different recommendations, we use $\sigma_\zeta(\theta_\text{MAP})$ not as something to tune in itself, but as a summary of the characteristics of different samplers. In short, we tune our samplers following these steps:
\begin{enumerate}
\item Tune a sampler with a default value of $p_\text{min}=0.9$ (as shown in the next paragraphs) and do a preliminary run to find $\theta_\text{MAP}$ and an estimate $\widehat{\var}(\theta|\D)$ of  $\var(\theta|\D)$.
\item Design an array of different samplers by running the tuning procedure explained in the next paragraphs for all
\[
p_\text{min}\in\{0,0.01,0.1,0.2,0.4,0.6,0.8,0.9\}
\]
\item For each sampler, take $\{U_i\}_{i=1}^{100} \iidsim m$ and estimate $\sigma_\zeta(\theta_\text{MAP})$ using simple Monte Carlo. Also, record the time needed to compute that estimate
\item For each $\bar{\sigma} \in \{0.1,1,1.5,2\}$, find the sampler giving $\sigma_\zeta(\theta_\text{MAP})\leq \bar{\sigma}$ in the shortest time.
\item For each sampler in this shortlist
\begin{enumerate}
\item Optimize $\alpha$ in the variance of the proposal $\Sigma=\alpha\widehat{\var}(\theta|\D)$, aiming to maximize ESS per compute time, using a grid search.
\item Run the sampler with the tuned variance for a longer time and get an estimate of ESS per compute time.
\end{enumerate}
\item Select the sampler giving the highest ESS per compute time.
\end{enumerate}

To implement the general outline described above, let us assume that for every observation $i\in\{1,\dots,n_d\}$ and every $r\in\Z_+$, the sequence $\{\mrk(\Delta t_i)_{x_{i-1},x_i}\}_{k\in\Z_+}$ satisfies a specific version of the second condition in \cref{eq:doublymonotonevariance_assumptions} (\cref{prop:doublymonotonevariance}), namely
\begin{equation}\label{eq:tuning_decay_10_k}
M_r(\Delta t_i)_{x_{i-1},x_i} - \mrk(\Delta t_i)_{x_{i-1},x_i} \leq 10^{-k}.
\end{equation}
Note that we can always satisfy \cref{eq:tuning_decay_10_k} through the use of \cref{eq:def_seq_approx_sol_poisson_quantile} or \cref{eq:def_seq_approx_sol_Skeletoid} (depending on the algorithm). Moreover, since \cref{eq:tuning_decay_10_k} works for $k\in\R_+$ too, we re-index the collection of approximate transition probabilities with $\{\mrk(\Delta t_i)_{x_{i-1},x_i}\}_{r\in\Z_+,k\in\R_+}$. This allows us to have a finer control in the tuning process.

In order to have a generic tuning procedure that works for both IA and RA estimators, let $\ark$ be either
\begin{itemize}
\item the estimated transition probability for the $i$-th observation $\mrk(\Delta t_i)_{x_{i-1},x_i}$, if using method IA (see \cref{eq:def_est_IA_elements}), or
\item the estimated likelihood for the full data $L_r^{(k)}(\theta)$ if using method RA (see \cref{eq:def_joint_approx_like}).
\end{itemize}
Following the conclusions of \cref{prop:doublymonotonevariance}, we parametrize the sequence of pairs $\{(r_n,k_n)\}_{n\in\Z_+}$ using a simple linear form
\begin{equation}\label{eq:def_joint_sequence_linear}
\forall n\in\Z_+: r_n := \omega^{(\text{T})} + n \qquad \text{ and } \qquad k_n := \omega^{(\text{S})} + \sigma n,
\end{equation}
where $\omega^{(\text{T})}\in\Z_+$ is a truncation offset, $\omega^{(\text{S})}\in\R$ an
approximation quality offset, and $\sigma>0$ a relative step-size. Indeed, in \cref{prop:doublymonotonevariance}, the offsets are $0$ and $\sigma=\rho/\kappa$ if one allows for $k_n\in\R_+$ as we do here. We shall see that having non-zero offsets lets us control de variance of the estimators.

Let us begin by setting $\omega^{(\text{T})},\omega^{(\text{S})}$ to preliminary values. Define
\begin{align}
\aseq{r}{\infty} &:= \lim_{k\to\infty} \ark \hspace{1em} \forall r \in \Z_+ \\
\aseq{\infty}{k} &:= \lim_{r\to\infty} \ark \hspace{1em} \forall k \in \R_+ \\
\alpha &:= \lim_{r\to\infty} \aseq{r}{\infty}
\end{align}
$\alpha$ denotes the target quantity to be estimated; that is, either the transition probability $M(\Delta t_i)_{x_{i-1},x_i}$ if using the IA method, or the likelihood $L(\theta)$ if using the RA approach.
\cref{prop:debiasing} shows that we can arbitrarily reduce the variance of $Z$ by increasing the offset $\omega$. On the other hand, the cost of computing $Z$ is at least the cost of computing $a_\omega$, which in our context increases to $\infty$ as $\omega\to \infty$. It follows that there should be a sweet-spot maximizing the efficiency of the sampler. Following this intuition, we would set
\begin{align}
\omega^{(\text{T})} &= \min\left\{r\in\Z_+: \aseq{r}{\infty} \geq p_\text{min}\alpha\right\} \\
\omega^{(\text{S})} &= \min\left\{k\in\R_+: \aseq{\infty}{k} \geq p_\text{min}\alpha\right\}
\end{align}
for some $p_\text{min}\in(0,1)$ which would reflect a trade-off between guaranteed accuracy and cost. In turn, $p_\text{min}$ would be set by maximizing ESS per compute time.

\begin{algorithm}[t]
\DontPrintSemicolon
\SetKwInOut{Input}{input}
\SetKwInOut{Output}{output}
\Input{$\epsilon=10^{-8}, r_{\text{explore}}=15$}
Set accuracy $k_\epsilon$ to achieve error $\epsilon$ using \cref{eq:def_seq_approx_sol_poisson_quantile} or \cref{eq:def_seq_approx_sol_Skeletoid}\;
\Comment{Step 1: analyze truncation $r$ at fixed algorithm setting $k=k_\epsilon$}
$D^{(\text{T})}\gets\emptyset$ \Comment*{storage for truncation profile}
$a_{\text{old}} \gets 0, \Delta \gets \infty, r\gets0$\;
\While{$r<r_{\text{ explore}}$ or $\Delta\geq\epsilon$}{
  $a_{\text{new}} \gets a_r^{k_\epsilon}$\Comment*{get estimate at $(r,k_\epsilon)$ accuracy}
  $D^{(\text{T})} \gets D^{(\text{T})} \cup \{[r,a_{\text{new}}]\}$\Comment*{add to profile}
  $\Delta\gets a_{\text{new}}-a_{\text{old}}$\;
  $a_{\text{old}} \gets a_{\text{new}}$\;
  $r\gets r+1$\;
}
$r_\epsilon \gets r-1$\Comment*{truncation achieving $\Delta<\epsilon$}
$a^* \gets a_{\text{new}}$\Comment*{best guess of exact solution}
\;
\Comment{Step 2: analyze algorithm setting $k$ at fixed truncation $r=r_\epsilon$.}
$D^{(\text{S})}\gets\emptyset$ \Comment*{storage for accuracy profile}
$\Delta \gets 1, k\gets-10$\;
\While{$\Delta\geq\epsilon$}{
  $a_{\text{new}} \gets a_{r_\epsilon}^k$\Comment*{get estimate at $(r_\epsilon,k)$ accuracy}
  $D^{(\text{S})} \gets D^{(\text{S})} \cup \{[k,a_{\text{new}}]\}$\Comment*{add to profile}
  $\Delta\gets a^*-a_{\text{new}}$\;
  $k\gets k+1$\;
}
$k_\epsilon \gets k-1$\Comment*{algorithm setting achieving $\Delta<\epsilon$}
\Output{$D^{(\text{T})},D^{(\text{S})},a^*,r_\epsilon,k_\epsilon$}
\caption{Separate analysis of convergence of sequences}
\label{algo:study_convergence} 
\end{algorithm}

Implementing the above strategy has the obvious drawback that $\alpha$ is unknown. Moreover, we do not have a theoretical bound for the truncation error that would tell us how high to set $r$ to approximate $\alpha$. A similar thing happens with the matrix exponential approximation error when we use the skeletoid method, since its bound can be loose in some settings. We must therefore begin by empirically inspecting the behavior of $\{a_r^\infty\}_{r\in\Z_+}$ and $\{a_\infty^k\}_{k\in\R_+}$. \cref{algo:study_convergence} achieves this by first exploring the truncation sequence when the matrix exponentiation algorithm is set to a fixed high precision setting, producing a truncation index $r_\epsilon$ at which the difference between consecutive solutions is less than a pre-specified threshold $\epsilon>0$. It also gives a best guess for the true value $a^*\approx \alpha$. After this, it sets the truncation to $r_\epsilon$ and proceeds to check the convergence in $k$ by scanning $a_{r_\epsilon}^k$ at integer values of $k$, until it finds $k_\epsilon$ such that $a^* - a_{r_\epsilon}^{k_\epsilon} < \epsilon$. The algorithm returns all these values plus the complete profiles $D^{(\text{T})},D^{(\text{S})}$.

\begin{figure}[t]
\centering
\includegraphics[width=0.9\textwidth]{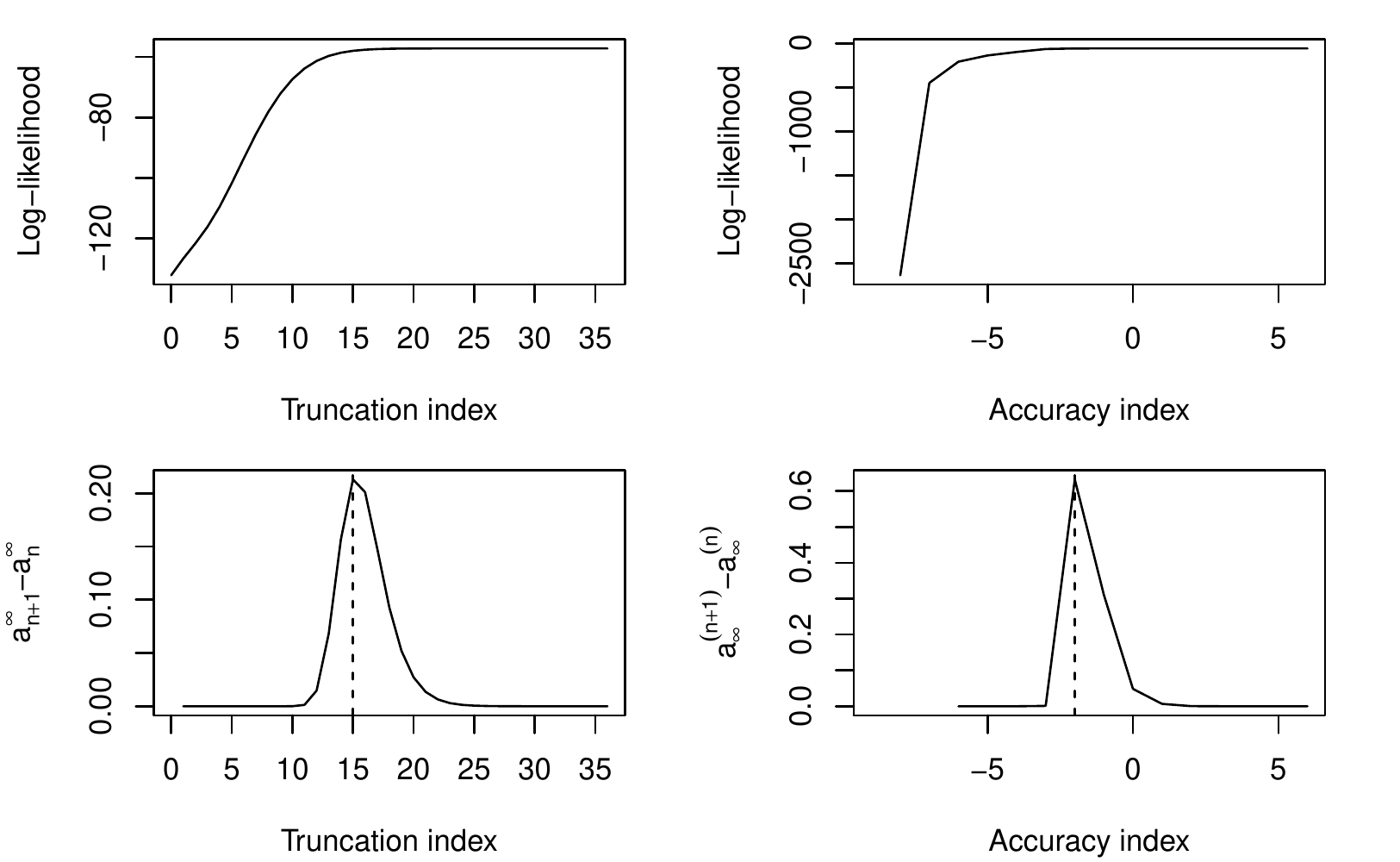}
\caption{Visualization of the tuning process for an RA pseudo-marginal sampler for the Schl{\"o}gl model, with $\theta=\arg\max\pi(\theta|\D)$. \textbf{First row}: output of \cref{algo:study_convergence}. \textbf{Second row}: profiles of the first differences computed from that output.}
\label{fig:study_conv}
\end{figure}

The first row in \cref{fig:study_conv} gives an example of the output for an RA sampler for the Schl{\"o}gl model. We see that truncating at $r = 20$ gives a good approximation, while the skeletoid method gives a very accurate solution for $k=0$, indicating that the associated bound is indeed loose for this problem.

The output from \cref{algo:study_convergence} allows us to proceed with tuning the offsets. As mentioned in \cref{sec:experiments}, we use stopping times with distribution $N\sim\text{Geom}(p)$; i.e., $p(n)=p(1-p)^n$ for all $n\in\Z_+$. Note that this probability mass function is exponentially decreasing with rate $\beta := -\log(1-p)$. On the other hand, we can verify that the distribution given by
\begin{equation}\label{eq:OSTE_zero_var_dist}
\forall n \in\Z_+: p(n) := \frac{a_{\omega+n+1}-a_{\omega+n}}{\alpha-a_{\omega+n}} \propto a_{\omega+n+1}-a_{\omega+n}
\end{equation}
satisfies all the requirements in \cref{prop:debiasing} and gives $\var(Z)=0$. Aiming to mimic this optimal distribution using a $\text{Geom}(p)$ imposes the requirement that the sequence of first differences $\Delta a_{\omega+n}:=a_{\omega+n+1}-a_{\omega+n}$ be strictly decreasing. This is in general not true for any offset, as the second row in \cref{fig:study_conv} shows. Indeed, one can observe peaks at $r_\text{peak}=15$ and $k_\text{peak}=-2$ for truncation and matrix exponential accuracy respectively. Thus, we impose that offsets be set to at least the last peak observed for the corresponding $\{\Delta a_{\omega+n}\}_{n\geq0}$ profiles
\begin{equation}\label{eq:tuning_prelim_offsets}
\begin{aligned}
\omega^{(\text{T})} &= r_\text{peak} \vee \min\{r: (r,a_r^{k_\epsilon}) \in D^{(\text{T})} \text{ and }  a_r^{k_\epsilon} \geq p_\text{min}a^* \} \\
\omega^{(\text{S})} &= k_\text{peak} \vee \min\{k: (k,a_{r_\epsilon}^k) \in D^{(\text{S})} \text{ and } a_{r_\epsilon}^k \geq p_\text{min}a^* \}.
\end{aligned}
\end{equation}

\begin{algorithm}[t]
\DontPrintSemicolon
\SetKwInOut{Input}{input}
\SetKwInOut{Output}{output}
\Input{$r_\epsilon,k_\epsilon,\omega^{(\text{S})}, \omega^{(\text{T})},\sigma_\text{default}=0.1$}
$\sigma\gets \sigma_\text{default} \vee \left(\frac{r_\epsilon - \omega^{(\text{T})}}{k_\epsilon - \omega^{(\text{S})}}\right)$ \Comment*{initialize with \cref{eq:tuning_sigma_init}}
$a_{\text{old}} \gets a_{\omega^{(\text{T})}}^{\omega^{(\text{S})}}$\Comment*{initialize with solution at both offsets}
$n\gets 0; r \gets \omega^{(\text{T})} ;\ k \gets \omega^{(\text{S})}$\Comment*{initialize sequences}
\While{$r\leq r_\epsilon$ and $k\leq k_\epsilon$}{
  $r \gets \omega^{(\text{T})} + n ;\ k \gets \omega^{(\text{S})} + \sigma n$\;
  $a_{\text{new}} \gets a_{r}^{k}$\Comment*{get estimate at $(r,k)$ accuracy}
  \While(\Comment*[f]{monotonicity fails}){$a_{\text{old}}>a_{\text{new}}$ and $k\leq k_\epsilon$}{
    $\sigma\gets 2\sigma$\Comment*{double $\sigma$ until monotonicity holds}
    $k \gets \omega^{(\text{S})} + \sigma n$\;
    $a_{\text{new}} \gets a_{r}^{k}$\Comment*{update estimate}
  }
  $a_{\text{old}}\gets a_{\text{new}}$\;
  $n\gets n+1$\;
}
\Output{$\sigma$}
\caption{Tune $\sigma$}
\label{algo:set_AST_sigma} 
\end{algorithm}

Having set preliminary values for $\omega^{(\text{S})},\omega^{(\text{T})}$, we proceed to set $\sigma$. Following the intuition from \cref{prop:doublymonotonevariance}, we initialize this parameter with the rule
\begin{equation}\label{eq:tuning_sigma_init}
\sigma=\sigma_\text{default} \vee \left(\frac{r_\epsilon - \omega^{(\text{T})}}{k_\epsilon - \omega^{(\text{S})}}\right).
\end{equation}
The ratio on the right-hand side corresponds to an ``average relative velocity'' of the convergence of the matrix exponential accuracy sequence versus the truncation sequence, while $\sigma_\text{default}>0$ corrects border cases. For doubly monotone sequences, there is no additional tuning required. In other cases, we must set $\sigma$ high enough to ensure monotonicity holds. The process is summarized in \cref{algo:set_AST_sigma}. Whenever we encounter a failure in monotonicity $\aseq{r_{n+1}}{k_{n+1}} < \aseq{r_{n}}{k_{n}}$, we double $\sigma$ and recompute $\aseq{r_{n+1}}{k_{n+1}}$ until the condition is met.

Having set $\sigma$ we can adjust the offsets to match the behavior of the joint sequence. Let $n_\text{peak}$ be the index of the rightmost peak in the sequence of differences $\Delta a_{n+1} := \aseq{r_{n+1}}{k_{n+1}}-\aseq{r_{n}}{k_{n}}$. Define
\[
n_\text{offset} := \min\{n\geq  n_\text{peak}: \aseq{r_{n}}{k_{n}}\geq p_\text{min}a^*\}.
\]
and update the offsets to $\omega^{(\text{T})}=r_{n_\text{offset}}$ and $\omega^{(\text{S})}=k_{n_\text{offset}}$.

Finally, to tune $p_g$ we fit a linear regression without intercept to
\[
\log(\Delta a_n)-\log(\Delta a_0) = \beta_g n + \varepsilon_n
\]
where $\{\epsilon_i\}\overset{i.i.d.}{\sim}\mathcal{N}(0,\sigma_\epsilon^2)$. Then we set $p_g = \underline{p} \vee (\overline{p} \wedge (1-e^{\widehat{\beta_g}}))$, where $[\underline{p},\overline{p}]$ defines a pre-specified safe interval (we use $\underline{p}=0.4$ and $\overline{p}=0.9$).

\section{Proofs}\label{app:proofs}

\begin{proof}[\textbf{Proof of \cref{prop:debiasing}}]

The fact that $Z\geq a_\omega$ almost surely follows directly from the definition 
of $Z$ and the non-decreasing 
property of the sequence.
Unbiasedness follows by the monotone convergence theorem:
\begin{align}
\E[Z] &= a_{\omega} + \E\left[\sum_{n: p(n)>0} \I\{N=n\} \frac{(a_{\omega+n+1} - a_{\omega+n})}{p(n)}\right] \\
&= a_{\omega} + \sum_{n: p(n)>0} p(n) \frac{(a_{\omega+n+1} - a_{\omega+n})}{p(n)} \\
&= a_{\omega} + \sum_{n: p(n)>0} (a_{\omega+n+1} - a_{\omega+n}) \\
&= a_{\omega} + \sum_{n\geq 0} (a_{\omega+n+1} - a_{\omega+n}) && (\text{\cref{eq:debias_estimator_assu_domination}})\\
&= \lim_{n\to\infty}a_{n} \\
&=\alpha.
\end{align}
Let $Y:=Z-a_{\omega}$, so that
\[
Y = \frac{(a_{\omega+N+1} - a_{\omega+N})}{p(N)} = \sum_{n:p(n)>0} \I\{N=n\} \frac{(a_{\omega+n+1} - a_{\omega+n})}{p(n)}.
\]
Then
\begin{align}
Y^2 &= \left(
\sum_{n:p(n)>0} \I\{N=n\} \frac{(a_{\omega+n+1} - a_{\omega+n})}{p(n)}
\right)^2 \\
&=
\sum_{n: p(n)>0} \I\{N=n\} \frac{(a_{\omega+n+1} - a_{\omega+n})^2}{p(n)^2}.
\end{align}
Applying expectations on both sides and using the monotone convergence theorem,
\begin{align}
\E[Y^2] &= \E\left[\sum_{n: p(n)>0} \I\{N=n\} \frac{(a_{\omega+n+1} - a_{\omega+n})^2}{p(n)^2}\right] \\
&= \sum_{n: p(n)>0} \P(N=n) \frac{(a_{\omega+n+1} - a_{\omega+n})^2}{p(n)^2} \\
&= \sum_{n: p(n)>0} \frac{(a_{\omega+n+1} - a_{\omega+n})^2}{p(n)}.
\end{align}
 Finally, the expression for $\var(Z)$ follows from
\begin{align}
\var(Z) &= \var(Y) \\
&= \E[Y^2] - \E[Y]^2 \\
&= \sum_{n: p(n)>0} \frac{(a_{\omega+n+1} - a_{\omega+n})^2}{p(n)} - (\alpha-a_\omega)^2.
\end{align}
If the variance is finite, then the series above is finite; the fact that $a_\omega \uparrow \alpha$ yields $\var(Z) \to 0$ as $\omega\to\infty$.
\end{proof}

\begin{proof}[\textbf{Proof of \cref{prop:suff_cond_debiasing}}]
Since $p(n)>0$ for all $n\in\Z_+$, by \cref{prop:debiasing},
\begin{align}
 \var(Z) &= \sum_{n\in\Z_+} \frac{(a_{\omega+n+1} - a_{\omega+n})^2}{p(n)} - (\alpha-a_\omega)^2\\
 &\leq \sum_{n\in\Z_+} \frac{(a_{\omega+n+1} - a_{\omega+n})^2}{p(n)}\\
 &\leq \sum_{n\in\Z_+} \frac{c^2e^{-2(\omega+n)r}}{e^{-n\beta}(1-e^{-\beta})}\\
 &= \frac{c^2e^{-2\omega \rho}}{1-e^{-\beta}} \sum_{n\in\Z_+} e^{-n(2\rho-\beta)}.
\end{align}
The result follows from the geometric series formula.
\end{proof}

\begin{proof}[\textbf{Proof of \cref{prop:doubly_monotone_sequences}}]
	From the monotonicity assumptions, $\aseq{r_n}{k_n}$ converges to some limit $\beta \le \alpha$. Suppose for the purpose of contradiction that  $\beta < \alpha$. 
	
	From the definition of $a_r$ and the assumption $r_n \uparrow \infty$, we can  pick $n_0 \in\Z_+, \epsilon > 0$ such that
	\begin{align}\label{eq:n-selection}
	a_{r_{n_0}} > \beta + \epsilon.
	\end{align} 
	Then, for $j \in\N$, iteratively pick $n_j > n_{j-1}$ such that $k_{n_j} > k_{n_{j-1}}$. This is possible using the assumption $k_n \uparrow \infty$. Then, 
	\begin{align}\label{eq:vertical-limit} 
    \aseq{r_{n_0}}{k_{n_j}} \uparrow a_{r_{n_0}}.
	\end{align}
	Since $n_j > n_0$, and $r_n$ is increasing, $r_{n_j} \ge r_{n_0}$, and hence 
	\begin{align}\label{eq:bound-projection}
	\aseq{r_{n_j}}{k_{n_j}} \ge \aseq{r_{n_0}}{k_{n_j}}.
	\end{align}
	Combining \cref{eq:n-selection,eq:vertical-limit,eq:bound-projection} yields
	\begin{align}\label{eq:combined}
	\beta=\lim_{j\to\infty}\aseq{r_{n_j}}{k_{n_j}} &\ge \lim_{j\to\infty} \aseq{r_{n_0}}{k_{n_j}} = a_{r_{n_0}} > \beta + \epsilon,
	\end{align}
	which is a contradiction. We conclude that $\beta=\alpha$. 
	
\end{proof}

\begin{proof}[\textbf{Proof of \cref{prop:doublymonotonevariance}}]
Since $r_n\uparrow\infty$ and $k_n\uparrow\infty$, \cref{prop:doubly_monotone_sequences} shows that $\aseq{r_n}{k_n}\uparrow\alpha$. Moreover,
\begin{align}
\aseq{r_{n+1}}{k_{n+1}} - \aseq{r_n}{k_n} &\leq \alpha - \aseq{r_n}{k_n} \\
&=(\alpha - a_{r_n}) + (a_{r_n} - \aseq{r_n}{k_n}) \\
&\leq c_1e^{-\rho n} + c_2e^{-\kappa \lceil(\rho/\kappa) n\rceil} \\
&\leq c_1e^{-\rho n} + c_2e^{-\kappa (\rho/\kappa) n} \\
&\leq 2\max\{c_1,c_2\}e^{-\rho n}.
\end{align}
\end{proof}

\begin{lemma}\label{lemma:conservatization}
Let $Q$ be a non-conservative rate matrix. Define
\begin{equation}\label{eq:def_conservatization}
R := 
\begin{pmatrix}
0 & \boldsymbol{0}^T\\
-Q\boldsymbol{1} & Q
\end{pmatrix}.
\end{equation}
Then $R$ is conservative, with $\inf_{i}R_{x,x} = \inf_{i}Q_{x,x}$, and for $n\in\N$,
\begin{equation}\label{eq:conservatization_powers}
R^n = 
\begin{pmatrix}
0 & \boldsymbol{0}^T\\
-Q^n\boldsymbol{1} & Q^n
\end{pmatrix}.
\end{equation}
Moreover, for any $t\geq0$
\begin{equation}\label{eq:conservatization_matexp}
e^{tR} = 
\begin{pmatrix}
0 & \boldsymbol{0}^T\\
\boldsymbol{1}-e^{tQ}\boldsymbol{1} & e^{tQ}
\end{pmatrix}.
\end{equation}
\end{lemma}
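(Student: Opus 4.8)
The plan is to treat all three assertions as block-matrix computations: conservativity follows from inspecting row sums, the power identity \cref{eq:conservatization_powers} follows by induction, and the exponential identity \cref{eq:conservatization_matexp} then follows by substituting the power formula into the defining series. First I would check conservativity. Computing $R\boldsymbol{1}$ blockwise, the adjoined (top) row gives $0 + \boldsymbol{0}^T\boldsymbol{1} = 0$, while each original-state row gives $(-Q\boldsymbol{1}) + Q\boldsymbol{1} = \boldsymbol{0}$; hence $R\boldsymbol{1} = \boldsymbol{0}$. I would also note that $R$ has nonnegative off-diagonal entries: the new column $-Q\boldsymbol{1}$ is entry-wise nonnegative exactly because $Q$ is non-conservative (its row sums are $\le 0$), and all other off-diagonal entries are inherited from $Q$. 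For the diagonal claim, the diagonal of $R$ is $\{0\}\cup\{Q_{x,x}\}_x$; since every rate matrix has $Q_{x,x}\le 0$, adjoining the single entry $0$ cannot decrease the infimum, so $\inf_i R_{x,x} = \inf_i Q_{x,x}$.

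Next I would prove \cref{eq:conservatization_powers} by induction on $n$, with the base case $n=1$ being the definition of $R$. For the inductive step I would expand $R^{n+1} = R\,R^n$ in block form. The zero top row of $R$ annihilates the top $(0,0)$ and $(0,\cdot)$ blocks; the bottom-right block multiplies as $Q\,Q^n = Q^{n+1}$; and the bottom-left column evaluates to $(-Q\boldsymbol{1})\cdot 0 + Q\,(-Q^n\boldsymbol{1}) = -Q^{n+1}\boldsymbol{1}$, where the cross term $(-Q\boldsymbol{1})\boldsymbol{0}^T$ is the zero matrix. This is exactly the claimed form with $n$ replaced by $n+1$.

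Finally, for \cref{eq:conservatization_matexp} I would substitute the power formula into $e^{tR} = \sum_{n\ge0} t^n R^n / n!$ and collect blocks. The top row is inherited entirely from the $n=0$ identity term, since every $R^n$ with $n\ge1$ has a vanishing top row, reflecting that the adjoined state is absorbing. The bottom-right block is $I + \sum_{n\ge1} t^nQ^n/n! = e^{tQ}$, and the bottom-left column is $-\sum_{n\ge1} t^nQ^n\boldsymbol{1}/n! = -(e^{tQ}-I)\boldsymbol{1} = \boldsymbol{1} - e^{tQ}\boldsymbol{1}$, yielding the stated matrix.

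The step I expect to require the most care is the justification of these manipulations in the countably infinite setting, where $R$ and $Q$ are infinite matrices. Specifically I must ensure $Q^n\boldsymbol{1}$ is well-defined (the relevant row sums converge absolutely) and that pulling $\boldsymbol{1}$ and the block structure through the infinite sum over $n$ is legitimate; both are controlled by the absolute, entry-wise convergence of the matrix-exponential series. In every application the lemma is invoked on the truncated rate matrices $Q_r$, which act on finitely many states, so the argument there reduces to routine finite-dimensional linear algebra.
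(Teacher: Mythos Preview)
Your proposal is correct and follows essentially the same route as the paper: induction on $n$ for \cref{eq:conservatization_powers} via a block multiplication, then substitution into the exponential series for \cref{eq:conservatization_matexp}. You are in fact slightly more careful than the paper in explicitly verifying $R\boldsymbol{1}=\boldsymbol{0}$ and the nonnegativity of the new column $-Q\boldsymbol{1}$, and in flagging the (harmless, since the lemma is only applied to the finite $Q_r$) convergence issues in the countably infinite case; note also that both your argument and the paper's derivation correctly produce a $1$ in the top-left entry of $e^{tR}$, so the $0$ appearing there in the displayed statement is a typo.
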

\begin{proof}
The first property $\inf_{i}R_{x,x} = \inf_{i}Q_{x,x}$ is clear from the definition of R in \cref{eq:def_conservatization}, since only a $0$ is added to the diagonal, whose elements are always non-positive. For the second property, we proceed by induction. The case $n=2$ gives
\[
R^2 = 
\begin{pmatrix}
0 & \boldsymbol{0}^T\\
-Q\boldsymbol{1} & Q
\end{pmatrix}
\begin{pmatrix}
0 & \boldsymbol{0}^T\\
-Q\boldsymbol{1} & Q
\end{pmatrix}
=
\begin{pmatrix}
0 & \boldsymbol{0}^T\\
-Q^2\boldsymbol{1} & Q^2
\end{pmatrix}.
\]
Assuming the property holds for some $n$, then
\[
R^{n+1} = R^n R =
\begin{pmatrix}
0 & \boldsymbol{0}^T\\
-Q^n\boldsymbol{1} & Q^n
\end{pmatrix}
\begin{pmatrix}
0 & \boldsymbol{0}^T\\
-Q\boldsymbol{1} & Q
\end{pmatrix}
=
\begin{pmatrix}
0 & \boldsymbol{0}^T\\
-Q^{n+1}\boldsymbol{1} & Q^{n+1}
\end{pmatrix}.
\]
We conclude the result holds for all $n\in\N$. Next, note that
\begin{align}
e^{tR} &= I + \sum_{n = 1}^\infty \frac{t^n}{n!}R^n \\
&= I +
\begin{pmatrix}
0 & \boldsymbol{0}^T\\
-\sum_{n = 1}^\infty \frac{t^n}{n!}Q^n\boldsymbol{1} & \sum_{n = 1}^\infty \frac{t^n}{n!}Q^n
\end{pmatrix} \\
&=
\begin{pmatrix}
1 & \boldsymbol{0}^T \\
- \left[\sum_{n = 1}^\infty \frac{t^n}{n!}Q^n\right]\boldsymbol{1} & e^{tQ}
\end{pmatrix} \\
&=
\begin{pmatrix}
1 & \boldsymbol{0}^T \\
- \left[e^{tQ} - I\right]\boldsymbol{1} & e^{tQ}
\end{pmatrix} \\
&=
\begin{pmatrix}
1 & \boldsymbol{0}^T\\
\boldsymbol{1}-e^{tQ}\boldsymbol{1} & e^{tQ}
\end{pmatrix}.
\end{align}
\end{proof}

\begin{proposition}\label{prop:matexp_nonconservative_rate_mat}
For $r\in\Z_+$, let $Q_r$ be the non-conservative rate matrix given by \cref{prop:monotone_trunc}, with associated truncated state-space $\statespace_{r}$. Then for all $t\geq0$ and $x,y\in\statespace_{r}$,
\[
[e^{tQ_r}]_{x,y} = \P_x(X(t)=y\text{ and }\forall s \in [0,t]: X(s)\in\statespace_{r}),
\]
where $\{X(t)\}_{t\geq0}$ is the CTMC on $\statespace$ driven by the full non-explosive rate matrix $Q$ and initialized at $X(0) = x$.
\end{proposition}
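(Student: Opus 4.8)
The plan is to realize $e^{tQ_r}$ as a genuine transition function of a finite, conservative CTMC obtained by routing every out-of-$\statespace_r$ transition into a single absorbing cemetery state, and then to couple that auxiliary chain with $X$ so that the two agree until $X$ first exits the finite truncation $\statespace_r$.

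First I would apply \cref{lemma:conservatization} to the truncated matrix $Q_r$. Writing $R_r$ for the resulting conservatized matrix, the lemma gives that $R_r$ is a conservative rate matrix on the enlarged finite state space $\statespace_r\cup\{\partial\}$ (with $\partial$ the adjoined cemetery state), and—crucially—that the bottom-right block of $e^{tR_r}$ is exactly $e^{tQ_r}$, so that $[e^{tR_r}]_{x,y}=[e^{tQ_r}]_{x,y}$ for all $x,y\in\statespace_r$. Because $R_r$ is a finite conservative rate matrix (non-negative off-diagonal entries and zero row sums), $e^{tR_r}$ is a stochastic matrix: non-negativity follows by writing $e^{tR_r}=e^{-ct}e^{t(R_r+cI)}$ for $c$ large enough that $R_r+cI\ge 0$ entrywise, and the rows sum to one since $R_r\boldsymbol{1}=0$ forces $e^{tR_r}\boldsymbol{1}=\boldsymbol{1}$. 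Hence $e^{tR_r}$ is the transition function of a CTMC $\{Y(s)\}_{s\ge0}$ on $\statespace_r\cup\{\partial\}$ with generator $R_r$, and $[e^{tQ_r}]_{x,y}=\P_x(Y(t)=y)$ for $x,y\in\statespace_r$.

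Next I would identify the dynamics of $Y$ and couple it with $X$ via the Gillespie construction (\cref{algo:gillespie}). Since $Q$ is conservative, for $x\in\statespace_r$ the diagonal of $Q_r$ is unchanged, $(Q_r)_{x,x}=Q_{x,x}$, while its within-$\statespace_r$ off-diagonal entries coincide with those of $Q$; consequently the total exit rate from $x$ in $R_r$ equals $-Q_{x,x}$, the jump rate towards each $y\in\statespace_r$ equals $Q_{x,y}$, and the residual rate $\sum_{y\notin\statespace_r}Q_{x,y}$ is routed to $\partial$. Thus, started from a common state $x\in\statespace_r$ and driven by the same holding-time clocks and the same categorical target draws, $X$ and $Y$ are identical up to the first time $X$ jumps outside $\statespace_r$; at that jump $Y$ enters $\partial$ and stays there, $\partial$ being absorbing. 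It follows that for $y\in\statespace_r$ the event $\{Y(t)=y\}$ coincides with $\{X(t)=y\}\cap\{\forall s\in[0,t]:X(s)\in\statespace_r\}$, and taking probabilities yields the claim after combining with the block identity of the previous step.

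The main obstacle is making this coupling rigorous: one must verify that the jump rates and next-state laws encoded in $R_r$ match those of $X$ inside $\statespace_r$ with only the exits redirected, and that the shared-randomness construction keeps the two processes equal up to the exit time $\tau_r:=\inf\{s\ge0:X(s)\notin\statespace_r\}$. Finiteness of $\statespace_r$ guarantees that neither $Y$ nor $X$ (restricted to $\statespace_r$) can explode before $\tau_r$, so the coupling is valid on $[0,\tau_r]$; non-explosivity of $Q$ is what ensures the right-hand side is well defined for all $t$ by guaranteeing $X(t)$ exists. After $\tau_r$ the two processes may diverge, but this is irrelevant, as $Y$ is already absorbed and the event in question concerns only the behaviour of $X$ up to $\tau_r$.
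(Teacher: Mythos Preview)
Your proposal is correct and follows essentially the same route as the paper: apply \cref{lemma:conservatization} to realize $e^{tQ_r}$ as the $\statespace_r\times\statespace_r$ block of the transition function of a conservative chain with an absorbing cemetery, then couple that chain with $X$ via the shared-randomness Gillespie construction so they agree until the first exit of $X$ from $\statespace_r$. The paper's argument is terser in places where you add justification (stochasticity of $e^{tR_r}$, verification that $(Q_r)_{x,x}=Q_{x,x}$ and hence that the holding rates and within-$\statespace_r$ jump laws match), but the structure and key steps are the same.
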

\begin{proof}
Let $\widetilde{\statespace_r} = \statespace_r\cup\{\Delta\}$, and consider expanding $Q_r$ to $\widetilde{Q}_r$ as in \cref{lemma:conservatization}, so that $\Delta$ becomes the absorbing state. Let $\{\widetilde{X}(t)\}_{t\geq0}$ be a CTMC defined on $\widetilde{\statespace_r}$ driven by $\widetilde{Q}_r$ and initialized at $\widetilde{X}(0)=x$. By \cref{lemma:conservatization}, we obtain that for all $x,y\in\statespace_r$
\[
[e^{tQ_r}]_{x,y} = [e^{t\widetilde{Q}_r}]_{x,y} = \P_x(\widetilde{X}(t)=y)
\]
Now, we can couple $(X(t),\widetilde{X}(t))$ by sharing the random variables used in their simulation (see \cref{algo:gillespie}). Denote
\[
\tau_\Delta := \inf\{t\geq 0: \widetilde{X}(t)=\Delta\}
\]
the first hitting time of $\Delta$. Then
\[
\forall t\in[0,\tau_\Delta): X(t)=\widetilde{X}(t)\in\statespace_{r}, \hspace{1em} \text{ and } \hspace{1em} \forall t\geq \tau_\Delta: \widetilde{X}(t)=\Delta.
\]
Therefore,
\begin{align}
[e^{tQ_r}]_{x,y} &= \P_x(\widetilde{X}(t)=y) \\
&= \P_x(\widetilde{X}(t)=y, \tau_\Delta>t) \\
&= \P_x(X(t)=y, \tau_\Delta>t) \\
&= \P_x(X(t)=y\text{ and }\forall s \in [0,t]: X(s)\in\statespace_{r}).
\end{align}
\end{proof}

\begin{proof}[\textbf{Proof of \cref{prop:doubly_monotone_global_unif}}]
To show double monotonicity, we must prove that for all $r,s \in \Z_+$, $t\geq 0$, and $x,y\in\statespace_{r}$,
\begin{equation}
(\mseq{r}{s}(t))_{x,y} \leq (\mseq{r+1}{s}(t))_{x,y}.
\end{equation}
We first show by induction that for all $r\in\Z_+$, $x,y\in\statespace_{r}$, and $n\in\N$,
\[
(P_{r+1}^n)_{x,y} \geq (P_{r}^n)_{x,y}.
\]
For $n=1$,
\[
(P_{r+1})_{x,y} = \delta_{x,y} - \bar{q}^{-1}(Q_{r+1})_{x,y} = \delta_{x,y} - \bar{q}^{-1}(Q_{r})_{x,y} = (P_{r})_{x,y}.
\]
Now suppose that the result holds for some $n\in\N$. Then
\begin{align}
(P_{r+1}^{n+1})_{x,y} &= \sum_{x\in\statespace_{r+1}} (P_{r+1}^n)_{x,z}(P_{r+1})_{z,y} \\
&\geq \sum_{z\in\statespace_{r}} (P_{r+1}^n)_{x,z}(P_{r+1})_{z,y} \\
&\phantom{=}(\text{Discard non-negative summands})\\
&\geq \sum_{z\in\statespace_{r}} (P_{r}^n)_{x,z}(P_{r})_{z,y} \\
&\phantom{=}(\text{Induction and case }n=1) \\
&=(P_{r}^{n+1})_{x,y}.
\end{align}
Thus, the result holds for all $n\in\N$. Finally,
\begin{align}
(\mseq{r+1}{s}(t))_{x,y} &= \sum_{n=0}^s e^{\bar{q} t} \frac{(-\bar{q} t)^n}{n!} [P_{r+1}^n]_{x,y} \geq \sum_{n=0}^s e^{\bar{q} t} \frac{(-\bar{q} t)^n}{n!} [P_{r}^n]_{x,y} = (\mseq{r}{s}(t))_{x,y}.
\end{align}
\end{proof}

\begin{proof}[\textbf{Proof of \cref{prop:unif_ell_infty_error}}]
Note that $P=I-Q/\bar{q}$ is sub-stochastic. Moreover, for all $n\in\Z_+$, $P^n$ is sub-stochastic too. Then,
\begin{align}
0\leq \sum_{y\in\statespace}([M(t)]_{x,y} - [M^{(s)}(t)]_{x,y}) &= \sum_{y\in\statespace}\sum_{n=s+1}^\infty e^{\bar{q} t} \frac{(-\bar{q} t)^n}{n!} [P^n]_{x,y} \\
&= \sum_{n=s+1}^\infty e^{\bar{q} t} \frac{(-\bar{q} t)^n}{n!} \sum_{y\in\statespace}[P^n]_{x,y}\\
&\leq \sum_{n=s+1}^\infty e^{\bar{q} t} \frac{(-\bar{q} t)^n}{n!} 1 \\
&= 1-F(s;\lambda).
\end{align}
The interchange of sums is justified by Tonelli's theorem since the summands are non-negative. Taking the supremum over $i$ gives the required result.
\end{proof}

\begin{lemma}\label{lemma:S_delta_construction}
Let $N_\delta$ be the number of jumps of the CTMC in $(0,\delta]$. Then, for all $x,y\in\statespace$,
\begin{equation}
\P_x(X(\delta)=y,N_\delta\leq 1) = (S(\delta))_{x,y}
\end{equation}
for $S(\delta)$ defined in \cref{eq:def_S_delta}.
\end{lemma}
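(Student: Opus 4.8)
The plan is to condition on the number of jumps and then compute directly. First I would split
\[
\P_x(X(\delta)=y, N_\delta\leq 1) = \P_x(X(\delta)=y, N_\delta = 0) + \P_x(X(\delta)=y, N_\delta = 1),
\]
and recall from the CTMC construction in \cref{algo:gillespie} that the holding time at any state $z$ is exponential with rate $-q_{z,z}$, while conditional on a jump out of $x$ the process moves to a state $y\neq x$ with probability $q_{x,y}/(-q_{x,x})$. These two facts supply everything needed to evaluate both terms.

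For the zero-jump term, the absence of any jump in $(0,\delta]$ forces $X(\delta)=x$, so this contribution is nonzero only when $y=x$, in which case it equals the survival probability $e^{q_{x,x}\delta}$ of the initial holding time. This reproduces the $x=y$ entry of \cref{eq:def_S_delta}. I would then note that, since the embedded jump chain never returns to its current state in a single step, the event $\{X(\delta)=x\}$ cannot occur with exactly one jump; hence the diagonal receives no contribution from the $N_\delta=1$ term, and the two cases do not interfere.

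For the one-jump term with $y\neq x$, I would integrate over the single jump time $s\in(0,\delta]$: the jump occurs at $s$ with density $(-q_{x,x})e^{q_{x,x}s}$, the process moves to $y$ with probability $q_{x,y}/(-q_{x,x})$, and it then survives at $y$ for the remaining interval with probability $e^{q_{y,y}(\delta-s)}$. Multiplying these and simplifying the rate factor yields
\[
\P_x(X(\delta)=y, N_\delta=1) = q_{x,y}\, e^{q_{y,y}\delta}\int_0^\delta e^{(q_{x,x}-q_{y,y})s}\,\mathrm{d}s.
\]
Evaluating this elementary integral separately when $q_{x,x}=q_{y,y}$ (giving the factor $\delta$) and when $q_{x,x}\neq q_{y,y}$ (giving the difference quotient of exponentials) reproduces exactly the second and third cases of \cref{eq:def_S_delta}, after multiplying numerator and denominator by $-1$ to match the stated sign convention.

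The computation is routine, and I expect no substantive obstacle beyond bookkeeping of the degenerate rate cases. In particular, when $x$ is absorbing ($q_{x,x}=0$) the jump density vanishes and $q_{x,y}=0$ for every $y\neq x$, so all off-diagonal entries are zero and the diagonal equals $e^{0}=1$, consistent with the formula. I would also check that the $q_{x,x}\neq q_{y,y}$ expression converges to the $q_{x,x}=q_{y,y}$ expression as $q_{y,y}\to q_{x,x}$, confirming that the piecewise definition is continuous across the case boundary. Since all summands and integrands are non-negative, no delicate interchange of limits is required.
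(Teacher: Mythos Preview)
Your proposal is correct and follows essentially the same approach as the paper's proof: both split by the number of jumps, compute the zero-jump contribution as the holding-time survival probability, and integrate over the single jump time for the off-diagonal case to obtain the same integral $q_{x,y}e^{q_{y,y}\delta}\int_0^\delta e^{(q_{x,x}-q_{y,y})s}\,\mathrm{d}s$. You are slightly more explicit than the paper in noting why the diagonal receives no contribution from $N_\delta=1$ and in checking the absorbing and limiting cases, but the core argument is identical.
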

\begin{proof}
Case $y=x$ is given by
\[
\P_x(N_\delta=0)=\P_x(\text{first jump occurs after }\delta)=e^{q_{x,x}\delta}.
\]
For $y\neq x$,
\begin{align}
\P_x(X(\delta)=y,N_\delta\leq 1) &= \P_x(X(\delta)=y,N_\delta = 1) \\
&=\int_0^\delta \P_x(X(\delta)=y,T_1\in \ud u,T_2>\delta) \\
&= \int_0^\delta (-q_{x,x})e^{q_{x,x}u}\frac{q_{x,y}}{(-q_{x,x})}e^{q_{y,y}(\delta-u)} \ud u \\
&= q_{x,y}e^{q_{y,y}\delta} \int_0^\delta e^{(q_{x,x}-q_{y,y})u} \ud u.
\end{align}
If $q_{x,x}=q_{y,y}$ we get $\P_x(X(\delta)=y,N_\delta\leq 1)=q_{x,y}\delta e^{q_{x,x}\delta}$. Otherwise
\begin{align}
\P_x(X(\delta)=y,N_\delta\leq 1)
&= q_{x,y}e^{q_{y,y}\delta}\frac{1}{q_{x,x}-q_{y,y}} \left. e^{(q_{x,x}-q_{y,y})u} \right|_0^\delta  \\
&= q_{x,y}e^{q_{y,y}\delta}\frac{e^{(q_{x,x}-q_{y,y})\delta}-1}{q_{x,x}-q_{y,y}}   \\
&= q_{x,y}\frac{e^{q_{x,x}\delta}-e^{q_{y,y}\delta}}{q_{x,x}-q_{y,y}}.
\end{align}
\end{proof}

\begin{lemma}\label{lemma:CK_S_delta}
Fix $k\in\Z_+$ and $\delta=t2^{-k}$. Let
\[
N_{l,k} := \text{number of jumps in }(l\delta,(l+1)\delta], \hspace{1em} l\in\{0,1,\dots,2^{k-1}\}.
\]
Then for $x,y \in\statespace$ and $L\in\N$,
\begin{equation}\label{eq:skeletoid_Chapman_Kolmogorov}
\P_x\left(\{X(L\delta)=y\} \cap \bigcap_{l=0}^{L-1} \{N_{l,k}\leq 1\}\right) = [S(\delta)^{L}]_{x,y}.
\end{equation}
\end{lemma}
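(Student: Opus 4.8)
The plan is to induct on $L$, using \cref{lemma:S_delta_construction} as the base case and the Markov property at the deterministic time $L\delta$ to execute the inductive step. For the base case $L=1$, the left-hand event is $\{X(\delta)=y\}\cap\{N_{0,k}\leq 1\}$, and since $N_{0,k}$ counts the jumps in $(0,\delta]$ it coincides with $N_\delta$; the claim then reduces exactly to \cref{lemma:S_delta_construction}, giving $\P_x(X(\delta)=y,N_\delta\leq 1)=(S(\delta))_{x,y}=[S(\delta)^1]_{x,y}$.

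For the inductive step, I would assume the identity at level $L$ and decompose the level-$(L+1)$ event according to the intermediate state $z=X(L\delta)$. The key observation is that the intersection $\bigcap_{l=0}^{L}\{N_{l,k}\leq 1\}$ separates into ``past'' constraints $\bigcap_{l=0}^{L-1}\{N_{l,k}\leq 1\}$, which together with $\{X(L\delta)=z\}$ are $\sigma(X(s):s\leq L\delta)$-measurable, and a single ``future'' constraint $\{N_{L,k}\leq 1\}$ on the interval $(L\delta,(L+1)\delta]$, which together with $\{X((L+1)\delta)=y\}$ depends only on the post-$L\delta$ trajectory. Applying the Markov property at time $L\delta$, conditionally on $X(L\delta)=z$ the future is an independent fresh copy of the chain started at $z$; combined with time-homogeneity, the conditional probability of $\{X((L+1)\delta)=y\}\cap\{N_{L,k}\leq 1\}$ equals $\P_z(X(\delta)=y,N_\delta\leq 1)=(S(\delta))_{z,y}$ by \cref{lemma:S_delta_construction}.

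Summing over the partition $\{X(L\delta)=z\}_{z\in\statespace}$, the probability factorizes and the inductive hypothesis identifies the level-$L$ factor as an entry of $S(\delta)^L$:
\[
\P_x\!\left(\{X((L+1)\delta)=y\}\cap\bigcap_{l=0}^{L}\{N_{l,k}\leq 1\}\right)=\sum_{z\in\statespace}[S(\delta)^{L}]_{x,z}\,(S(\delta))_{z,y}=[S(\delta)^{L+1}]_{x,y},
\]
where the last equality is simply matrix multiplication. This closes the induction.

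The main obstacle is the clean invocation of the Markov property: one must verify carefully that the past constraints and the intermediate state are measurable with respect to the trajectory up to $L\delta$, that the remaining constraint and endpoint condition are measurable with respect to the post-$L\delta$ trajectory, and that time-homogeneity lets us re-express the future conditional probability via $N_\delta$ over $(0,\delta]$. Non-explosivity is what guarantees that each $N_{l,k}$ is almost surely finite and that $X(t)$ is well-defined for all the required times, so these events make sense. Interchanging the sum over $z$ with the probability is immediate because the summands are non-negative and $\{X(L\delta)=z\}_{z\in\statespace}$ partitions the sample space.
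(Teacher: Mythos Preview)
Your proposal is correct and follows essentially the same route as the paper: induction on $L$, with \cref{lemma:S_delta_construction} supplying the base case and the Markov property at the deterministic time $L\delta$ driving the step via a decomposition over the intermediate state $z=X(L\delta)$. The only substantive item the paper adds that you do not mention is a one-line reduction for the case where $Q$ is non-conservative (as occurs for the truncated matrices $Q_r$): it appends an absorbing state $\Delta$ as in \cref{eq:def_conservatization} so that the process is a genuine CTMC on the enlarged space, and observes that transitions between $x,y\in\statespace$ cannot visit $\Delta$, so the identity on $\statespace$ is unaffected. If you intend the lemma only for the full non-explosive chain this is moot, but since the result is later applied to the truncated $S_r(\delta)$ you may want to include that remark.
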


\begin{proof}
If $Q$ is non-conservative, append additional entries corresponding to an absorbing state $\Delta$ as in \cref{eq:def_conservatization}. Since $\Delta$ is absorbing, all transitions between $x,y\in\statespace$ must necessarily avoid it. Let $\mathcal{F}_t := \sigma(X(s), 0\leq s \leq t)$ be the natural filtration of the process. We can derive the following recursion
\begin{align}
&\P_x\left(\{X((L+1)\delta)=y\} \cap \bigcap_{l=0}^{L} \{N_{l,k}\leq 1\}\right) \\
&=\sum_{z\in\statespace} \E_x\left[\I\{X(L\delta)=z\} \prod_{l=0}^{L-1} \I\{N_{l,k}\leq 1\} \I\{X((L+1)\delta)=y\}  \I\{N_{L,k}\leq 1\}\right] \\
&=\sum_{z\in\statespace} \E_x\left[\I\{X(L\delta)=z\} \prod_{l=0}^{L-1} \I\{N_{l,k}\leq 1\} \E_x\left[\I\{X((L+1)\delta)=y\}  \I\{N_{L,k}\leq 1\}\middle|\mathcal{F}_{L\delta}\right]\right] \\
&=\sum_{z\in\statespace} \E_x\left[\I\{X(L\delta)=z\} \prod_{l=0}^{L-1} \I\{N_{l,k}\leq 1\} \E_z\left[\I\{X(\delta)=y\}  \I\{N_{0,k}\leq 1\}\right]\right] \\
&\phantom{=}(\text{Markov property}) \\
&=\sum_{z\in\statespace} \P_x\left(\{X(L\delta)=z\} \cap \bigcap_{l=0}^{L-1} \{N_{l,k}\leq 1\}\right) \P_z\left(\{X(\delta)=y\} \cap \{N_{0,k}\leq 1\}\right)\\
&\phantom{=}(\text{Independence}) \\
&=\sum_{z\in\statespace} \P_x\left(\{X(L\delta)=z\} \cap \bigcap_{l=0}^{L-1} \{N_{l,k}\leq 1\}\right) [S(\delta)]_{z,y} \\
&\phantom{=}(\text{\cref{lemma:S_delta_construction}})
\end{align}
Now proceed to prove \cref{eq:skeletoid_Chapman_Kolmogorov} by induction. Using $L=1$ in the recursion above yields
\begin{align}
\P_x(X(2\delta)=y, N_{0,k}\leq 1, N_{1,k}\leq 1) &= \sum_{z\in\statespace} \P_x(X(\delta)=z, N_{0,k}\leq 1)S(\delta)_{z,y} \\
&=\sum_{z\in\statespace} S(\delta)_{x,z} S(\delta)_{z,y} \\
&\phantom{=}(\text{Definition of }S(\delta)) \\
&= [S(\delta)^2]_{x,y}.
\end{align}
Assume \cref{eq:skeletoid_Chapman_Kolmogorov} holds for all natural numbers up to $L$. Using $L+1$ in the recursion above yields
\begin{align}
&\P_x\left(\{X((L+1)\delta)=y\} \cap \bigcap_{l=0}^{L} \{N_{l,k}\leq 1\}\right) \\
&=\sum_{z\in\statespace} \P_x\left(\{X(L\delta)=z\} \cap \bigcap_{l=0}^{L-1} \{N_{l,k}\leq 1\}\right) S(\delta)_{z,y}\\
&=\sum_{z\in\statespace} [S(\delta)^L]_{x,z} [S(\delta)]_{z,y} \\
&\phantom{=}(\text{Induction}) \\
&=[S(\delta)^{L+1}]_{x,y}.
\end{align}
\end{proof}

\begin{lemma}\label{lemma:almost_sure_convergence}
Let
\begin{equation}\label{eq:def_Ak}
A_k := \bigcap_{l=0}^{2^k-1} \{N_{l,k} \le 1\}
\end{equation}
be the event which prescribes that at most one jump occurs in each bin. 
If the process is non-explosive, then for all $i\in \statespace$
\begin{equation}\label{eq:almost_sure_convergence}
\P_x\left(\bigcup_{k\in\Z_+} A_k \right) = 1.
\end{equation}
\end{lemma}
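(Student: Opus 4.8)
The plan is to prove the statement by a pathwise argument: I will show that $\P_x$-almost every realization of the chain eventually falls into $A_k$ once the mesh is fine enough, so that the outcome belongs to $\bigcup_k A_k$. A useful preliminary observation is that the events are nested, $A_k \subseteq A_{k+1}$: writing $\delta_k := t2^{-k}$, each level-$k$ bin $(l\delta_k,(l+1)\delta_k]$ is exactly the union of the two level-$(k+1)$ bins $(2l\delta_{k+1},(2l+1)\delta_{k+1}]$ and $((2l+1)\delta_{k+1},(2l+2)\delta_{k+1}]$, so a bin containing at most one jump splits into two halves each containing at most one jump. Thus $\bigcup_k A_k = \lim_k A_k$, and it suffices to exhibit, for $\P_x$-a.e.\ $\omega$, a single $k$ with $\omega \in A_k$.

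First I would record the two probabilistic inputs. By non-explosivity, the process makes only finitely many jumps in the finite window $(0,t]$, so $N(0,t]<\infty$ almost surely; let $0<T_1<T_2<\cdots$ denote the jump times. Second, each holding time is an $\mathrm{Exp}(-q_{x,x})$ variable and is therefore a.s.\ strictly positive (absorbing states simply contribute no further jumps), so the jump times are a.s.\ distinct. On the full-probability event where both facts hold, define the minimal inter-jump gap
\[
g := \min\{T_{i+1}-T_i : 1\le i\le N(0,t]-1\},
\]
with the convention $g=+\infty$ when $N(0,t]\le 1$; then $g>0$, and any two distinct jump times are separated by at least $g$, since their separation is a sum of consecutive gaps each at least $g$.

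The key deterministic step is a width estimate. For any $k$ with $\delta_k<g$, no bin $(l\delta_k,(l+1)\delta_k]$ can contain two distinct jump times: any two points of such a bin lie at distance strictly less than $\delta_k<g$, while any two distinct jumps are at distance at least $g$. Hence $N_{l,k}\le 1$ for every $l\in\{0,\dots,2^k-1\}$, i.e.\ $\omega\in A_k$. Choosing $k$ large enough that $t2^{-k}<g(\omega)$ therefore places $\omega$ in some $A_k$, so $\omega\in\bigcup_k A_k$. Since this holds for $\P_x$-a.e.\ $\omega$, we conclude $\P_x\!\left(\bigcup_{k\in\Z_+}A_k\right)=1$.

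The only genuine subtlety --- and the step I would treat most carefully --- is the appeal to the two almost-sure facts: that non-explosivity yields finitely many jumps in $(0,t]$ (so that a strictly positive minimal gap exists) and that the holding times are a.s.\ positive (so that the jump times are distinct). Both are standard consequences of the jump-chain/holding-time construction of the CTMC in \cref{algo:gillespie}, and I would invoke them directly rather than re-derive them; everything else is elementary.
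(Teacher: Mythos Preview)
Your proof is correct and follows essentially the same path as the paper's: both arguments use non-explosivity to ensure finitely many jumps in $(0,t]$, extract a strictly positive minimal inter-jump gap, and observe that once the dyadic mesh width $t2^{-k}$ drops below that gap, $\omega\in A_k$. The only cosmetic difference is that the paper packages the argument via the sets $B_n=\{T\le n\}$ and continuity from below, whereas you work directly on the single full-probability event $\{N(0,t]<\infty,\ \text{holding times}>0\}$; your version is arguably a bit cleaner in its explicit handling of the degenerate cases $N(0,t]\le 1$ via the convention $g=+\infty$.
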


\begin{proof}
Let $T$ denote the total number of jump events. For every $n\in\Z_+$, let $B_n = \{T\leq n\}$.

We first show that $B_n\subset \cup_k A_k$. Suppose $\omega \in B_n$, and let $\epsilon_\omega$ denote the smallest inter-arrival time for the jump events in $\omega$. Since $\omega \in B_n$, the number of jump events is finite, therefore $\epsilon_\omega > 0$. Hence there exists a $k_\omega$, namely
\[
t2^{-k_\omega} < \epsilon_\omega \implies k_\omega = \left\lceil\log_2\left(\frac{t}{\epsilon_\omega}\right) \right\rceil,
\]
such that for $k > k_\omega$, $\omega \in A_k$ (to see why, note that the smallest inter-arrival time in $\omega$ is larger than the mesh size in $A_k$, hence there can be at most one event in each block of $A_k$). It follows that $\omega \in \cup_kA_k$.

Now, by non-explosivity, $\P_x(\cup B_n) = 1$. Additionally, $\{B_n\}_{n\in\Z_+}$ is an increasing family of sets. Hence,
\begin{align}
\P_x\left(\cup_k A_k \right) &= \P_x\left(\left[\cup_k A_k\right] \cap \left[\cup_{n} B_n\right] \right) \\
&= \lim_{n\to\infty} \P_x([\cup_k A_k] \cap B_n) && (\text{continuity from below})\\
&= \lim_{n\to\infty} \P_x(B_n) = \P_x(\cup B_n) = 1.
\end{align}
\end{proof}

\begin{proof}[\textbf{Proof of \cref{prop:Skeletoid_monotonicity}}]

Note that $\{A_k\}_{k\in\Z_+}$---as defined in \cref{lemma:almost_sure_convergence}---is an increasing sequence of sets because the $k$-th grid is strictly contained in the $(k+1)$-th grid, so that
\begin{align}
(tl2^{-k},t(2l+1)2^{-(k+1)}] \cup (t(2l+1)2^{-(k+1)},t(l+1)2^{-k}] 
= (tl2^{-k},t(l+1)2^{-k}].
\end{align}
Applying \cref{lemma:CK_S_delta} with $L=2^k$ yields
\begin{equation}\label{eq:squaring_S_delta}
\P_x(X(t) = j, A_k) = [S(\delta)^{2^k}]_{x,y}.
\end{equation}
Since $A_k$ is increasing, $\{X(t)=y\}\cap A_k$ is also increasing. Therefore, by continuity from below
\begin{align}
\lim_{k\to\infty} [S(t2^{-k})^{2^k}]_{x,y} &= \lim_{k\to\infty} \P_x(\{X(t)=y\} \cap A_k) \\
&= \P_x\left(\{X(t)=y\}\cap \bigcup_{k\in\N} A_k\right) \\
&= \P_x(X(t)=y) - \P_x\left(\{X(t)=y\}\cap \left[\bigcup_{k\in\N} A_k\right]^c\right)\\
&= \P_x(X(t)=y) \\
&\phantom{=} \left(\P_x\left(\left[\bigcup_{k\in\N} A_k\right]^c \right) = 0 \text{ by \cref{lemma:almost_sure_convergence}}\right) \\
&= [e^{tQ}]_{x,y}.
\end{align}
\end{proof}

\begin{proof}[\textbf{Proof of \cref{prop:doubly_monotone_skeletoid}}]
To show double monotonicity, we must prove that for all $t\geq0$, $r,s\in\Z_+$, and all $x,y \in \statespace_{r}$,
\begin{equation}\label{eq:skeletoid_increasing_in_r_fixed_s}
(S_r(t2^{-s})^{2^s})_{x,y} \leq (S_{r+1}(t2^{-s})^{2^s})_{x,y}.
\end{equation}
By induction on $s\in\Z_+$. The base case $s=0$ follows from the definition of $Q_r$, since for all $x,y \in \statespace_{r}$
\[
(Q_{r})_{x,y} = Q_{x,y} = (Q_{r+1})_{x,y}.
\]
Thus, for all $t\geq 0$ and $x,y \in \statespace_{r}$
\[
[S_{r}(t)]_{x,y} = [S_{r+1}(t)]_{x,y}.
\]
Now, assume \cref{eq:skeletoid_increasing_in_r_fixed_s} holds for $s\in\Z_+$. Then
\begin{align}
[S_{r+1}(t2^{-(s+1)})^{2^{(s+1)}}]_{x,y} &= [S_{r+1}(t2^{-(s+1)})^{2^s} S_{r+1}(t2^{-(s+1)})^{2^s}]_{x,y} \\
&= \sum_{l\in\statespace_{r+1}} [S_{r+1}(t2^{-(s+1)})^{2^s}]_{x,l} [S_{r+1}(t2^{-(s+1)})^{2^s}]_{l,y} \\
&\geq \sum_{l\in\statespace_{r}} [S_{r+1}(t2^{-(s+1)})^{2^s}]_{x,l} [S_{r+1}(t2^{-(s+1)})^{2^s}]_{l,y} \\
&\phantom{=}(\text{drop non-negative terms for }l\in\statespace_{r+1}\setminus\statespace_{r}) \\
&= \sum_{l\in\statespace_{r}} [S_{r+1}(t'2^{-s})^{2^s}]_{x,l} [S_{r+1}(t'2^{-s})^{2^s}]_{l,y} \\
&\phantom{=}(t' := t/2 ) \\
&\geq \sum_{l\in\statespace_{r}} [S_{r}(t'2^{-s})^{2^s}]_{x,l} [S_{r}(t'2^{-s})^{2^s}]_{l,y} \\
&\phantom{=}(\text{induction}) \\
&=[S_{r}(t2^{-(s+1)})^{2^{(s+1)}}]_{x,y}.
\end{align}
\end{proof}

\begin{proof}[\textbf{Proof of \cref{prop:Skeletoid_err_bound}}]
Note that
\[
[M(t)]_{x,y} - [S(t2^{-k})^{2^k}]_{x,y} = \P_x(X(t)=y) -\P_x(X(t)=y,A_k) = \P_x(X(t)=y,A_k^c),
\]
where the event $A_k$ is defined in \cref{eq:def_Ak}. Summing over $y$,
\begin{align}
\sum_{y\in\statespace} ([M(t)]_{x,y} - [S(t2^{-k})^{2^k}]_{x,y}) = \sum_{y\in\statespace} \P_x(X(t)=y,A_k^c) \leq\P_x(A_k^c)
\end{align}
where the last inequality recognizes the possibility that $Q$ may be non-conservative. Since $\bar{q}>-\infty$, the process can be simulated via uniformization. Let
\[
\tilde N_{l,k} \overset{i.i.d.}{\sim} \text{Poisson}(-\bar{q}t2^{-k}), \hspace{1em} l\in\{0,1,\dots,2^k-1\}
\]
denote the number of uniformized events in each block of the partition. These are independent and identically distributed by definition of uniformization. Also, each of the uniformized events is either a self-transition or a jump. Hence,
\[
\tilde N_{l,k} \le 1 \implies N_{l,k} \le 1.
\]
Then,
\begin{align}
\P_x(A_k^c) &= 1 - \P_x(A_k) \\
&= 1 - \P_x\left(\bigcap_{l=0}^{2^k-1} \{N_{l,k} \le 1\} \right) \\
&\leq 1 - \P_x\left(\bigcap_{l=0}^{2^k-1} \{\tilde N_{l,k} \le 1\} \right) \\
&= 1 - \P_x(\tilde N_{0,k} \le 1)^{2^k} \\
&= 1 - e^{\bar{q}t}\left(1 - \bar{q}t2^{-k} \right)^{2^k}.
\end{align}
Now, use the expansion
\[
e^{-x}\left(1+\frac{x}{n}\right)^n = 1 - \frac{x^2}{2n} + O\left(\frac{x^3}{n^2}\right)
\]
with $x = - \bar{q} t$ and $n = 2^k$ to obtain
\begin{align}
\P_x(A_k^c) &\le 1 - e^{\bar{q}t}\left(1 - \bar{q}t2^{-k} \right)^{2^k} = (\bar{q} t)^22^{-k-1} + O((\bar{q} t)^32^{-2k}).
\end{align}
Since the bound on the right-hand side does not depend on $x$, we conclude that
\[
\|S(t2^{-k})^{2^k} - M(t)\|_\infty \leq \sup_{x\in\statespace} \P_x(A_k^c) \leq (\bar{q} t)^22^{-k-1} + O((\bar{q} t)^32^{-2k}).
\]
\end{proof}

\end{document}